\documentclass[3p,10pt]{elsarticle}

\usepackage{latexsym}
\usepackage{xspace}
\usepackage{amssymb}
\usepackage{amsmath}
\usepackage{stmaryrd}
\usepackage{hyperref}
\usepackage{url}
\usepackage{graphicx}
\input{xy}
\xyoption{all}
\usepackage{calc}
\usepackage{tikz}
\usepackage{flushend}
\usepackage{color}
\usepackage[ruled,vlined,linesnumbered]{algorithm2e}
\usepackage{multirow}
\usepackage{array}

\newtheorem{theorem}{Theorem}[section]
\newtheorem{lemma}[theorem]{Lemma}
\newtheorem{proposition}[theorem]{Proposition}
\newtheorem{corollary}[theorem]{Corollary}

\newtheorem{Definition}[theorem]{Definition}
\newtheorem{Example}[theorem]{Example}
\newtheorem{Remark}[theorem]{Remark}

\newenvironment{example}{\begin{Example}\begin{em}}{\end{em}\end{Example}}

\newproof{proof}{Proof}

\SetKwInput{Input}{Input}
\SetKwInput{Output}{Output}
\SetKwInput{Purpose}{Purpose}
\SetKw{Break}{break}
\SetKw{Continue}{continue}
\SetKw{Return}{return}
\SetKw{New}{new\,}

\def\eqref#1{(\ref{#1})}

\def\tuple#1{\langle#1\rangle}

\newcommand{\E}{\exists}

\newcommand{\mL}{\mathcal{L}}

\newcommand{\fALCreg}{\mbox{$f\!\mathcal{ALC}_{reg}$}\xspace}

\newcommand{\mR}{\mathcal{R}}

\newcommand{\mI}{\mathcal{I}}

\newcommand{\mZ}{\mathcal{Z}}

\newcommand{\mM}{\mathcal{M}}
\newcommand{\mMp}{{\mathcal{M}'}}
\newcommand{\mMdp}{{\mathcal{M}''}}
\newcommand{\mME}{{\mathcal{M}/_{\!E}}}
\newcommand{\DeltaM}{\Delta^{\!\mathcal{M}}}
\newcommand{\DeltaMp}{\Delta^{\!\mathcal{M}'\!}}
\newcommand{\DeltaMdp}{\Delta^{\!\mathcal{M}''\!}}
\newcommand{\DeltaME}{\Delta^{\!\mathcal{M}/_{\!E}}}

\newcommand{\myend}{\mbox{}\hfill{\footnotesize$\Box$}}

\newcommand{\comment}[1]{}

\newcommand{\fand}{\varotimes}

\newcommand{\fto}{\Rightarrow}
\newcommand{\fequiv}{\Leftrightarrow}

\newcommand{\SP}{\Sigma_P}
\newcommand{\SA}{\Sigma_A}
\newcommand{\Ss}{\Sigma_S}
\newcommand{\So}{\Sigma_O}

\newcommand{\fKP}{$\mathit{fK}_{\!\mL}^{\Phi}$\xspace}

\newcommand{\unprocessed}{\mathit{unprocessed}}

\newcommand{\FfAs}{FfAs\xspace}
\newcommand{\mIE}{{\mathcal{I}/_{\!E}}}

\journal{arXiv}

\begin{document}
\sloppy
	
\begin{frontmatter}
		
\title{Minimizing fuzzy Kripke models over residuated lattices}

\author{Linh Anh Nguyen}
\ead{nguyen@mimuw.edu.pl}
\ead{nalinh@ntt.edu.vn}				
\address{Institute of Informatics, University of Warsaw, Banacha 2, 02-097 Warsaw, Poland}
\address{Faculty of Information Technology, Nguyen Tat Thanh University, Ho Chi Minh City, Vietnam}

\begin{abstract}
Model minimization is a fundamental problem in formal methods, as smaller models can substantially reduce the complexity of model analysis and reasoning while preserving relevant semantic properties. This work studies the minimization of finite fuzzy Kripke models over residuated lattices while preserving global or local assertions in fuzzy modal logics, which are either fuzzy propositional dynamic logic or a restriction thereof, optionally enriched with nominals, converse modalities, and/or the universal program constructor. We prove that, whenever the multiplication operator of the underlying residuated lattice is right-continuous, the quotient model induced by the greatest fuzzy auto-bisimulation is a minimal reduction preserving global assertions. If the underlying fuzzy modal logic includes the universal program constructor, this quotient model is also a minimal reduction preserving local assertions and a minimal fuzzy Kripke model bisimilar to the original model. In the case where the universal program constructor is absent and the multiplication operator is additionally strictly monotone with respect to~1, we present the first algorithm for constructing a minimal reduction preserving local assertions; the resulting model is also a minimal bisimilar model.
\end{abstract}

\begin{keyword}
fuzzy modal logics \sep fuzzy bisimulation \sep minimization
\end{keyword}

\end{frontmatter}

%===============================================================================

\section{Introduction}
\label{section: intro}

Minimization is an important topic in the study of formal models. Replacing a complex structure with a smaller, semantically equivalent one improves its comprehensibility and often reduces the computational resources required for subsequent analysis. Consequently, minimization has long played an important role in areas such as automata theory, model checking, concurrency theory, and formal verification. Classical examples include the minimization of deterministic finite automata and the reduction of labeled transition systems or Kripke models modulo behavioral equivalences such as bisimulation. In each case, the objective is to construct the smallest model that preserves the properties of interest.

In description logics, minimizing an interpretation is not intended to improve the efficiency of reasoning, since reasoning is performed with respect to a knowledge base rather than a particular interpretation. Nevertheless, minimized interpretations are valuable in ontology engineering, especially when presenting compact countermodels showing that a queried assertion or axiom is not entailed by a knowledge base. 
A similar observation applies to modal logics. The minimization of a Kripke model does not directly accelerate logical inference, which is likewise knowledge-based rather than model-based. Instead, its primary benefit lies in the generation of concise countermodels. Smaller models are easier to inspect and understand, making them particularly useful in interactive theorem-proving environments, model exploration, and explanation systems, where they provide clear evidence that a formula is not a logical consequence of a given knowledge base.

Traditionally, the objective of model minimization is to preserve a chosen notion of semantic equivalence, such as language equivalence for automata or bisimilarity for transition systems and Kripke models. In many applications, however, preserving the truth of a selected class of formulas is preferred. This observation motivates the search for minimization techniques tailored to the logical language under consideration, producing smaller models while retaining exactly those properties relevant to the intended reasoning task.

In this work, we study the problem of minimizing a fuzzy Kripke model over a residuated lattice while preserving its global or local assertions. The underlying fuzzy modal logic is either fuzzy propositional dynamic logic or a restriction thereof, optionally enriched with nominals, converse modalities, and/or the universal program constructor.

\subsection{Related work}

The minimization and reduction of finite fuzzy automata (\FfAs) have attracted considerable attention in the literature. Existing research on exact minimization has focused primarily on deterministic \FfAs \cite{LI20071423,Halamish2015,YANG202172,DEMENDIVILGRAU2024109108,ShamsizadehZG24}. In contrast, for nondeterministic \FfAs, most studies address state reduction rather than the construction of a truly minimal automaton \cite{BASAK2002223,PEEVA20084152,CSIP.10,SCI.14,SCB.18}. An exception is the work~\cite{LITFS2015}, which deals with exact minimization of nondeterministic \FfAs over the G\"odel structure. This situation is not surprising, since even the minimization of nondeterministic finite (crisp) automata is PSPACE-complete~\cite{Jiang1993}, implying that the corresponding problem for nondeterministic \FfAs is PSPACE-hard. 
Another line of research considers approximate state reduction, aiming to construct a smaller fuzzy automaton that approximates the behavior of a given one \cite{Belohlvek2009OnAM,YangL.19,StanimirovicMC.22,FSS-D-25-00029}. Regarding the underlying algebraic structures, the approaches proposed in \cite{BASAK2002223,LI20071423,Belohlvek2009OnAM,Halamish2015,YangL.19,YANG202172,StanimirovicMC.22,ShamsizadehZG24} are confined to the G\"odel structure or, more generally, Heyting algebras, whereas the methods developed in \cite{PEEVA20084152,CSIP.10,SCI.14,SCB.18,DEMENDIVILGRAU2024109108,FSS-D-25-00029} apply to arbitrary residuated lattices. 
A common feature of many minimization and reduction techniques is the use of quotient constructions, where equivalent states are identified and merged to obtain a smaller automaton.

Description logics are closely related to modal logics. In particular, interpretations, individuals, concepts, and roles in description logics correspond, respectively, to Kripke models, states, formulas, and reachability relations in modal logics (atomic and complex roles correspond to atomic and complex programs, respectively, in propositional dynamic logic). Similarly, fuzzy description logics (FDLs) are closely related to fuzzy modal logics. For example, the FDL $\fALCreg$ is a variant of fuzzy propositional dynamic logic. The works most closely related to the present article are \cite{minimization-by-fBS,DBLP:journals/fss/Nguyen24,nguyen2026approximateminimizationinterpretationsfuzzy}, which concern the minimization of fuzzy interpretations in FDLs. We mention below the aspects of these works that are most relevant to the present article.
\begin{itemize}
\item In~\cite{minimization-by-fBS}, the authors studied the problem of minimizing fuzzy interpretations in FDLs under the G\"odel semantics. The considered class of FDLs consists of the FDL $\fALCreg$ and its extensions with nominals, inverse roles, and/or the universal role. Given a fuzzy interpretation $\mI$, they defined the quotient interpretation $\mIE$ with respect to the greatest fuzzy auto-bisimulation $E$ of $\mI$ and proved that it is a minimal reduction of $\mI$ that preserves fuzzy terminological axioms.\footnote{Terminological axioms, also called TBox axioms, can be represented as general concept inclusions (GCIs).} Furthermore, if the underlying FDL includes the universal role, then $\mIE$ is also a minimal reduction of $\mI$ that is bisimilar to $\mI$, as well as a minimal reduction of $\mI$ that preserves fuzzy concept assertions.

\item In~\cite{DBLP:journals/fss/Nguyen24}, we studied the problem of minimizing fuzzy interpretations in FDLs that include the Baaz projection operator.\footnote{The Baaz projection operator~$\triangle$ is defined by $\triangle x = 1$ if $x = 1$, and $\triangle x = 0$ otherwise (i.e., for $0 \leq x < 1$).} In the presence of this operator, greatest crisp auto-bisimulations, rather than greatest fuzzy auto-bisimulations, characterize the logical indistinguishability of individuals, independently of the underlying algebra of fuzzy truth values~\cite{DBLP:journals/tfs/NguyenN23}. Given a fuzzy interpretation $\mI$, we defined the quotient interpretation $\mIE$ with respect to the greatest crisp auto-bisimulation $E$ of $\mI$ and proved that it is a minimal reduction of $\mI$ that preserves fuzzy terminological axioms. Furthermore, if the underlying FDL includes the universal role or $\mI$ is connected, then $\mIE$ is also a minimal reduction of $\mI$ that is bisimilar to $\mI$, as well as a minimal reduction of $\mI$ that preserves fuzzy concept assertions. If the universal role is disallowed and $\mI$ is not connected, unreachable individuals can be deleted first.

\item In~\cite{nguyen2026approximateminimizationinterpretationsfuzzy}, we provided an algorithm for minimizing a finite fuzzy interpretation while preserving fuzzy concept assertions in FDLs without the universal role and the Baaz projection operator, under the G\"odel semantics. The underlying FDL is a restriction or an extension of $\fALCreg$, optionally enriched with nominals and/or inverse roles. The algorithm supports approximate preservation: it minimizes an input fuzzy interpretation $\mI$ while preserving fuzzy concept assertions up to a degree in $(0,1]$.
\end{itemize}

Works on fuzzy and crisp bisimulations between fuzzy structures include~\cite{Fan15,EleftheriouKN12,aml/MartiM18,fuin/Diaconescu20,fss/WuD16,DBLP:journals/fss/WuCBD18,DBLP:journals/ijar/WuCHC18,DBLP:journals/tfs/QiaoZF23,fss/QiaoZP23,ai/FanL14,IgnjatovicCS15}. We refer the reader to the related work sections of~\cite{FBSML,DBLP:journals/tfs/NguyenN23,nguyen2026approximateminimizationinterpretationsfuzzy} for further information about these works. Some other works that are less closely related to the present article are discussed in the related work section of~\cite{nguyen2026approximateminimizationinterpretationsfuzzy}.

\subsection{Motivation}

To the best of our knowledge, the problem of minimizing fuzzy interpretations while preserving terminological axioms or concept assertions in FDLs without the Baaz projection operator, over residuated lattices other than the G\"odel structure, has not been studied. Equivalently, in fuzzy modal logics, the corresponding problem of minimizing fuzzy Kripke models while preserving global or local assertions, in the absence of the Baaz projection operator and over residuated lattices other than the G\"odel structure, has not been studied either. This provides the main motivation for the present work.

Although the Baaz projection operator increases the expressive power of the logic, it may in many situations be excessively discriminating, in a way reminiscent of overfitting in machine learning. More precisely, in the presence of the Baaz projection operator, the similarity between two states $x$ and $x'$, defined as the infimum of $\varphi^\mM(x) \fequiv \varphi^\mM(x')$ over all formulas $\varphi$ of the considered fuzzy modal logic in a given fuzzy Kripke model~$\mM$, is always crisp, taking only the values~$0$ and~$1$~\cite{DBLP:journals/tfs/NguyenN23}. Thus, the presence of the Baaz projection operator prevents logical similarity between states from taking intermediate degrees. In contrast, in fuzzy modal logics without the Baaz projection operator, logical similarity can take intermediate degrees, providing a natural setting for studying minimization that takes graded similarities between states into account.

The universal program constructor has a different effect. Its presence allows formulas to refer, in a single modal step, to arbitrary states of the model, and therefore affects the relationship between global and local assertions. In particular, when the universal program constructor is available, the quotient construction simultaneously yields a minimal reduction preserving global assertions and a minimal reduction preserving local assertions, whereas in its absence, additional techniques are needed to construct a minimal reduction that preserves local assertions. This distinction motivates considering both settings in the present work.

\subsection{Our contributions}

The main objective of this work is to minimize a fuzzy Kripke model over a residuated lattice while preserving its global or local assertions. The underlying fuzzy modal logic is either fuzzy propositional dynamic logic or a restriction thereof, optionally enriched with nominals, converse modalities, and/or the universal program constructor, but without the Baaz projection operator.

We prove that, whenever the multiplication operator of the underlying residuated lattice is right-continuous, the quotient model $\mME$ of a finite fuzzy Kripke model $\mM$ with respect to its greatest fuzzy auto-bisimulation $E$ is a minimal reduction of $\mM$ that preserves global assertions. Furthermore, we prove that, if the underlying fuzzy modal logic includes the universal program constructor, then the quotient model is also a minimal reduction of $\mM$ that preserves local assertions, as well as a minimal fuzzy Kripke model bisimilar to~$\mM$. 
These results generalize the results of~\cite{minimization-by-fBS}, which were established only for the G\"odel structure.

We also present the first algorithm (Algorithm~\ref{alg1}) that, given a finite fuzzy Kripke model $\mM$ and its greatest fuzzy auto-bisimulation $E$, constructs a minimal reduction of $\mM$ preserving local assertions, under the assumptions that the underlying fuzzy modal logic does not include the universal program constructor and that the multiplication operator of the underlying residuated lattice is right-continuous and strictly monotone with respect to~1. (The product and {\L}ukasiewicz structures satisfy these properties.) Under the same assumptions, the model returned by the algorithm is also a minimal fuzzy Kripke model bisimilar to~$\mM$.

Assuming that the greatest fuzzy auto-bisimulation $E$ of the input fuzzy Kripke model $\mM$ is given, the quotient model $\mME$ can be constructed in time $O(mn^2)$, while Algorithm~\ref{alg1} runs in time $O(mn)$, where $n$ is the number of states of $\mM$ and $m$ is the number of non-zero transitions in~$\mM$.

Finally, we develop a Python package that implements the construction of the quotient fuzzy Kripke model $\mME$ under the product, G\"odel, and {\L}ukasiewicz semantics for a given finite fuzzy Kripke model $\mM$ and its greatest fuzzy auto-bisimulation~$E$. The package also implements Algorithm~\ref{alg1} for the product and {\L}ukasiewicz structures. It also includes a previously developed implementation for computing the greatest fuzzy auto-bisimulation of a given finite fuzzy Kripke model $\mM$ under the product, G\"odel, or {\L}ukasiewicz semantics.

\subsection{The structure of this work}

The remainder of this article is organized as follows.
Section~\ref{section: prel} presents the preliminaries needed for the subsequent development.
Sections~\ref{sec: min-ga} and~\ref{sec: min-la} are devoted to minimizing fuzzy Kripke models while preserving global and local assertions, respectively.
In Section~\ref{section: experiments}, we describe our Python package and report the results of performance tests.
Section~\ref{sec: conc} contains a discussion and conclusions.
The appendix contains additional examples. 

%===============================================================================

\section{Preliminaries}
\label{section: prel}

In this section, we recall the definitions of residuated lattices, fuzzy sets, and fuzzy relations; present the syntax and semantics of the fuzzy modal logics considered in this work; and then recall fuzzy bisimulations between fuzzy Kripke models together with some related theorems. The presentation is largely adapted from~\cite{FBSML}.

\subsection{Residuated lattices and fuzzy sets}

A {\em residuated lattice} \cite{Hajek1998,Belohlavek2002} is an algebra $\mL = \tuple{L, \leq, \fand, \fto, 0, 1}$ such that%\footnote{See also \url{https://en.wikipedia.org/wiki/Residuated_lattice}} 
\begin{itemize}
\item $\tuple{L, \leq, 0, 1}$ is a bounded lattice with the least element 0 and the greatest element 1,
\item $\tuple{L, \fand, 1}$ is a commutative monoid, 
\item for every $a, b, c \in L$,
\begin{equation}
a \fand b \leq c \ \ \textrm{iff}\ \ a \leq (b \fto c). \label{fop: GDJSK 00} 
\end{equation}
\end{itemize}

The operation $\fand$ is called the {\em multiplication}, and $\fto$ its {\em residuum}. Given a residuated lattice \mbox{$\mL = \tuple{L, \leq, \fand, \fto, 0, 1}$}, let $\land$ and $\lor$ denote the {\em meet} and {\em join} operators associated with the lattice. We write $x \fequiv y$ to denote \mbox{$(x \fto y) \land (y \fto x)$}. We assume that $\land$ and $\fand$ bind stronger than $\lor$, which in turn binds stronger than~$\fto$ and~$\fequiv$. 

\begin{table}[!t]
\centering
\caption{Residuated lattices on $[0,1]$ with $\fand$ being a t-norm and $\fto$ equal to its residuum.\label{tbl.TNormImplExm}}
\begin{tabular}{@{}p{7em}p{17em}p{13em}@{}}
        \hline
        \textbf{Name} & \textbf{$a \fand b$} & \textbf{$a \fto b$} \\
        \hline
        G\"odel & $\min(a,b)$ & $b$ if $a > b$ else 1 \\[1ex]
        Product & $ab$ & $b/a$ if $a > b$ else 1 \\[1ex]
        %Hamacher & $\displaystyle\frac{ab}{a+b-ab}$ if $a \neq 0$ or $b \neq 0$ else 0 & $\displaystyle\frac{ab}{a-b+ab}$ if $a > b$ else 1 \\[2ex]
        {\L}ukasiewicz & $\max(a+b-1, 0)$ & $\min(b - a + 1, 1)$ \\
        \hline
\end{tabular}
\end{table}

A residuated lattice $\mL = \tuple{L, \leq, \fand, \fto, 0, 1}$ is {\em linear} (respectively, {\em complete}) if the bounded lattice \mbox{$\tuple{L, \leq, 0, 1}$} is linear (respectively, complete). If~$\fand$ is the same as~$\land$, then $\mL$ is called a Heyting algebra.  
The most well-known linear and complete residuated lattices with $L = [0,1]$, where $\fand$ is a t-norm and $\fto$ is its corresponding residuum, are listed in Table~\ref{tbl.TNormImplExm}. Note that the G\"odel structure is a Heyting algebra. 

From now on, let $\mL = \tuple{L, \leq, \fand, \fto, 0, 1}$ be an arbitrary linear and complete residuated lattice. 

Note that the operation~$\fand$ is monotone with respect to both arguments, whereas the operation~$\fto$ is antitone with respect to the first argument and monotone with respect to the second one. That is, if $a \leq a'$ and $b \leq b'$, then \mbox{$a \fand b \leq a' \fand b'$} and \mbox{$(a' \fto b) \leq (a \fto b')$}. For the purposes of this work, we further say that $\fand$ is {\em strictly monotone w.r.t.~$1$} if, for every $0 < a$ and $b < 1$, it holds that $a \fand b < a$ (i.e., $a \fand b < a \fand 1$). Observe that the product and {\L}ukasiewicz t-norms are strictly monotone w.r.t.~$1$, whereas the G\"odel t-norm is not.

Since $\mL$ is assumed to be complete, we have 
\begin{equation}\label{eq: JHFAH}
a \fand \bigvee\!B = \bigvee_{b \in B} (a \fand b),
\end{equation}
for all $a \in L$ and $B \subseteq L$. This property states that $\fand$ is {\em left-continuous}. 
The operation $\fand$ is said to be {\em right-continuous} if
\[ a \fand \bigwedge\!B = \bigwedge_{b \in B} (a \fand b), \]
for all $a \in L$ and $B \subseteq L$. Clearly, all the G\"odel, {\L}ukasiewicz and product t-norms (when $L$ is the unit interval $[0,1]$) are right-continuous. 

Given a set $X$, a function $f: X \to L$ is a {\em fuzzy set}, called a {\em fuzzy subset} of $X$. 
For $x \in X$, the value $f(x)$ represents the degree of membership of $x$ in the fuzzy subset. 
Given $\{x_1,\ldots,x_n\} \subseteq X$ and $\{a_1,\ldots,a_n\} \subseteq L$, we denote by $\{x_1:a_1$, \ldots, $x_n:a_n\}$ the fuzzy subset $f$ of $X$ defined by $f(x_i) = a_i$ for $1 \leq i \leq n$ and $f(x) = 0$ for all $x \in X \setminus \{x_1,\ldots,x_n\}$. 
Given fuzzy subsets $f$ and $g$ of $X$, we write $f \leq g$ to denote that $f(x) \leq g(x)$ for all $x \in X$. If $f \leq g$, then we say that $g$ is greater than or equal to~$f$.

A {\em fuzzy relation} between $X$ and $Y$ is a fuzzy subset of $X \times Y$. A fuzzy relation on~$X$ is a fuzzy relation between $X$ and itself. 
Given $R: X \times Y \to L$, the {\em converse} \mbox{$R^- : Y \times X \to L$} of $R$ is defined by $R^-(y,x) = R(x,y)$, for all $\tuple{y,x} \in Y \times X$. 
The {\em composition} of fuzzy relations \mbox{$R_1: X \times Y \to L$} and \mbox{$R_2: Y \times Z \to L$} is the fuzzy relation \mbox{$R_1 \circ R_2 : X \times Z \to L$} such that 
\[ (R_1 \circ R_2)(x,z) = \bigvee_{y \in Y} (R_1(x,y) \fand R_2(y,z)), \] 
for all $\tuple{x,z} \in X \times Z$. 
Given a family $\mR$ of fuzzy relations between $X$ and $Y$, we define $\bigvee\!\mR$ to be the fuzzy relation between $X$ and $Y$ such that $(\bigvee\!\mR)(x,y) = \bigvee_{R \in \mR} R(x,y)$, for all $x \in X$ and $y \in Y$. 
The {\em identity relation} on $X$ is the fuzzy relation $id_X: X \times X \to L$ such that $id_X(x,x) = 1$ and $id_X(x,y) = 0$ whenever $x \neq y$.

A fuzzy relation $R$ on $X$ is called a {\em fuzzy equivalence relation} if it is
\begin{itemize}
\item {\em reflexive}: $id_X \leq R$, 
\item {\em symmetric}: $R = R^-$,
\item {\em transitive}: $R \circ R \leq R$. 
\end{itemize}

Given a fuzzy equivalence relation $R$ on $X$ and \mbox{$x \in X$}, the {\em fuzzy equivalence class} of $R$ determined by $x$ is defined to be the fuzzy subset $R_x$ of $X$ specified by $R_x(y) = R(x,y)$, for $y \in X$. 

\begin{lemma}[see, e.g., \cite{CiricIDB12}]\label{lemma: JFSJA}
Let $R$ be a fuzzy equivalence relation on $X$. For every $x, y \in X$, the following conditions are equivalent:
\begin{itemize}
\item $R(x,y) = 1$,
\item $R_x = R_y$, 
\item $R(x,z) = R(y,z)$ for all $z \in X$,
\item $R(z,x) = R(z,y)$ for all $z \in X$.
\end{itemize}
\end{lemma}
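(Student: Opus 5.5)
We prove the four conditions equivalent via the cycle (i)$\Rightarrow$(iii)$\Rightarrow$(ii)$\Rightarrow$(i), and then add (iv) using symmetry. Throughout we use only reflexivity, symmetry and transitivity of $R$, together with the fact that $1$ is the unit of $\fand$.

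For (i)$\Rightarrow$(iii), assume $R(x,y)=1$ and fix $z \in X$. By symmetry, $R(y,x)=1$. Transitivity gives
\[ R(y,z) \geq R(y,x)\fand R(x,z) = 1\fand R(x,z) = R(x,z), \]
since $(R\circ R)(y,z)\leq R(y,z)$ and the join defining $(R\circ R)(y,z)$ dominates its $x$-term. The symmetric argument gives
\[ R(x,z) \geq R(x,y)\fand R(y,z) = R(y,z). \]
Hence $R(x,z)=R(y,z)$.

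The step (iii)$\Rightarrow$(ii) is immediate from the definition $R_x(z)=R(x,z)$.

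For (ii)$\Rightarrow$(i), use reflexivity: $R(x,y)=R_x(y)=R_y(y)=R(y,y)=1$.

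Finally, (iii)$\Leftrightarrow$(iv) holds because, by symmetry, $R(z,x)=R(x,z)$ and $R(z,y)=R(y,z)$ for every $z$, so the two conditions are literally the same family of equalities.

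There is no real obstacle here. The only point requiring care is extracting the single-term inequality from transitivity: we must note that $R\circ R\leq R$ implies each summand $R(y,x)\fand R(x,z)$ of the join is at most $R(y,z)$. With that observed, the unit law for $\fand$ closes the argument. No completeness or continuity assumptions on $\mL$ are needed beyond the join existing in the definition of composition.
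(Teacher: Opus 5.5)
Your proof is correct: reflexivity gives (ii)$\Rightarrow$(i), symmetry together with the single-term consequence of $R \circ R \leq R$ and the unit law for $\fand$ gives (i)$\Rightarrow$(iii), and symmetry alone gives (iii)$\Leftrightarrow$(iv). The paper gives no proof of its own and only cites the literature for this lemma, and your argument is the standard elementary one for this fact.
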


\subsection{Fuzzy modal logics}

Let $\SA$ denote a set of {\em actions}, also called {\em atomic programs}, and let $\SP$ denote a set of {\em propositions}, also called {\em atomic formulas}. In addition to $\SA$ and $\SP$, let the signature also contain a set $\Ss$ of {\em named states} (which may, for example, consist solely of the initial state $\tau$) and a set $\So$ of {\em nominals}, where $\So \subseteq \Ss$. 

We consider a class of fuzzy modal logics \fKP parameterized by $\Phi \subseteq \{\cup, \circ, *, \lor, \Box, C, U, O\}$, where the symbols $C$, $U$, and $O$ stand for converse programs, the universal program, and nominals, respectively. 
{\em Programs} and {\em formulas} of \fKP over a signature $\tuple{\SA,\SP,\Ss,\So}$ and a residuated lattice $\mL$ are defined inductively as follows:
\begin{itemize}
\item every $\sigma \in \SA$ is a program of \fKP, 
\item every $a \in L$ is a formula of \fKP, 
\item every $p \in \SP$ is a formula of \fKP,
\item if $\alpha$ and $\beta$ are programs of \fKP, $\varphi$ and $\psi$ are formulas of \fKP, and $a \in L$, then:
    \begin{itemize}
    \item $\varphi \land \psi$, $a \to \varphi$, $\varphi \to a$, and $\tuple{\alpha}\varphi$ are formulas of \fKP, 
    \item if $\mL$ is a Heyting algebra, then $\varphi \to \psi$ is a formula of \fKP,
    \item if $\cup \in \Phi$, then $\alpha \cup \beta$ is a program of \fKP,
    \item if $\circ \in \Phi$, then $\alpha \circ \beta$ is a program of \fKP,
    \item if $* \in \Phi$, then $\alpha^*$ is a program of \fKP,
    \item if $\mL$ is a Heyting algebra, then $\varphi?$ is a program of \fKP,
    \item if $\lor \in \Phi$, then $\varphi \lor \psi$ is a formula of \fKP,
    \item if $\Box \in \Phi$, then $[\alpha]\varphi$ is a formula of \fKP,
    \item if $C \in \Phi$, then $\alpha^-$ is a program of \fKP, called the {\em converse} of $\alpha$, 
    \item if $U \in \Phi$, then $U$ is a program of \fKP, called the {\em universal program}, 
    \item if $O \in \Phi$, then every $s \in \So$ is a formula of \fKP, called a {\em nominal}. 
    \end{itemize} 
\end{itemize}

In the case where $\Phi = \{\cup, \circ, *, \lor, \Box\}$, \fKP is a kind of fuzzy propositional dynamic logic (cf.\ \cite{FBSML,FDSML}). 
For the reason why the test operator ($\varphi?$) and the full implication \mbox{$(\varphi \to \psi)$} are included in \fKP only when $\mL$ is a Heyting algebra, we refer the reader to \cite[Remark~4.5]{FBSML}. 

We use letters such as 
$\sigma$ to denote actions from $\SA$, 
$p$ and $q$ to denote propositions from $\SP$, 
$s$ to denote a state name from $\Ss$, 
$a$ and $b$ to denote values from $L$, 
$\varphi$ and $\psi$ to denote formulas, 
$\alpha$ and $\beta$ to denote programs.

Given a finite set of formulas $\Psi = \{\psi_1,\ldots,\psi_n\}$ with $n \geq 0$, we denote
\begin{eqnarray*}
\textstyle\bigwedge\!\Psi & = & \psi_1 \land\ldots\land \psi_n \land 1, \\
\textstyle\bigotimes\!\Psi & = & \psi_1 \fand\cdots\fand \psi_n \fand 1.
\end{eqnarray*}
Thus, for $\Psi = \emptyset$, we have $\bigwedge\!\Psi = \bigotimes\!\Psi = 1$.

\begin{table}[t]
\caption{Interpretation of complex programs and formulas\label{table: interpretation}}
\fbox{
\parbox{0.97\textwidth}{
\begin{eqnarray*}
U^\mM(x,y) & \!=\! & 1 \\
(\alpha^-)^\mM(x,y) & \!=\! & \alpha^\mM(y,x) \\
(\alpha \cup \beta)^\mM(x,y) & \!=\! & \alpha^\mM(x,y) \lor \beta^\mM(x,y) \\
(\alpha \circ \beta)^\mM(x,y) & \!=\! & \bigvee\{\alpha^\mM(x,z) \fand \beta^\mM(z,y) \mid z \in \DeltaM \} \\
(\alpha^*)^\mM(x,y) & \!=\! & \bigvee \{\textstyle\bigotimes\{\alpha^\mM(x_i,x_{i+1}) \mid 0 \leq i < n\} \mid \\
	&& \qquad n \geq 0,\ x_0,\ldots,x_n \in \DeltaM, x_0 = x,\ x_n = y\} \\
(\varphi?)^\mM(x,y) & \!=\! & \textrm{($\varphi^\mM(x)$ if $x = y$ else 0)} \\
a^\mM(x) & \!=\! & a \\
s^\mM(x) & \!=\! & \textrm{(1 if $x = s^\mM$ else 0)} \\
(\varphi \lor \psi)^\mM(x) & \!=\! & \varphi^\mM(x) \lor \psi^\mM(x) \\
(\varphi \land \psi)^\mM(x) & \!=\! & \varphi^\mM(x) \land \psi^\mM(x) \\
(\varphi \to \psi)^\mM(x) & \!=\! & (\varphi^\mM(x) \fto \psi^\mM(x)) \\
(\tuple{\alpha}\varphi)^\mM(x) & \!=\! & \bigvee \{\alpha^\mM(x,y) \fand \varphi^\mM(y) \mid y \in \DeltaM\} \\
([\alpha]\varphi)^\mM(x) & \!=\! & \bigwedge \{\alpha^\mM(x,y) \fto \varphi^\mM(y) \mid y \in \DeltaM\}
\end{eqnarray*}}
}
\end{table}

A {\em fuzzy Kripke model} over a signature $\tuple{\SA,\SP,\Ss,\So}$ and a residuated lattice $\mL$ is a pair $\mM = \tuple{\DeltaM, \cdot^\mM}$, where $\DeltaM$ is a non-empty set of {\em states}, called the {\em domain}, and $\cdot^\mM$ is the {\em interpretation function} that maps each $p \in \SP$ to a fuzzy subset $p^\mM$ of $\DeltaM$, each $\sigma \in \SA$ to a fuzzy relation $\sigma^\mM$ on $\DeltaM$, and each $s \in \Ss$ to a state in~$\DeltaM$. The function~$\cdot^\mM$ is extended to complex programs and formulas as shown in Table~\ref{table: interpretation}. 
A fuzzy Kripke model $\mM$ is {\em finite} if $\DeltaM$ and the signature $\tuple{\SA,\SP,\Ss,\So}$ are finite. 

Every program of \fKP can be converted into converse normal form, in which the converse operator applies only to actions. Without loss of generality, we assume that all programs of \fKP are in converse normal form (when $C \in \Phi$). 

A {\em basic program} w.r.t.~$\Phi$ is an action $a \in \SA$, its converse $a^-$ when $C \in \Phi$, or the universal program $U$ when $U \in \Phi$. 
A {\em basic formula} w.r.t.~$\Phi$ is a proposition $p \in \SP$, or a nominal $s \in \So$ when $O \in \Phi$. 
A fuzzy Kripke model $\mM$ is {\em $\Phi$-image-finite} if, for every basic program $\alpha$ w.r.t.~$\Phi$ and every $x \in \DeltaM$, the set $\{y \in \DeltaM \mid \alpha^\mM(x,y) > 0\}$ is finite. In particular, if $U \in \Phi$, then $\mM$ is $\Phi$-image-finite only if it is finite. 

We say that a formula $\varphi$ is a {\em global assertion} of $\mM$ if $\varphi^\mM(x) = 1$ for all $x \in \DeltaM$. 

%===============================================================================

\subsection{Fuzzy bisimulations between fuzzy Kripke models}
\label{section: fbir}

In this subsection, we define fuzzy bisimulations in \fKP and present their properties.

\begin{Definition}[{cf.~\cite[Definition~3.1]{FSS2020}}]\label{def: HDJAK}\em
Given fuzzy Kripke models $\mM$ and $\mMp$, a fuzzy relation $Z$ between $\DeltaM$ and $\DeltaMp$ is called a {\em fuzzy $\Phi$-bisimulation} between $\mM$ and $\mMp$ if the following conditions hold for every basic formula $\varphi$ w.r.t.~$\Phi$, every basic program $\alpha$ w.r.t.~$\Phi$, and all possible values of the free variables:
\begin{eqnarray}
&& Z(x,x') \leq (\varphi^\mM(x) \fequiv \varphi^\mMp(x')) \label{eq: FB1} \\
&& \E y' \in \DeltaMp\ (Z(x,x') \fand \alpha^\mM(x,y) \leq \alpha^\mMp(x',y') \fand Z(y,y')) \label{eq: FB2} \\
&& \E y \in \DeltaM\ (Z(x,x') \fand \alpha^\mMp(x',y') \leq \alpha^\mM(x,y) \fand Z(y,y')). \label{eq: FB3}
\end{eqnarray}
We say that $\mM$ and $\mMp$ are {\em $\Phi$-bisimilar} if there exists a fuzzy $\Phi$-bisimulation $Z$ between them such that $Z(s^\mM, s^\mMp) = 1$ for all $s \in \Ss$. A fuzzy $\Phi$-bisimulation between $\mM$ and itself is called a {\em fuzzy $\Phi$-auto-bisimulation} of~$\mM$.
\myend
\end{Definition}

This definition differs from \cite[Definition~3.1]{FSS2020} in several respects:
\begin{itemize}
\item It is formulated for fuzzy Kripke models over residuated lattices, rather than for fuzzy interpretations in fuzzy description logics under the specific G\"odel semantics.
\item Owing to the introduction of basic formulas and the extension of the notion of basic programs (by incorporating the universal program), the definition is simplified.
\end{itemize}

Note that the notion of a $\Phi$-bisimulation is independent of whether the constructors $\cup$, $\circ$, $*$, $\lor$, and $\Box$ belong to~$\Phi$. It depends only on whether the features $C$, $U$, and $O$ are included in~$\Phi$. 
% Therefore, the notion of an $\emptyset$-bisimulation (respectively, an $\emptyset$-auto-bisimulation) coincides with that of a $\Phi$-bisimulation (respectively, a $\Phi$-auto-bisimulation) for $\Phi \subseteq \{\cup, \circ, *, \lor, \Box\}$. 

\begin{proposition}[{cf.~\cite[Proposition~3.6]{FBSML}}] \label{prop: JHDJH}
Let $\mM$, $\mMp$ and $\mMdp$ be $\Phi$-image-finite fuzzy Kripke models.
\begin{enumerate}
\item The identity relation on $\DeltaM$ is a fuzzy $\Phi$-auto-bisimulation of~$\mM$.
\item If $Z$ is a fuzzy $\Phi$-bisimulation between $\mM$ and $\mMp$, then $Z^-$ is a fuzzy $\Phi$-bisimulation between $\mMp$ and $\mM$.
\item If $Z$ is a fuzzy $\Phi$-bisimulation between $\mM$ and $\mMp$, and $Z'$ is a fuzzy $\Phi$-bisimulation between $\mMp$ and $\mMdp$, then $Z \circ Z'$ is a fuzzy $\Phi$-bisimulation between $\mM$ and $\mMdp$.
\item If $\mZ$ is a set of fuzzy $\Phi$-bisimulations between $\mM$ and $\mMp$, then $\bigvee\!\mZ$ is also a fuzzy $\Phi$-bisimulation between $\mM$ and $\mMp$.
\end{enumerate}
\end{proposition}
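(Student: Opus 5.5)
We verify each of the four items directly against conditions \eqref{eq: FB1}--\eqref{eq: FB3} of Definition~\ref{def: HDJAK}, for an arbitrary basic formula $\varphi$ and basic program $\alpha$ w.r.t.~$\Phi$.

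\textbf{Items 1 and 2.} For item~1, take $Z = id_{\DeltaM}$. If $x \neq x'$, then $Z(x,x') = 0$ and all three conditions hold trivially, choosing any witness (for instance $y' := y$). If $x = x'$, then \eqref{eq: FB1} reduces to $1 \leq (a \fequiv a)$. For \eqref{eq: FB2} we choose $y' := y$, so both sides equal $\alpha^\mM(x,y)$; \eqref{eq: FB3} is symmetric. For item~2, conditions \eqref{eq: FB2} and \eqref{eq: FB3} for $Z^-$ are exactly \eqref{eq: FB3} and \eqref{eq: FB2} for $Z$ with the roles of the models swapped, using commutativity of $\fand$. Condition \eqref{eq: FB1} for $Z^-$ follows from the symmetry of $\fequiv$.

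\textbf{Item 3.} For \eqref{eq: FB1}, we must show $Z(x,x') \fand Z'(x',x'') \leq (a \fequiv c)$, where $a = \varphi^\mM(x)$, $b = \varphi^\mMp(x')$, $c = \varphi^\mMdp(x'')$. By monotonicity it suffices to use $(a\fto b)\fand(b\fto c) \leq (a \fto c)$, which follows from residuation \eqref{fop: GDJSK 00}. We then take the join over $x'$, using \eqref{eq: JHFAH} to distribute $\fand$ over $\bigvee$; the converse implication is handled the same way. For \eqref{eq: FB2}, fix $x,x'',y$. Here $(Z\circ Z')(x,x'')\fand\alpha^\mM(x,y) = \bigvee_{x'} (Z(x,x')\fand Z'(x',x'')\fand \alpha^\mM(x,y))$, again by \eqref{eq: JHFAH}. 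For each $x'$, apply \eqref{eq: FB2} for $Z$ to get some $y'$ with $Z(x,x')\fand\alpha^\mM(x,y) \leq \alpha^\mMp(x',y')\fand Z(y,y')$. Then apply \eqref{eq: FB2} for $Z'$ at $(x',x'',y')$ to get $y''$ with $Z'(x',x'')\fand\alpha^\mMp(x',y') \leq \alpha^\mMdp(x'',y'')\fand Z'(y',y'')$. Combining, the $x'$-term is bounded by $\alpha^\mMdp(x'',y'')\fand (Z\circ Z')(y,y'')$.

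The obstacle is that $y''$ depends on $x'$, while \eqref{eq: FB2} demands a single witness for the whole join. This is where image-finiteness enters. Only terms with $\alpha^\mM(x,y)>0$ matter; if $\alpha^\mM(x,y) = 0$, any $y''$ works. Likewise we may assume the relevant $\alpha^\mMp(x',y')$ and $\alpha^\mMdp(x'',y'')$ are positive, since otherwise the bound is $0$ and that term is harmless. So the candidate $y''$ range over the finite set $\{y'' \mid \alpha^\mMdp(x'',y'')>0\}$. A join of finitely many values over a finite set of candidates, in a \emph{linear} lattice, is attained by the candidate giving the maximal bound. The plan is therefore to group the $x'$ by their chosen $y''$; each group's join is bounded by $\alpha^\mMdp(x'',y'')\fand(Z\circ Z')(y,y'')$. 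Since $\mL$ is linear, the finitely many group bounds have a maximum, and we pick the $y''$ attaining it. When $\alpha = U$, image-finiteness means all models are finite, so the same argument applies. \eqref{eq: FB3} is symmetric.

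\textbf{Item 4.} For \eqref{eq: FB1}, we use $\bigvee_{Z} Z(x,x') \leq (a\fequiv a')$, which holds since each term does. For \eqref{eq: FB2}, $\fand$ distributes over the join by \eqref{eq: JHFAH}. Each $Z\in\mZ$ gives a witness $y'_Z$ with the term bounded by $\alpha^\mMp(x',y'_Z)\fand(\bigvee\!\mZ)(y,y'_Z)$. By the same finiteness-plus-linearity argument as in item~3, the finitely many possible positive-transition witnesses let us choose one $y'$ maximizing the bound, and this $y'$ works for the whole join.
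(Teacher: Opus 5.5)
Your proposal is correct and follows essentially the same route as the paper. For item~4 the paper likewise distributes $\fand$ over the join via \eqref{eq: JHFAH}, picks a witness $y'_Z$ for each $Z$, and then uses linearity of $\mL$ together with $\Phi$-image-finiteness to replace these witnesses by a single maximizing one; the paper only sketches items~1--3, deferring item~3 to \cite[Proposition~3.6]{FBSML}, and your explicit composition argument (grouping the $x'$-terms by their witness $y''$ and maximizing over the finitely many positive $\alpha$-successors) is the natural analogue of that same finiteness-plus-linearity step.
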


% Recall that the underlying residuated lattice is assumed to be linear and complete. 
The first two assertions are straightforward. The third assertion can be proved analogously to the corresponding assertion in \cite[Proposition~3.6]{FBSML}. The last assertion generalizes the corresponding assertion in \cite[Proposition~3.6]{FBSML} by dropping the requirement that $\mZ$ be finite (and by dealing with $\Phi$-bisimulations instead of bisimulations).

\begin{proof}
In view of the above observations, we prove only the last assertion. Let $\mZ$ be a set of fuzzy $\Phi$-bisimulations between $\mM$ and $\mMp$. We prove that $\bigvee\!\mZ$ is a fuzzy $\Phi$-bisimulation between $\mM$ and $\mMp$. It is easily seen that the condition~\eqref{eq: FB1} with $Z$ replaced by $\bigvee\!\mZ$ holds. 

Consider the condition~\eqref{eq: FB2} with $Z$ replaced by $\bigvee\!\mZ$. Let $\alpha$ be a basic program w.r.t.~$\Phi$, and let $x, y \in \DeltaM$ and $x' \in \DeltaMp$. By~\eqref{eq: JHFAH}, we have
\[
   \big(\bigvee\!\mZ\big)(x,x') \fand \alpha^\mM(x,y) = \big(\!\bigvee_{Z \in \mZ}\! Z(x,x')\big) \fand \alpha^\mM(x,y) = \bigvee_{Z \in \mZ} (Z(x,x') \fand \alpha^\mM(x,y)).
\]
For each $Z \in \mZ$, there exists $y'_Z \in \DeltaMp$ such that 
\[ 
    Z(x,x') \fand \alpha^\mM(x,y) \leq \alpha^\mMp(x', y'_Z) \fand Z(y, y'_Z).
\]
Since $\mL$ is linear and $\mMp$ is $\Phi$-image-finite, there exists $y' \in \DeltaMp$ such that 
\[ 
    \bigvee_{Z \in \mZ} (\alpha^\mMp(x', y'_Z) \fand Z(y, y'_Z)) \leq \bigvee_{Z \in \mZ} (\alpha^\mMp(x', y') \fand Z(y, y')),
\]
Thus, 
\[
         \big(\bigvee\!\mZ\big)(x,x') \fand \alpha^\mM(x,y) 
    \leq \bigvee_{Z \in \mZ} (\alpha^\mMp(x', y') \fand Z(y, y')) 
    = \alpha^\mMp(x', y') \fand \bigvee_{Z \in \mZ}\! Z(y, y').
\]
Therefore, 
\[
         \big(\bigvee\!\mZ\big)(x,x') \fand \alpha^\mM(x,y) 
    \leq \alpha^\mMp(x', y') \fand \big(\bigvee\!\mZ\big)(y, y'),
\]
which implies that the condition~\eqref{eq: FB2} with $Z$ replaced by $\bigvee\!\mZ$ holds.

The condition~\eqref{eq: FB3} with $Z$ replaced by $\bigvee\!\mZ$ can be proved analogously.
\myend
\end{proof}

\begin{corollary}\label{cor: NSLAJ}
The $\Phi$-bisimilarity relation between $\Phi$-image-finite fuzzy Kripke models is an equivalence relation.
\end{corollary}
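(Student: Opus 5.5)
We verify the three defining properties of an equivalence relation on the class of $\Phi$-image-finite fuzzy Kripke models, where $\mM$ and $\mMp$ are related iff they are $\Phi$-bisimilar, i.e., there is a fuzzy $\Phi$-bisimulation $Z$ between them with $Z(s^\mM, s^\mMp) = 1$ for all $s \in \Ss$. Each property is obtained from the corresponding assertion of Proposition~\ref{prop: JHDJH}, together with a check that the side condition on named states is preserved.

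\emph{Reflexivity.} By the first assertion of Proposition~\ref{prop: JHDJH}, $id_{\DeltaM}$ is a fuzzy $\Phi$-auto-bisimulation of~$\mM$. Moreover $id_{\DeltaM}(s^\mM,s^\mM) = 1$ for every $s \in \Ss$, so $\mM$ is $\Phi$-bisimilar to itself.

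\emph{Symmetry.} Let $Z$ witness that $\mM$ and $\mMp$ are $\Phi$-bisimilar. By the second assertion, $Z^-$ is a fuzzy $\Phi$-bisimulation between $\mMp$ and $\mM$. Its named-state condition holds because $Z^-(s^\mMp,s^\mM) = Z(s^\mM,s^\mMp) = 1$.

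\emph{Transitivity.} Let $Z$ witness that $\mM$ and $\mMp$ are $\Phi$-bisimilar, and $Z'$ that $\mMp$ and $\mMdp$ are. By the third assertion, $Z \circ Z'$ is a fuzzy $\Phi$-bisimulation between $\mM$ and $\mMdp$. For $s \in \Ss$, taking $y = s^\mMp$ in the supremum defining the composition gives
\[ (Z\circ Z')(s^\mM,s^\mMdp) \geq Z(s^\mM,s^\mMp) \fand Z'(s^\mMp,s^\mMdp) = 1 \fand 1 = 1. \]
Hence the value equals $1$.

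The only point needing any care is this named-state check for the composition, which is handled by choosing the middle state $s^\mMp$ and using that $1$ is the unit of~$\fand$. The $\Phi$-image-finiteness hypothesis is used only to apply Proposition~\ref{prop: JHDJH}, mainly its third assertion.
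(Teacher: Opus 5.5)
Your proposal is correct and follows the paper's proof. The paper also derives the corollary directly from Proposition~\ref{prop: JHDJH}, but writes out only transitivity via $Z \circ Z'$ and simply asserts $(Z\circ Z')(s^\mM, s^\mMdp) = 1$. You additionally write out reflexivity and symmetry, and you justify that equality by taking $s^\mMp$ as the middle state and using that $1$ is the unit of $\fand$.
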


\begin{proof}
This corollary follows quite straightforwardly from Proposition~\ref{prop: JHDJH}. For instance, we show that the $\Phi$-bisimilarity relation is transitive. 
Let $\mM$, $\mMp$ and $\mMdp$ be $\Phi$-image-finite fuzzy Kripke models such that $\mM$ is $\Phi$-bisimilar to $\mMp$, which in turn is $\Phi$-bisimilar to $\mMdp$. Let $Z$ be a fuzzy $\Phi$-bisimulation between $\mM$ and~$\mMp$ such that $Z(s^\mM, s^\mMp) = 1$ for all $s \in \Ss$, and let $Z'$ be a fuzzy $\Phi$-bisimulation between $\mMp$ and~$\mMdp$ such that $Z'(s^\mMp, s^\mMdp) = 1$ for all $s \in \Ss$. We have $(Z \circ Z')(s^\mM, s^\mMdp) = 1$ for all $s \in \Ss$. By Proposition~\ref{prop: JHDJH}, $Z \circ Z'$ is a fuzzy $\Phi$-bisimulation between $\mM$ and $\mMdp$. Therefore, $\mM$ and $\mMdp$ are $\Phi$-bisimilar.
\myend
\end{proof}

\begin{corollary}\label{cor: JFBXJ}
There exists the greatest fuzzy $\Phi$-bisimulation between any two $\Phi$-image-finite fuzzy Kripke models.
\end{corollary}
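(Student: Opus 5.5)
The plan is to take the join of all fuzzy $\Phi$-bisimulations between the two models and invoke Proposition~\ref{prop: JHDJH}(4). Let $\mM$ and $\mMp$ be $\Phi$-image-finite fuzzy Kripke models, and let $\mZ$ be the set of all fuzzy $\Phi$-bisimulations between them. This set is non-empty in a trivial way: the empty fuzzy relation (constantly $0$) satisfies \eqref{eq: FB1}. It also satisfies \eqref{eq: FB2} and \eqref{eq: FB3}, because $0 \fand a = 0$ (by residuation, $0 \leq (a \fto 0)$ gives $0 \fand a \leq 0$) and $\DeltaMp$, $\DeltaM$ are non-empty, so a witness always exists. Non-emptiness is not even needed, since the join of the empty family is the zero relation. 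In any case, $\mZ$ is a set, because it is a subset of $L^{\DeltaM \times \DeltaMp}$.

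Set $Z_0 = \bigvee\!\mZ$. By the fourth assertion of Proposition~\ref{prop: JHDJH}, which holds for arbitrary (possibly infinite) families, $Z_0$ is itself a fuzzy $\Phi$-bisimulation between $\mM$ and $\mMp$, so $Z_0 \in \mZ$. By the definition of the pointwise join, $Z \leq Z_0$ for every $Z \in \mZ$. Hence $Z_0$ is the greatest element of $\mZ$, and uniqueness follows from the antisymmetry of $\leq$ on fuzzy relations.

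There is essentially no obstacle here. The only delicate point, closure under infinite joins, is exactly what Proposition~\ref{prop: JHDJH}(4) supplies, and that is where linearity, completeness and $\Phi$-image-finiteness are used. So the corollary reduces to that proposition, plus the remark that the candidate family is a genuine set.
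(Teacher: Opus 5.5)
Your proposal is correct and matches the paper, which derives the corollary directly from the fourth assertion of Proposition~\ref{prop: JHDJH} by taking the join of all fuzzy $\Phi$-bisimulations between the two models. Your remarks that the family is non-empty (it contains the zero relation) and is a genuine set are harmless details that the paper leaves implicit.
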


This corollary follows immediately from the fourth assertion of Proposition~\ref{prop: JHDJH}. 

\begin{corollary}[{cf.~\cite[Corollaries~3.7 and~5.3]{FBSML}}]\label{cor: HGEKS} 
Let $\mM$ be a $\Phi$-image-finite fuzzy Kripke model. Then the greatest fuzzy $\Phi$-auto-bisimulation of $\mM$ exists and is a fuzzy equivalence relation.
\end{corollary}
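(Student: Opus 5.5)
Existence follows at once from Corollary~\ref{cor: JFBXJ} with $\mMp = \mM$: let $E$ be the supremum $\bigvee\!\mZ$ of the set $\mZ$ of all fuzzy $\Phi$-auto-bisimulations of $\mM$. By the fourth assertion of Proposition~\ref{prop: JHDJH}, $E$ is itself a fuzzy $\Phi$-auto-bisimulation, so it is the greatest one. Thus it remains to check that $E$ is reflexive, symmetric and transitive. In each case I will show that a suitable fuzzy relation built from $E$ is again a fuzzy $\Phi$-auto-bisimulation, so it lies below $E$ by maximality.

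For \emph{reflexivity}, the first assertion of Proposition~\ref{prop: JHDJH} says that $id_{\DeltaM}$ is a fuzzy $\Phi$-auto-bisimulation, so $id_{\DeltaM} \leq E$. For \emph{symmetry}, the second assertion (with $\mMp = \mM$) shows that $E^-$ is a fuzzy $\Phi$-auto-bisimulation, so $E^- \leq E$. Applying the converse to both sides gives $E = (E^-)^- \leq E^-$, hence $E = E^-$. For \emph{transitivity}, the third assertion (with $\mMp = \mMdp = \mM$ and $Z = Z' = E$) shows that $E \circ E$ is a fuzzy $\Phi$-auto-bisimulation, so $E \circ E \leq E$.

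The only point needing care is that Proposition~\ref{prop: JHDJH} is applied under its hypothesis: all three models coincide with $\mM$, which is $\Phi$-image-finite by assumption. This hypothesis is what makes the supremum step in assertion~4 (and the composition in assertion~3) go through. I expect no real obstacle, since the substantive work already lies in Proposition~\ref{prop: JHDJH}. Apart from that, the argument is just these maximality comparisons.
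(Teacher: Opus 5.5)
Your proof is correct and is essentially the paper's argument: the paper simply states that the corollary follows immediately from Proposition~\ref{prop: JHDJH}. Existence comes from its fourth assertion, taking the join of all fuzzy $\Phi$-auto-bisimulations, and reflexivity, symmetry and transitivity come from its first three assertions together with maximality, exactly as you spell out.
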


This corollary follows immediately from Proposition~\ref{prop: JHDJH}. Compared with \cite[Corollary~3.7]{FBSML}, it additionally asserts the existence of the greatest fuzzy $\Phi$-auto-bisimulation of $\mM$. Moreover, it extends \cite[Corollary~5.3]{FBSML} by not requiring~$\fand$ to be right-continuous.

According to the following proposition, the universal program can be omitted when computing the greatest fuzzy $\Phi$-auto-bisimulation of a finite fuzzy Kripke model.

\begin{proposition}\label{prop: HFLMS}
Let $\mM$ be a $\Phi$-image-finite fuzzy Kripke model. Then the greatest fuzzy \mbox{$(\Phi \setminus \{U\})$}-auto-bisimulation of $\mM$ is also the greatest fuzzy $\Phi$-auto-bisimulation of~$\mM$.
\end{proposition}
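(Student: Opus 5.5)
Let $E$ be the greatest fuzzy $(\Phi \setminus \{U\})$-auto-bisimulation of $\mM$ and $E'$ the greatest fuzzy $\Phi$-auto-bisimulation of $\mM$. Both exist by Corollary~\ref{cor: HGEKS}; note that $\mM$ is also $(\Phi\setminus\{U\})$-image-finite. Every fuzzy $\Phi$-auto-bisimulation satisfies conditions \eqref{eq: FB1}--\eqref{eq: FB3} for a subset of the basic programs and formulas required for $\Phi$. Hence it is also a fuzzy $(\Phi \setminus \{U\})$-auto-bisimulation, which gives $E' \leq E$. For the converse inequality it suffices to show that $E$ itself is a fuzzy $\Phi$-auto-bisimulation, since then $E \leq E'$ by maximality of $E'$.

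The basic formulas are the same for $\Phi$ and $\Phi \setminus \{U\}$. Among the basic programs, only $U$ is new (converses are present or absent in both). So the conditions \eqref{eq: FB1} and \eqref{eq: FB2}/\eqref{eq: FB3} for the non-universal basic programs hold for $E$ already, and only \eqref{eq: FB2} and \eqref{eq: FB3} with $\alpha = U$ remain to be checked.

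For \eqref{eq: FB2}, fix $x, x', y \in \DeltaM$. We must find $y'$ with $E(x,x') \fand U^\mM(x,y) \leq U^\mM(x',y') \fand E(y,y')$, i.e.\ $E(x,x') \leq E(y,y')$. Take $y' = y$. By Corollary~\ref{cor: HGEKS}, $E$ is a fuzzy equivalence relation and in particular reflexive, so $E(y,y) = 1 \geq E(x,x')$. Condition \eqref{eq: FB3} for $U$ is symmetric: given $y'$, choose $y = y'$ and use reflexivity again.

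The only point needing care is that the auto-bisimulation setting (same model on both sides) is what makes the witness $y' = y$ available; there is no real obstacle beyond invoking reflexivity from Corollary~\ref{cor: HGEKS} (or simply Proposition~\ref{prop: JHDJH}(1) together with $id \leq E$).
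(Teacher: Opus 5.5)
Your proof is correct and follows the paper's argument: reflexivity of the greatest $(\Phi\setminus\{U\})$-auto-bisimulation supplies the witnesses $y'=y$ and $y=y'$ for the clauses with $\alpha = U$, so it is a fuzzy $\Phi$-auto-bisimulation and therefore the greatest one. You also spell out the reverse inequality $E' \leq E$ and the $(\Phi\setminus\{U\})$-image-finiteness of $\mM$, both of which the paper leaves implicit.
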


\begin{proof}
Let $Z$ be the greatest fuzzy $(\Phi \setminus \{U\})$-auto-bisimulation of $\mM$. Then $Z$ is reflexive and satisfies the conditions~\eqref{eq: FB2} and~\eqref{eq: FB3} with $\alpha = U$. In particular, when $\mMp = \mM$ and $\alpha = U$, the conditions~\eqref{eq: FB2} and~\eqref{eq: FB3} simplify to \mbox{$\E y'\ Z(x,x') \leq Z(y,y')$} and \mbox{$\E y\ Z(x,x') \leq Z(y,y')$}, respectively. For the former, we can take $y' = y$, while for the latter, we can take $y = y'$. Hence, $Z$ is a fuzzy $(\Phi \cup \{U\})$-auto-bisimulation of $\mM$. Consequently, since $Z$ is the greatest fuzzy $(\Phi \setminus \{U\})$-auto-bisimulation of $\mM$, it is also the greatest fuzzy $\Phi$-auto-bisimulation of~$\mM$. 
\myend
\end{proof}

A formula $\varphi$ is said to be {\em invariant} under fuzzy $\Phi$-bisimulations between finite fuzzy Kripke models if, 
for every finite fuzzy Kripke models $\mM$ and $\mMp$ and for every fuzzy $\Phi$-bisimulation $Z$ between $\mM$ and $ \mMp$, \mbox{$Z(x,x') \leq (\varphi^\mM(x) \fequiv \varphi^\mMp(x'))$} for all $x \in \DeltaM$ and $x' \in \DeltaMp$. 

\begin{theorem}[{cf.~\cite[Theorem~4.1]{FBSML}}] \label{theorem: invariance}
All formulas of \fKP are invariant under fuzzy $\Phi$-bisimulations between finite fuzzy Kripke models.
\end{theorem}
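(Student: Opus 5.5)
We prove the theorem by simultaneous structural induction on formulas and programs. For formulas $\varphi$ the claim is $Z(x,x') \leq (\varphi^\mM(x) \fequiv \varphi^\mMp(x'))$. For programs $\alpha$ it is the pair of ``zig-zag'' inequalities, stated in the sup-form that the finiteness of the models makes available:
\begin{eqnarray*}
&& Z(x,x') \fand \alpha^\mM(x,y) \leq \textstyle\bigvee_{y' \in \DeltaMp} (\alpha^\mMp(x',y') \fand Z(y,y')), \\
&& Z(x,x') \fand \alpha^\mMp(x',y') \leq \textstyle\bigvee_{y \in \DeltaM} (\alpha^\mM(x,y) \fand Z(y,y')).
\end{eqnarray*}
Since the models are finite and $\mL$ is linear, each supremum is attained, so these are equivalent to existential forms of~\eqref{eq: FB2} and~\eqref{eq: FB3}. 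It suffices to prove, for each formula, the one-sided inequality $Z(x,x') \fand \varphi^\mM(x) \leq \varphi^\mMp(x')$ together with its mirror image. By residuation~\eqref{fop: GDJSK 00}, these two inequalities give exactly $Z(x,x') \leq \varphi^\mM(x) \fto \varphi^\mMp(x')$ and the converse implication, hence the $\fequiv$ bound.

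\textbf{Base cases.} Constants $a$ are trivial. Propositions and nominals follow from~\eqref{eq: FB1}. Basic programs (actions, converses of actions, and $U$) satisfy the program claim directly by~\eqref{eq: FB2} and~\eqref{eq: FB3}, after converting to converse normal form. For the universal program in particular, the claim reduces to the conditions with $\alpha = U$, which are part of the definition when $U \in \Phi$.

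\textbf{Formula constructors.}
\begin{itemize}
\item For $\land$ and $\lor$, use monotonicity of $\fand$ and the bound $Z \fand (u \land v) \leq (Z \fand u) \land (Z \fand v)$; for $\lor$, distribute via~\eqref{eq: JHFAH}.
\item For $a \to \varphi$, use $z \fand (a \fto u) \leq a \fto (z \fand u)$, which holds since $a \fand z \fand (a \fto u) \leq z \fand u$.
\item For $\varphi \to a$, use the symmetric half of the inductive hypothesis: $z \fand (u \fto a) \fand v' \leq (u \fto a) \fand u \leq a$, where $z \fand v' \leq u$ comes from the mirrored inequality.
\item The full implication in the Heyting case is handled the same way.
\item For $\tuple{\alpha}\varphi$, distribute $Z(x,x')$ over the join by~\eqref{eq: JHFAH}. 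Apply the program hypothesis for $\alpha$ to obtain $y'$, then apply the formula hypothesis at $\tuple{y,y'}$.
\item For $[\alpha]\varphi$, fix $y'$. Apply the \emph{backward} program condition to obtain $y$ with $Z(x,x') \fand \alpha^\mMp(x',y') \leq \alpha^\mM(x,y) \fand Z(y,y')$. Then
\[ Z(x,x') \fand ([\alpha]\varphi)^\mM(x) \fand \alpha^\mMp(x',y') \leq (\alpha^\mM(x,y) \fto \varphi^\mM(y)) \fand \alpha^\mM(x,y) \fand Z(y,y') \leq Z(y,y') \fand \varphi^\mM(y) \leq \varphi^\mMp(y'), \]
and residuating and taking the meet over $y'$ finishes this case.
\end{itemize}

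\textbf{Program constructors.}
\begin{itemize}
\item For $\cup$, use~\eqref{eq: JHFAH}.
\item For $\circ$, chain the two hypotheses through the intermediate state, using associativity and~\eqref{eq: JHFAH}.
\item For the test $\varphi?$, the witness is $y' = x'$, and the formula hypothesis gives $Z \fand \varphi^\mM(x) \leq \varphi^\mMp(x') \fand Z$, using $Z \leq Z \fand Z$ in the Heyting case.
\end{itemize}

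\textbf{Main obstacle: the star.} The supremum in $(\alpha^*)^\mM$ ranges over paths of arbitrary length $n$. We first show by induction on $n$ that, for each fixed path $x = x_0, \ldots, x_n = y$,
\[ Z(x,x') \fand \textstyle\bigotimes_i \alpha^\mM(x_i,x_{i+1}) \leq \textstyle\bigvee_{y'} ((\alpha^*)^\mMp(x',y') \fand Z(y,y')). \]
The case $n = 0$ takes $y' = x'$. The inductive step applies the hypothesis for $\alpha$ to the last edge, then uses~\eqref{eq: JHFAH} together with the inequality $(\alpha^*)^\mMp(x',z') \fand \alpha^\mMp(z',y') \leq (\alpha^*)^\mMp(x',y')$. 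Finally, we take the join over all paths, again by~\eqref{eq: JHFAH}. Because left-continuity is all that is needed, no right-continuity assumption is required. The $\Box$ case is safe precisely because the program claim is stated in both directions.
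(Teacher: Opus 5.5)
Your proof is correct and follows the route the paper indicates: a simultaneous structural induction on formulas and programs, as in \cite[Lemma~4.2]{FBSML}, that lifts \eqref{eq: FB1}--\eqref{eq: FB3} from basic formulas and programs to arbitrary ones. The only technical difference is that you state the program claims in supremum form and handle $\alpha^*$ by induction on path length, so you never need the supremum over infinitely many paths to be attained, which the existential form of the program conditions would require checking in finite models.
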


Unlike \cite[Theorem~4.1]{FBSML}, this theorem restricts attention to finite models (rather than considering witnessed models) and concerns $\Phi$-bisimulations instead of bisimulations. Its proof is analogous to that of \cite[Lemma~4.2]{FBSML}, and the extensions required to handle nominals, converse programs, and the universal program are straightforward.

The following theorem states the Hennessy--Milner property of fuzzy $\Phi$-bisimulations. It is a counterpart of \cite[Theorem~5.2]{FBSML} and can be proved analogously.

\begin{theorem}[{cf.~\cite[Theorem~5.2]{FBSML}}] \label{theorem: HM}
Suppose $\fand$ is right-continuous. Let $\mM$ and $\mMp$ be finite fuzzy Kripke models. Then the fuzzy relation $Z$ on $\DeltaM$ defined by 
\[ Z(x,x') = \bigwedge\{\varphi^\mM(x) \fequiv \varphi^\mMp(x') \mid \textrm{$\varphi$ is a formula of \fKP}\} \] 
is the greatest fuzzy $\Phi$-bisimulation between $\mM$ and~$\mMp$.	
\end{theorem}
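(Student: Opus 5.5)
By Theorem~\ref{theorem: invariance}, every fuzzy $\Phi$-bisimulation $Z'$ between $\mM$ and $\mMp$ satisfies $Z'(x,x') \leq \varphi^\mM(x) \fequiv \varphi^\mMp(x')$ for every formula $\varphi$. Taking the infimum over all $\varphi$ gives $Z' \leq Z$. Hence it suffices to show that $Z$ is itself a fuzzy $\Phi$-bisimulation; it will then be the greatest one, which exists by Corollary~\ref{cor: JFBXJ}. The three conditions are checked one at a time, following the proof of \cite[Theorem~5.2]{FBSML}.

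Condition~\eqref{eq: FB1} is immediate, because each basic formula (a proposition, or a nominal when $O \in \Phi$) is itself a formula of \fKP. The same holds for the basic program $U$ when $U \in \Phi$. There \eqref{eq: FB2} reduces to $\E y'\ (Z(x,x') \fand 1 \leq Z(y,y'))$. Suppose it fails, i.e.\ $Z(y,y') < Z(x,x')$ for all $y'$. Since $\DeltaMp$ is finite, for each $y'$ pick a formula $\psi_{y'}$ with $Z(x,x') \not\leq \psi_{y'}^\mM(y) \fequiv \psi_{y'}^\mMp(y')$. Then treat $\tuple{U}(\ldots)$ exactly like the general case below.

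The main step is \eqref{eq: FB2} for a basic program $\alpha$ (and \eqref{eq: FB3} by symmetry, using $Z^-$). Argue by contradiction. Suppose that for some $x,y,x'$ and every $y' \in \DeltaMp$ we have $Z(x,x') \fand \alpha^\mM(x,y) \not\leq \alpha^\mMp(x',y') \fand Z(y,y')$. Linearity turns $\not\leq$ into $>$. Put $a = \alpha^\mM(x,y)$. For each $y'$, the value $Z(y,y')$ is an infimum over formulas, so right-continuity of $\fand$ gives
\[ \alpha^\mMp(x',y') \fand Z(y,y') = \bigwedge_\varphi \big(\alpha^\mMp(x',y') \fand (\varphi^\mM(y) \fequiv \varphi^\mMp(y'))\big). \]
Since $\mL$ is linear, this yields a single formula $\varphi_{y'}$ with $\alpha^\mMp(x',y') \fand (\varphi_{y'}^\mM(y) \fequiv \varphi_{y'}^\mMp(y')) < Z(x,x') \fand a$. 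Next, orient the biimplication: choose either $\psi_{y'} = b \to \varphi_{y'}$ or $\psi_{y'} = \varphi_{y'} \to b$, with the constant $b = \varphi_{y'}^\mM(y)$. This gives $\psi_{y'}^\mM(y) = 1$ and $\psi_{y'}^\mMp(y') = \varphi_{y'}^\mM(y) \fequiv \varphi_{y'}^\mMp(y')$, where linearity makes $\fequiv$ equal to one of the two implications. Finiteness of $\DeltaMp$ lets us form $\psi = \bigwedge_{y'} \psi_{y'}$, with $\psi^\mM(y) = 1$ and $\alpha^\mMp(x',y') \fand \psi^\mMp(y') < Z(x,x') \fand a$ for every $y'$. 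Now take $\chi = \tuple{\alpha}\psi$. Then $\chi^\mM(x) \geq a$, while $\chi^\mMp(x')$ is a finite join of values each $< Z(x,x') \fand a$, so $\chi^\mMp(x') < Z(x,x') \fand a$ by linearity.

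On the other hand, $Z(x,x') \leq \chi^\mM(x) \fto \chi^\mMp(x')$, so $Z(x,x') \fand \chi^\mM(x) \leq \chi^\mMp(x')$. Combined with $\chi^\mM(x) \geq a$, this gives $Z(x,x') \fand a \leq \chi^\mMp(x')$, a contradiction. For converse programs $\sigma^-$ the same argument works, since $\tuple{\sigma^-}$ is available when $C \in \Phi$.

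I expect the main obstacle to be the extraction of a single witnessing formula $\varphi_{y'}$ from the infimum. This is exactly where right-continuity and linearity are needed; without right-continuity the infimum could be approached but never pushed strictly below $Z(x,x') \fand a$. The orientation trick with constant implications is needed because full implication is unavailable outside Heyting algebras.
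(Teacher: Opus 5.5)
Your proof is correct and takes essentially the same route as the paper. The paper gives no proof of its own and defers to the Hennessy--Milner argument of \cite[Theorem~5.2]{FBSML}. That argument uses invariance for the upper bound, right-continuity of $\fand$ to push $\alpha^{\mMp}(x',y')\fand\cdot$ through the infimum defining $Z(y,y')$, and a diamond formula $\tuple{\alpha}\psi$ built from constant implications to separate $x$ from $x'$. Your contrapositive version inlines the fact that finite models are modally saturated, by collecting the finitely many witnesses $\psi_{y'}$ into one conjunction. Your separate paragraph on $U$ is unnecessary, since the general argument already covers $\alpha = U$.
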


In general, like \cite[Lemma~4.2 and Theorem~5.2]{FBSML}, Theorems~\ref{theorem: invariance} and~\ref{theorem: HM} can be generalized to infinite fuzzy Kripke models that are ``witnessed'' and ``modally saturated''. Here, we restrict attention to finite fuzzy Kripke models for simplicity.

%===============================================================================

\section{Minimizing fuzzy Kripke models while preserving global assertions}
\label{sec: min-ga}

Given a finite fuzzy Kripke model $\mM$, we call a fuzzy Kripke model $\mMp$ a {\em minimal reduction of $\mM$ preserving global assertions in \fKP} if:
\begin{itemize}
\item for every formula $\varphi$ of \fKP, $\varphi$ is a global assertion of $\mMp$ iff it is a global assertion of $\mM$;
\item if $\mMdp$ is a fuzzy Kripke model such that, for every formula $\varphi$ of \fKP, $\varphi$ is a global assertion of $\mMdp$ iff it is a global assertion of $\mM$, then $|\DeltaMdp| \geq |\DeltaMp|$.
\end{itemize}

In~\cite{minimization-by-fBS}, together with a colleague, we established results on minimizing finite fuzzy interpretations in fuzzy description logics while preserving fuzzy general concept inclusions. These results were formulated only for the G\"odel semantics, that is, for the case where $\fand$ is the G\"odel t-norm. In this section, we show that analogous results hold for any linear and complete residuated lattice where $\fand$ is right-continuous, when reformulated in terms of minimizing fuzzy Kripke models while preserving global assertions. This demonstrates that the minimization approach of~\cite{minimization-by-fBS} is not specific to the G\"odel semantics but applies uniformly across a broad class of fuzzy logics. The proofs of Lemma~\ref{lemma: HDAMA} and Theorem~\ref{theorem: HSJAO} presented here are adapted from those in~\cite[Section~4]{minimization-by-fBS}. The following definition is central to the minimization studied in this section.

\begin{Definition}[{cf.~\cite[Definition~3]{minimization-by-fBS}}] \label{def: HDJOS}\em
Let $\mM$ be a finite fuzzy Kripke model and let $E$ be its greatest fuzzy $\Phi$-auto-bisimulation. 
The {\em quotient fuzzy Kripke model} $\mME$ is specified as follows:
\begin{itemize}
\item $\DeltaME = \{E_x \mid x \in \DeltaM\}$, 
\item $s^{\mME} = E_{s^\mM}$ for $s \in \Ss$,
\item $p^{\mME}(E_x) = p^\mM(x)$ for $p \in \SP$ and \mbox{$x \in \DeltaM$}, 
\item $\sigma^{\mME}(E_x, E_y) = (E \circ \sigma^\mM \circ E)(x,y)$ for $\sigma \in \SA$ and $x,y \in \DeltaM$.
\myend
\end{itemize}
\end{Definition}

\begin{proposition}\label{prop: JHRLM}
The quotient fuzzy Kripke model $\mME$ of a finite fuzzy Kripke model $\mM$ with respect to its greatest fuzzy $\Phi$-auto-bisimulation~$E$ is well defined.
\end{proposition}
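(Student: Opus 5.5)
Well-definedness means that every clause of Definition~\ref{def: HDJOS} depends only on the equivalence class $E_x$ and not on the representative $x$. The plan is to use Corollary~\ref{cor: HGEKS}, which says $E$ exists and is a fuzzy equivalence relation, together with Lemma~\ref{lemma: JFSJA}, which says $E_x = E_{x'}$ iff $E(x,x') = 1$. So throughout we fix $x,x',y,y' \in \DeltaM$ with $E_x = E_{x'}$ and $E_y = E_{y'}$, i.e.\ $E(x,x') = E(y,y') = 1$.

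\textbf{The domain and named states.} The domain $\DeltaME$ is a set of fuzzy subsets of $\DeltaM$. It is finite and non-empty because $\DeltaM$ is. Each $s^{\mME} = E_{s^\mM}$ belongs to $\DeltaME$, so named states need no further check.

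\textbf{Propositions.} We must show $p^\mM(x) = p^\mM(x')$. Condition~\eqref{eq: FB1} for the basic formula $p$ gives
\[ 1 = E(x,x') \leq (p^\mM(x) \fequiv p^\mM(x')). \]
Hence both $p^\mM(x) \fto p^\mM(x') = 1$ and $p^\mM(x') \fto p^\mM(x) = 1$. By the residuation property~\eqref{fop: GDJSK 00} (with $1 \fand a = a$), this yields $p^\mM(x) \leq p^\mM(x')$ and the converse inequality, so the two values are equal.

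\textbf{Actions.} We must show $(E \circ \sigma^\mM \circ E)(x,y) = (E \circ \sigma^\mM \circ E)(x',y')$. This follows purely from Lemma~\ref{lemma: JFSJA}, without using bisimulation properties. Expand
\[ (E \circ \sigma^\mM \circ E)(x,y) = \bigvee_{z,w} E(x,z) \fand \sigma^\mM(z,w) \fand E(w,y), \]
using associativity of composition, which holds because $\fand$ distributes over arbitrary joins by~\eqref{eq: JHFAH}. By Lemma~\ref{lemma: JFSJA}, $E(x,z) = E(x',z)$ and $E(w,y) = E(w,y')$ for all $z,w$. So the two expressions agree term by term.

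\textbf{Main obstacle.} The only delicate point is the proposition clause, where one must extract equality from $E(x,x') = 1$ via~\eqref{eq: FB1}. It is routine once residuation is applied as above. I expect no real difficulty beyond stating these checks carefully.
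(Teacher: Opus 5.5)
Your proof is correct and has the same two-part structure as the paper's. The proposition clause is the same residuation argument, except that you cite \eqref{eq: FB1} for the basic formula $p$ directly, where the paper cites Theorem~\ref{theorem: invariance}. For the action clause, the paper proves the two inequalities separately using transitivity ($E \circ E \leq E$) together with $E(x,x') = E(y,y') = 1$. You instead use the third and fourth conditions of Lemma~\ref{lemma: JFSJA} to get term-by-term equality of the joins, which is equally valid and slightly more direct.
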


\begin{proof}
We need to prove that, for every $x,x',y,y' \in \DeltaM$, $p \in \SP$, and $\sigma \in \SA$, 
\begin{enumerate}[(a)]
\item\label{assertion: HGGWL 1} if $E_x = E_{x'}$, then $p^\mM(x) = p^\mM(x')$;
\item\label{assertion: HGGWL 2} if $E_x = E_{x'}$ and $E_y = E_{y'}$, then $(E \circ \sigma^\mM \circ E)(x,y) = (E \circ \sigma^\mM \circ E)(x',y')$.
\end{enumerate}

Consider the assertion~\eqref{assertion: HGGWL 1} and assume that $E_x = E_{x'}$. By Lemma~\ref{lemma: JFSJA}, we have $E(x,x') = 1$. By Theorem~\ref{theorem: invariance}, it follows that $(p^\mM(x) \fequiv p^\mM(x')) = 1$, which implies $p^\mM(x) = p^\mM(x')$.\footnote{By~\cite[Lemma~2.1]{FBSML}, $a \leq b$ iff $(a \fto b) = 1$.\label{footnote: HGJHS}} 

Consider the assertion~\eqref{assertion: HGGWL 2} and assume that $E_x = E_{x'}$ and $E_y = E_{y'}$. By Lemma~\ref{lemma: JFSJA}, we have $E(x,x') = E(y,y') = 1$. Since $E$ is a fuzzy equivalence relation, we have $E \circ E \leq E$. Consequently, 
\begin{eqnarray*}
&       (E \circ \sigma^\mM \circ E)(x,y) 
   \geq (E \circ E \circ \sigma^\mM \circ E \circ E)(x,y) \geq \\
&  \geq E(x,x') \fand (E \circ \sigma^\mM \circ E)(x',y') \fand E(y',y) 
   \geq (E \circ \sigma^\mM \circ E)(x',y').
\end{eqnarray*}
Similarly, $(E \circ \sigma^\mM \circ E)(x',y') \geq (E \circ \sigma^\mM \circ E)(x,y)$. Therefore, $(E \circ \sigma^\mM \circ E)(x,y) = (E \circ \sigma^\mM \circ E)(x',y')$, which completes the proof.
\myend
\end{proof}

\newcommand{\MforHFJSK}{
	\node (M) {$\mM$};
	\node (tau) [draw, node distance=1.0cm, below of=M] {$v_0\,(\tau)$};
	\node (v1)  [draw, node distance=2.8cm, right of=tau] {$v_1$};
	\node (v3) [draw, node distance=2.0cm, below of=tau] {$v_3\!:\!p_{\,0.8}$};
	\node (v2) [draw, node distance=2.8cm, left of=v3] {$v_2\!:\!p_{\,0.9}$};
	\node (v4) [draw, node distance=2.0cm, below of=v1] {$v_4\!:\!p_{\,0.8}$};
	\node (v5) [draw, node distance=2.0cm, below of=v3] {$v_5\!:\!p_{\,0.9}$};
	\node (v6) [draw, node distance=2.0cm, below of=v4] {$v_6\!:\!p_{\,0.9}$};
	\draw (tau) to node [left,xshift=-6pt]{\footnotesize{1}} (v2);	
	\draw (tau) to node [left]{\footnotesize{0.8}} (v3);	
	\draw (tau) to node [left,xshift=-6pt]{\footnotesize{0.9}} (v4);	
	\draw (v1) to node [left]{\footnotesize{0.2}} (v4);	
	\draw (v3) to node [left]{\footnotesize{0.7}} (v5);	
	\draw (v3) to node [left,xshift=-6pt]{\footnotesize{0.8}} (v6);	
	\draw (v4) to node [left]{\footnotesize{0.9}} (v6);	
    \draw (v5) edge[bend left=10] node[above]{\footnotesize{0.6}} (v6);
    \draw (v6) edge[bend left=10] node[below]{\footnotesize{0.8}} (v5);
    \draw (v2) edge[out=-130,in=-50,looseness=8] node[below]{\footnotesize{0.8}} (v2);
    \draw (v5) edge[out=-130,in=-50,looseness=8] node[below]{\footnotesize{0.8}} (v5);
}

\newcommand{\MPforHFJSK}{
	\node (M) {$\mME$\ for the product semantics:};
	\node (tau) [draw, node distance=1.4cm, below of=M] {$E_{v_0}\,(\tau)$};
	\node (v1)  [draw, node distance=2.8cm, right of=tau] {$E_{v_1}$};
	\node (v3) [draw, node distance=3.0cm, below of=tau] {$E_{v_3}\!:\!p_{\,0.8}$};
	\node (v3b) [node distance=1.8cm, below of=v3] {};
	\node (v2) [draw, node distance=2.8cm, left of=v3] {$E_{v_2}\!:\!p_{\,0.9}$};
	\node (v4) [draw, node distance=3.0cm, below of=v1] {$E_{v_4}\!:\!p_{\,0.8}$};
	\draw (tau) to node[left,pos=0.3,xshift=-1]{\footnotesize{1}} (v2);	
	\draw (tau) to node[left,pos=0.3,xshift=2]{$\tfrac{8}{9}$} (v3);	
	\draw (tau) to node[right,pos=0.65]{\footnotesize{0.9}} (v4);	
	\draw (v1) to node[right,xshift=-2]{\footnotesize{0.2}} (v4);	
	\draw (v1) to node[left,pos=0.6,yshift=5]{$\tfrac{8}{45}$} (v2);	
	\draw (v1) to node[left,pos=0.7,yshift=2]{$\tfrac{8}{45}$} (v3);	
	\draw (v3) edge[bend right=10] node[above,pos=0.35,yshift=-2]{\footnotesize{0.8}} (v2);	
	\draw (v2) edge[bend right=10] node[below,yshift=1]{$\tfrac{32}{45}$} (v3);	
	\draw (v3) edge[bend left=10] node[above,yshift=-2]{$\tfrac{32}{45}$} (v4);	
	\draw (v4) edge[bend left=10] node[below,yshift=1]{\footnotesize{0.8}} (v3);	
    \draw (v4) edge[bend left=48] node[above,yshift=-2]{\footnotesize{0.9}} (v2);
    \draw (v2) edge[bend right=60] node[below,yshift=1]{$\tfrac{32}{45}$} (v4);
    \draw (v2) edge[out=-60,in=-120,looseness=6] node[below,yshift=1]{\footnotesize{0.8}} (v2);
    \draw (v3) edge[out=-120,in=-60,looseness=6] node[left,xshift=-5,yshift=6]{$\tfrac{32}{45}$} (v3);
    \draw (v4) edge[out=-60,in=-120,looseness=6] node[below,yshift=1]{\footnotesize{0.8}} (v4);
}

\newcommand{\MLforHFJSK}{
	\node (M) {$\mME$\ for the \L{}ukasiewicz semantics:};
	\node (tau) [draw, node distance=1.4cm, below of=M] {$E_{v_0}\,(\tau)$};
	\node (v1)  [draw, node distance=2.8cm, right of=tau] {$E_{v_1}$};
	\node (v3) [draw, node distance=3.0cm, below of=tau] {$E_{v_3}\!:\!p_{\,0.8}$};
	\node (v3b) [node distance=2.2cm, below of=v3] {};
	\node (v2) [draw, node distance=2.8cm, left of=v3] {$E_{v_2}\!:\!p_{\,0.9}$};
	\node (v4) [draw, node distance=3.0cm, below of=v1] {$E_{v_4}\!:\!p_{\,0.8}$};
	\draw (tau) to node[above,yshift=-2]{\footnotesize{0.1}} (v1);	
	\draw (tau) to node[left,pos=0.3,xshift=-1]{\footnotesize{1}} (v2);	
	\draw (tau) to node[left,pos=0.3,xshift=2]{\footnotesize{0.9}} (v3);	
	\draw (tau) to node[right,pos=0.65]{\footnotesize{0.9}} (v4);	
	\draw (v1) to node[right,xshift=-2]{\footnotesize{0.2}} (v4);	
	\draw (v1) to node[left,pos=0.6,yshift=2]{\footnotesize{0.1}} (v2);	
	\draw (v1) to node[left,pos=0.7]{\footnotesize{0.1}} (v3);	
	\draw (v3) edge[bend right=10] node[above,pos=0.35,yshift=-2]{\footnotesize{0.8}} (v2);	
	\draw (v2) edge[bend right=10] node[below,yshift=1]{\footnotesize{0.7}} (v3);	
	\draw (v3) edge[bend left=10] node[above,yshift=-2]{\footnotesize{0.7}} (v4);	
	\draw (v4) edge[bend left=10] node[below,yshift=1]{\footnotesize{0.8}} (v3);	
    \draw (v4) edge[bend left=48] node[above,yshift=-2]{\footnotesize{0.9}} (v2);
    \draw (v2) edge[bend right=60] node[below,yshift=1]{\footnotesize{0.7}} (v4);
    \draw (v2) edge[out=-60,in=-120,looseness=6] node[below,yshift=1]{\footnotesize{0.8}} (v2);
    \draw (v3) edge[out=-120,in=-60,looseness=6] node[left,xshift=-5,yshift=7]{\footnotesize{0.7}} (v3);
    \draw (v4) edge[out=-60,in=-120,looseness=6] node[below,yshift=1]{\footnotesize{0.8}} (v4);
    \draw (tau) edge[out=120,in=60,looseness=6] node[left,xshift=-5,yshift=-8]{\footnotesize{0.1}} (tau);
}

\newcommand{\EPforHFJSK}{\begin{tabular}{|*{8}{>{\centering\arraybackslash}m{0.525cm}|}}
\hline
\multicolumn{8}{|c|}{The greatest fuzzy $\emptyset$-auto-bisimulation} \\
\multicolumn{8}{|c|}{of $\mM$ under the product semantics} \\
\hline
$E$   & $v_0$ & $v_1$ & $v_2$ & $v_3$ & $v_4$ & $v_5$ & $v_6$ \\ \hline
$v_0$ &   1   & $\scriptstyle\frac{8}{45}$ &   0   &   0   &   0   &   0   &   0   \\ \hline
$v_1$ & $\scriptstyle\frac{8}{45}$ &   1   &   0   &   0   &   0   &   0   &   0   \\ \hline
$v_2$ &   0   &   0   &   1   & $\scriptstyle\frac{8}{9}$ & $\scriptstyle\frac{8}{9}$ &   1   &   1   \\ \hline
$v_3$ &   0   &   0   & $\scriptstyle\frac{8}{9}$ &   1   & $\scriptstyle\frac{8}{9}$ & $\scriptstyle\frac{8}{9}$ & $\scriptstyle\frac{8}{9}$ \\ \hline
$v_4$ &   0   &   0   & $\scriptstyle\frac{8}{9}$ & $\scriptstyle\frac{8}{9}$ &   1   & $\scriptstyle\frac{8}{9}$ & $\scriptstyle\frac{8}{9}$ \\ \hline
$v_5$ &   0   &   0   &   1   & $\scriptstyle\frac{8}{9}$ & $\scriptstyle\frac{8}{9}$ &   1   &   1   \\ \hline
$v_6$ &   0   &   0   &   1   & $\scriptstyle\frac{8}{9}$ & $\scriptstyle\frac{8}{9}$ &   1   &   1   \\ \hline
\end{tabular}}

\newcommand{\ELforHFJSK}{\begin{tabular}{|*{8}{>{\centering\arraybackslash}m{0.525cm}|}}
\hline
\multicolumn{8}{|c|}{The greatest fuzzy $\emptyset$-auto-bisimulation} \\
\multicolumn{8}{|c|}{of $\mM$ under the \L{}ukasiewicz semantics} \\
\hline
$E$   & $v_0$ & $v_1$ & $v_2$ & $v_3$ & $v_4$ & $v_5$ & $v_6$ \\ \hline
$v_0$ &   1   & {\small 0.1} & {\small 0.1} & {\small 0.2} & {\small 0.2} & {\small 0.1} & {\small 0.1} \\ \hline
$v_1$ & {\small 0.1} &   1   & {\small 0.1} & {\small 0.2} & {\small 0.2} & {\small 0.1} & {\small 0.1} \\ \hline
$v_2$ & {\small 0.1} & {\small 0.1} &   1   & {\small 0.9} & {\small 0.9} &   1   &   1   \\ \hline
$v_3$ & {\small 0.2} & {\small 0.2} & {\small 0.9} &   1   & {\small 0.9} & {\small 0.9} & {\small 0.9} \\ \hline
$v_4$ & {\small 0.2} & {\small 0.2} & {\small 0.9} & {\small 0.9} &   1   & {\small 0.9} & {\small 0.9} \\ \hline
$v_5$ & {\small 0.1} & {\small 0.1} &   1   & {\small 0.9} & {\small 0.9} &   1   &   1   \\ \hline
$v_6$ & {\small 0.1} & {\small 0.1} &   1   & {\small 0.9} & {\small 0.9} &   1   &   1   \\ \hline
\end{tabular}}

\begin{figure}[th!]
\begin{center}
\begin{tabular}{|cc|}
\hline
\multicolumn{2}{|c|}{
\begin{tikzpicture}[->,>=stealth,auto,black]
\MforHFJSK
\end{tikzpicture}
} \\[0.5em]
\EPforHFJSK
&
\ELforHFJSK
\\
& \\
%\hline
\begin{tikzpicture}[->,>=stealth,auto,black]
\MPforHFJSK
\end{tikzpicture}
& 
\begin{tikzpicture}[->,>=stealth,auto,black]
\MLforHFJSK
\end{tikzpicture}
\\
\hline
\end{tabular}

\caption{A supplement for Example~\ref{example: HFJSK}.\label{fig: HFJSK}}
\end{center}
\end{figure}

We provide two examples below to illustrate the construction of a quotient fuzzy Kripke model.

\begin{example}\label{example: HFJSK}
Let $\SA = \{\sigma\}$, $\SP = \{p\}$, $\Ss = \{\tau\}$, $\So = \emptyset$, and $\Phi \subseteq \{\cup, \circ, *, \lor, \Box\}$. Let $\mM$ be the fuzzy Kripke model illustrated at the top of Figure~\ref{fig: HFJSK} and specified as follows:
\begin{itemize}
\item $\DeltaM = \{v_0,\ldots,v_6\}$, $\tau^\mM = v_0$, 
\item $p^\mM = \{v_2\!:\!0.9$, $v_3\!:\!0.8$, $v_4\!:\!0.8$, $v_5\!:\!0.9$, $v_6\!:\!0.9\}$, 
\item $\sigma^\mM = \{\tuple{v_0,v_2}\!:\!1$, $\tuple{v_0,v_3}\!:\!0.8$, $\tuple{v_0,v_4}\!:\!0.9$, $\tuple{v_1,v_4}\!:\!0.2$, $\tuple{v_2,v_2}\!:\!0.8$, $\tuple{v_3,v_5}\!:\!0.7$, $\tuple{v_3,v_6}\!:\!0.8$, $\tuple{v_4,v_6}\!:\!0.9$, $\tuple{v_5,v_5}\!:\!0.8$, $\tuple{v_5,v_6}\!:\!0.6$, $\tuple{v_6,v_5}\!:\!0.8\}$.
\end{itemize}
The greatest fuzzy $\Phi$-auto-bisimulations of $\mM$ under the product and \L{}ukasiewicz semantics (i.e., when the underlying residuated lattice is the product or \L{}ukasiewicz structure) are specified in the middle of Figure~\ref{fig: HFJSK}. They can be verified using our implementation~\cite{MinFKM-prog}, described in Section~\ref{section: experiments}. For each of these semantics, letting $E$ denote the greatest fuzzy $\Phi$-auto-bisimulations of $\mM$, the corresponding quotient fuzzy Kripke model $\mME$ is illustrated at the bottom of Figure~\ref{fig: HFJSK}. In both cases, we have $E_{v_2} = E_{v_5} = E_{v_6}$ and $|\DeltaME| = 5$, whereas $|\DeltaM| = 7$.
\myend
\end{example}

In contrast to the previous example, the following one presents a case in which the quotient fuzzy Kripke models with respect to the greatest fuzzy $\Phi$-auto-bisimulations under the product and \L{}ukasiewicz semantics have different sizes.

\newcommand{\MforHGDHJ}{
	\node (M) {$\mM$};
	\node (v0) [draw, node distance=1.0cm, below of=M] {$v_0\,(\tau)$};
	\node (bv0) [node distance=2.4cm, below of=v0] {};
	\node (v1)  [draw, node distance=1.0cm, left of=bv0] {$v_1$};
	\node (v2)  [draw, node distance=1.0cm, right of=bv0] {$v_2$};
	\node (bv1) [node distance=2.4cm, below of=v1] {};
	\node (bv2) [node distance=2.4cm, below of=v2] {};
	\node (v3)  [draw, node distance=1.0cm, left of=bv1] {$v_3\!:\!p_{\,0.5}$};
	\node (v4)  [draw, node distance=1.0cm, right of=bv1] {$v_4\!:\!p_{\,1.0}$};
	\node (v5)  [draw, node distance=1.0cm, right of=bv2] {$v_5\!:\!p_{\,1.0}$};
	\draw (v0) to node [left,pos=0.60]{\footnotesize{0.5}} (v1);	
	\draw (v0) to node [right,pos=0.60]{\footnotesize{0.7}} (v2);	
	\draw (v1) to node [left,pos=0.30]{\footnotesize{0.3}} (v3);	
	\draw (v1) to node [left]{\footnotesize{0.6}} (v4);	
	\draw (v2) to node [right,pos=0.30]{\footnotesize{0.6}} (v5);	
	\draw[dashed] (v4) to node [right,pos=0.225]{\footnotesize{0.8}} (v0);	
    \draw[dashed] (v3) edge[bend left=35] node[left]{\footnotesize{0.5}} (v0);
    \draw[dashed] (v5) edge[bend right=35] node[right]{\footnotesize{0.8}} (v0);
}

\begin{figure}[t]
\begin{center}
\begin{tabular}{|c|c|c|}
\hline
\begin{tikzpicture}[->,>=stealth,auto,black]
\MforHGDHJ
\end{tikzpicture}
&
\begin{tikzpicture}[->,>=stealth,auto,black]
	\node (M) {};
	\node (lM) [node distance=1.0cm, left of=M] {$\mME$ (product sem.)};
	\node (v0) [draw, node distance=1.0cm, below of=M] {$E_{v_0}\,(\tau)$};
	\node (bv0) [node distance=2.4cm, below of=v0] {};
	\node (v1)  [draw, node distance=1.0cm, left of=bv0] {$E_{v_1}$};
	\node (bv1) [node distance=2.4cm, below of=v1] {};
	\node (v3)  [draw, node distance=1.0cm, left of=bv1] {$E_{v_3}\!:\!p_{\,0.5}$};
	\node (v4)  [draw, node distance=1.0cm, right of=bv1] {$E_{v_4}\!:\!p_{\,1.0}$};
	\draw (v0) to node [left,pos=0.60]{\footnotesize{0.7}} (v1);	
	\draw (v1) to node [left,pos=0.30]{\footnotesize{0.3}} (v3);	
	\draw (v1) to node [left]{\footnotesize{0.6}} (v4);	
	\draw[dashed] (v4) to node [right]{\footnotesize{0.8}} (v0);	
    \draw[dashed] (v3) edge[bend left=35] node[left]{\footnotesize{0.5}} (v0);
\end{tikzpicture}
&
\begin{tikzpicture}[->,>=stealth,auto,black]
	\node (M) {};
	\node (lM) [node distance=0.7cm, left of=M] {$\mME$ (\L{}ukasiewicz sem.)};
	\node (v0) [draw, node distance=1.0cm, below of=M] {$E_{v_0}\,(\tau)$};
	\node (bv0) [node distance=2.4cm, below of=v0] {};
	\node (v1)  [draw, node distance=1.5cm, left of=bv0] {$E_{v_1}$};
	\node (v2)  [draw, node distance=1.5cm, right of=bv0] {$E_{v_2}$};
	\node (bv1) [node distance=2.4cm, below of=v1] {};
	\node (bv2) [node distance=2.4cm, below of=v2] {};
	\node (v3)  [draw, node distance=1.5cm, left of=bv1] {$E_{v_3}\!:\!p_{\,0.5}$};
	\node (v4)  [draw, node distance=1.5cm, right of=bv1] {$E_{v_4}\!:\!p_{\,1.0}$};
	\draw (v0) to node[left,xshift=1]{\footnotesize{0.5}} (v1);	
	\draw (v0) to node[right,xshift=-1]{\footnotesize{0.7}} (v2);	
    \draw (v0) edge[bend right=40] node[left,xshift=1]{\footnotesize{0.1}} (v3);
	\draw (v1) to node[left,pos=0.30,xshift=1]{\footnotesize{0.3}} (v3);	
	\draw (v1) edge[bend left=15] node[right,pos=0.12]{\footnotesize{0.6}} (v4);	
	\draw (v2) to node[left,pos=0.75,yshift=2]{\footnotesize{0.1}} (v3);	
	\draw (v2) edge[bend right=20] node[right,pos=0.55,xshift=-1]{\footnotesize{0.6}} (v4);	
    \draw[dashed] (v3) edge[bend left=30] node[right,pos=0.6]{\footnotesize{0.5}} (v0);
	\draw[dashed] (v4) to node[right,pos=0.55,xshift=-2]{\footnotesize{0.8}} (v0);
    \draw[dashed] (v4) edge[bend left=15] node[left,pos=0.15]{\footnotesize{0.1}} (v1);
    \draw[dashed] (v4) edge[bend right=20] node[right,pos=0.3]{\footnotesize{0.1}} (v2);
	\draw[dashed] (v4) to node[below]{\footnotesize{0.1}} (v3);	
\end{tikzpicture}
\\
\hline
\end{tabular}

\medskip

\begin{tabular}{cc}
\begin{tabular}{|*{7}{>{\centering\arraybackslash}m{0.66cm}|}}
\hline
\multicolumn{7}{|c|}{The greatest fuzzy $\emptyset$-auto-bisimulation} \\
\multicolumn{7}{|c|}{of $\mM$ under the product semantics} \\
\hline
$E$   & $v_0$ & $v_1$ & $v_2$ & $v_3$ & $v_4$ & $v_5$ \\ \hline
$v_0$ &   1   &   0   &   0   &   0   &   0   &   0   \\ \hline
$v_1$ &   0   &   1   &   1   &   0   &   0   &   0   \\ \hline
$v_2$ &   0   &   1   &   1   &   0   &   0   &   0   \\ \hline
$v_3$ &   0   &   0   &   0   &   1   &  0.5  &  0.5  \\ \hline
$v_4$ &   0   &   0   &   0   &  0.5  &   1   &   1   \\ \hline
$v_5$ &   0   &   0   &   0   &  0.5  &   1   &   1   \\ \hline
\end{tabular}
&
\begin{tabular}{|*{7}{>{\centering\arraybackslash}m{0.66cm}|}}
\hline
\multicolumn{7}{|c|}{The greatest fuzzy $\emptyset$-auto-bisimulation} \\
\multicolumn{7}{|c|}{of $\mM$ under the \L{}ukasiewicz semantics} \\
\hline
$E$   & $v_0$ & $v_1$ & $v_2$ & $v_3$ & $v_4$ & $v_5$ \\ \hline
$v_0$ &   1   &  0.3  &  0.3  &  0.3  &   0   &   0   \\ \hline
$v_1$ &  0.3  &   1   &  0.8  &  0.4  &   0   &   0   \\ \hline
$v_2$ &  0.3  &  0.8  &   1   &  0.4  &   0   &   0   \\ \hline
$v_3$ &  0.3  &  0.4  &  0.4  &   1   &  0.5  &  0.5  \\ \hline
$v_4$ &   0   &   0   &   0   &  0.5  &   1   &   1   \\ \hline
$v_5$ &   0   &   0   &   0   &  0.5  &   1   &   1   \\ \hline
\end{tabular}
\end{tabular}
\caption{A supplement for Example~\ref{example: HGDHJ}.\label{fig: HGDHJ}}
\end{center}
\end{figure}

\begin{example}\label{example: HGDHJ}
Let $\SA = \{\sigma,\varrho\}$, $\SP = \{p\}$, $\Ss = \{\tau\}$, $\So = \emptyset$, and $\Phi \subseteq \{\cup, \circ, *, \lor, \Box\}$. Let $\mM$ be the fuzzy Kripke model illustrated in Figure~\ref{fig: HGDHJ} and specified as follows:
\begin{itemize}
\item $\DeltaM = \{v_0,\ldots,v_5\}$, $\tau^\mM = v_0$, 
\item $p^\mM = \{v_3\!:\!0.5$, $v_4\!:\!1$, $v_5\!:\!1\}$, 
\item $\sigma^\mM = \{\tuple{v_0,v_1}\!:\!0.5$, $\tuple{v_0,v_2}\!:\!0.7$, $\tuple{v_1,v_3}\!:\!0.3$, $\tuple{v_1,v_4}\!:\!0.6$, $\tuple{v_2,v_5}\!:\!0.6\}$,
\item $\varrho^\mM = \{\tuple{v_3,v_0}\!:\!0.5$, $\tuple{v_4,v_0}\!:\!0.8$, $\tuple{v_5,v_0}\!:\!0.8\}$.
\end{itemize}
The greatest fuzzy $\Phi$-auto-bisimulations of $\mM$ under the product and \L{}ukasiewicz semantics are specified in Figure~\ref{fig: HGDHJ}. They can be verified using our implementation~\cite{MinFKM-prog}, described in Section~\ref{section: experiments}. For each of these semantics, letting $E$ denote the greatest fuzzy $\Phi$-auto-bisimulations of $\mM$, the corresponding quotient fuzzy Kripke model $\mME$ is illustrated in Figure~\ref{fig: HGDHJ}. Note that $|\DeltaME| = 4$ for the product semantics, whereas $|\DeltaME| = 5$ for the \L{}ukasiewicz semantics. 
\myend
\end{example}

The following proposition addresses the complexity of constructing the quotient fuzzy Kripke model $\mME$ when the greatest fuzzy $\Phi$-auto-bisimulation $E$ is given explicitly.

\begin{proposition}\label{prop: JFKSM}
Given a finite fuzzy Kripke model $\mM$ and its greatest fuzzy $\Phi$-auto-bisimulation~$E$, the quotient fuzzy Kripke model $\mME$ can be constructed in time $O(m n^2)$, where $n = |\Delta^\mM|$ and $m = |\{\tuple{x,\sigma,y} \in \DeltaM \times \SA \times \DeltaM : \sigma^\mM(x,y) > 0\}|$ (the number of non-zero transitions in $\mM$), assuming that the signature $\tuple{\SA,\SP,\Ss,\So}$ is fixed (i.e., has constant size).
\end{proposition}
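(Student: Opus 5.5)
We construct the components of $\mME$ in the order of Definition~\ref{def: HDJOS}, noting that all steps except the computation of $\sigma^{\mME}$ take at most $O(n^2)$ time, and that $O(n^2) \subseteq O(mn^2)$ whenever $m \geq 1$. (If $m = 0$, every $\sigma^{\mME}$ is identically zero and the remaining steps cost $O(n^2)$; we assume $m \geq 1$, as implicit in the bound.)

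\emph{Computing the domain.} By Lemma~\ref{lemma: JFSJA}, $E_x = E_y$ iff $E(x,y) = 1$. We scan the rows of $E$ and assign each $x \in \DeltaM$ a class index $[x]$: for each $x$ not yet assigned, we create a new class and put into it every $y$ with $E(x,y) = 1$. This is a single pass over the $n \times n$ matrix, so it costs $O(n^2)$, and it yields representatives $r_1,\ldots,r_k$, with $k \leq n$. Setting $s^{\mME}$ and $p^{\mME}(E_{r_i}) = p^\mM(r_i)$ costs $O(n)$ for the fixed signature; these values are well defined by Proposition~\ref{prop: JHRLM}.

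\emph{Computing the transitions.} For each $\sigma \in \SA$, we need $(E \circ \sigma^\mM \circ E)(x,y)$ only for pairs of representatives. We split the composition into two products, both evaluated only over the non-zero entries of $\sigma^\mM$. First we compute $T = \sigma^\mM \circ E$, an $n \times n$ matrix:
\[
   T(u,y) = \bigvee_{\tuple{u,v}:\,\sigma^\mM(u,v) > 0} \sigma^\mM(u,v) \fand E(v,y).
\]
For each of the $m_\sigma$ non-zero transitions $\tuple{u,v}$ and each $y$, we update $T(u,y)$ in constant time, for a total of $O(m_\sigma n)$. Next we compute $(E \circ T)(x,y) = \bigvee_u E(x,u) \fand T(u,y)$ for representatives $x$ and all $y$. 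Row $u$ of $T$ is non-zero only if $u$ has an outgoing $\sigma$-transition, so there are at most $\min(m_\sigma,n)$ relevant $u$. This gives $O(k \cdot m_\sigma \cdot n) \subseteq O(m_\sigma n^2)$. Summing over the constantly many $\sigma$ gives $O(mn^2)$. Values with $\sigma^\mM(u,v) = 0$ contribute $0$ to the join, since $0 \fand a = 0$; this is why skipping them is safe.

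\emph{Main obstacle.} The delicate step is the second product. A naive triple loop over $x,u,y$ costs $O(n^3)$, which is not bounded by $O(mn^2)$ when $m < n$. The fix is to restrict $u$ to the sources of non-zero $\sigma$-edges, as above. Alternatively, we can fold both products into one loop: for each non-zero edge $\tuple{u,v}$ and each pair of representatives $(x,y)$, we update the entry for $\tuple{[x],[y]}$ with $E(x,u) \fand \sigma^\mM(u,v) \fand E(v,y)$, which gives $O(mk^2) \subseteq O(mn^2)$ directly. This second formulation is the one I would present, since it makes the bound immediate.
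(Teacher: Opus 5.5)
Your proposal is correct and takes the same approach as the paper. Both build the domain in $O(n^2)$ from class representatives, using Lemma~\ref{lemma: JFSJA}, fill in the named-state and proposition data in $O(n)$, and obtain the transitions in $O(mn^2)$ by an edge-driven computation of $E \circ \sigma^\mM \circ E$. You also spell out two things the paper's sketch leaves implicit: why an $O(n^3)$ triple loop must be avoided, and that $m \geq 1$ is needed for the $O(n^2)$ domain step to fit inside $O(mn^2)$.
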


\begin{proof}
Assuming that the elements of $\Delta^\mM$ are enumerated, each fuzzy equivalence class $E_x$ in $\DeltaMp$ can be represented by the corresponding $x$ with the smallest index. Thus, $\DeltaMp$ can be constructed in time $O(n^2)$, and $p^\mMp$, for $p \in \SP$, can be constructed in time $O(n)$.\footnote{Using the union-find technique, $\DeltaMp$ can be constructed in time $O(n\,\alpha(n))$, where $\alpha$ is the inverse Ackermann function.} The total time needed to construct the fuzzy relations $\sigma^\mMp$, for all $\sigma \in \SA$, is $O(m n^2)$. Therefore, $\mME$ can be constructed in time $O(m n^2)$.
\myend
\end{proof}

We will need the following lemma.

\begin{lemma}\label{lemma: JHLNA}
Let $\mM$ be a finite fuzzy Kripke model and let $E$ be its greatest fuzzy $\Phi$-auto-bisimulation. Then, 
\begin{enumerate}
\item if $O \in \Phi$, then $s^{\mME}(E_x) = s^\mM(x)$ for all $s \in \So$ and \mbox{$x \in \DeltaM$}, 
\item if $C \in \Phi$, then $(\sigma^{-})^{\mME}(E_x, E_y) = (E \circ (\sigma^{-})^\mM \circ E)(x,y)$ for all $\sigma \in \SA$ and $x,y \in \DeltaM$,
\item if $U \in \Phi$, then $U^{\mME}(E_x, E_y) = (E \circ U^\mM \circ E)(x,y)$ for all $x,y \in \DeltaM$.
\end{enumerate}
\end{lemma}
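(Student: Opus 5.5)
We will prove the three assertions separately, each by unfolding Definition~\ref{def: HDJOS} and the semantics in Table~\ref{table: interpretation}, using that $E$ is a fuzzy equivalence relation (Corollary~\ref{cor: HGEKS}) together with Lemma~\ref{lemma: JFSJA}.

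For the first assertion, note that $s^\mME = E_{s^\mM}$, so $s^\mME(E_x) = 1$ iff $E_x = E_{s^\mM}$, which by Lemma~\ref{lemma: JFSJA} holds iff $E(x,s^\mM) = 1$. We must show that this happens iff $x = s^\mM$. One direction follows from reflexivity. For the converse, we apply condition~\eqref{eq: FB1} of Definition~\ref{def: HDJAK} to the basic formula $s$, which is a basic formula because $O \in \Phi$. This gives $E(x,s^\mM) \leq (s^\mM(x) \fequiv s^\mM(s^\mM)) = (s^\mM(x) \fequiv 1)$. So if $E(x,s^\mM) = 1$, then $s^\mM(x) = 1$, i.e., $x = s^\mM$. (Alternatively, Theorem~\ref{theorem: invariance} gives the same conclusion.) Hence both sides take the value $1$ exactly when $x = s^\mM$ and $0$ otherwise.

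For the second assertion, we compute
\[
(\sigma^-)^\mME(E_x,E_y) = \sigma^\mME(E_y,E_x) = (E\circ\sigma^\mM\circ E)(y,x).
\]
We then rewrite the composition as a supremum $\bigvee_{z,w} E(y,z)\fand\sigma^\mM(z,w)\fand E(w,x)$. By symmetry of $E$ and commutativity of $\fand$, this equals $\bigvee_{w,z} E(x,w)\fand(\sigma^-)^\mM(w,z)\fand E(z,y)$, which is $(E\circ(\sigma^-)^\mM\circ E)(x,y)$.

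For the third assertion, the left-hand side is $1$ by definition. On the right-hand side, $(E\circ U^\mM\circ E)(x,y) = \bigvee_{z,w} E(x,z)\fand 1\fand E(w,y)$. Taking $z=x$ and $w=y$ and using reflexivity gives a term equal to $1$, so the supremum is $1$.

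No step seems to be a real obstacle. The only point that needs care is the first assertion, where we must make sure that the nominal is distinguished by $E$; this requires $O \in \Phi$ so that~\eqref{eq: FB1} applies to $s$.
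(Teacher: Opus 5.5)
Your proposal is correct and follows essentially the same route as the paper: for nominals you unfold Definition~\ref{def: HDJOS} with Lemma~\ref{lemma: JFSJA}, for converses you rewrite the composition using the symmetry of $E$, and for $U$ you use reflexivity to obtain the value $1$. The only cosmetic difference is that you invoke condition~\eqref{eq: FB1} directly for the nominal $s$, where the paper cites Theorem~\ref{theorem: invariance}; this is equally valid, since $s$ is a basic formula when $O \in \Phi$.
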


\begin{proof}
Consider the first assertion and assume that $O \in \Phi$. By definition and Lemma~\ref{lemma: JFSJA}, treating $s \in \So$ as a formula, we have 
\[ s^{\mME}(E_x) = (1 \textrm{ if } E_x = s^{\mME} \textrm{ else } 0) = (1 \textrm{ if } E_x = E_{s^\mM} \textrm{ else } 0) 
   = (1 \textrm{ if } E(x, s^\mM) = 1 \textrm{ else } 0). \] 
If $x = s^\mM$, then $E(x, s^\mM) = 1$, and therefore $s^{\mME}(E_x) = 1 = s^\mM(x)$. Otherwise, by Theorem~\ref{theorem: invariance}, 
\[ E(x, s^\mM) \leq (s^\mM(x) \fequiv s^\mM(s^\mM)) = s^\mM(x) = 0, \] 
and hence $s^{\mME}(E_x) = 0 = s^\mM(x)$.

Consider the second assertion and assume that $C \in \Phi$. We have 
\[ (\sigma^{-})^{\mME}(E_x, E_y) =  \sigma^{\mME}(E_y, E_x) = (E \circ \sigma^\mM \circ E)(y,x) = (E \circ \sigma^\mM \circ E)^-(x,y) = \]
\[ = (E^- \circ (\sigma^-)^\mM \circ E^-)(x,y) = (E \circ (\sigma^-)^\mM \circ E)(x,y). \]

Consider the third assertion and assume that $U \in \Phi$. We have 
\[ U^{\mME}(E_x, E_y) = 1 = U^\mM(x,y) \leq (E \circ U^\mM \circ E)(x,y). \]
Hence, $U^{\mME}(E_x, E_y) = (E \circ U^\mM \circ E)(x,y)$. 
\myend
\end{proof}

We now present a key lemma for the main result of this section.

\begin{lemma}\label{lemma: HDAMA}
Let $\mM$ be a finite fuzzy Kripke model and let $E$ be its greatest fuzzy $\Phi$-auto-bisimulation. Let \mbox{$Z: \DeltaM \times \DeltaME \to L$} be specified by \mbox{$Z(x,E_{x_2}) = E(x,x_2)$}. Then $Z$ is a fuzzy $\Phi$-bisimulation between $\mM$ and $\mME$. 
\end{lemma}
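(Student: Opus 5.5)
First, $Z$ is well defined: if $E_{x_2}=E_{x_2'}$ then $E(x,x_2)=E(x,x_2')$ by Lemma~\ref{lemma: JFSJA}. We then verify conditions \eqref{eq: FB1}--\eqref{eq: FB3} of Definition~\ref{def: HDJAK} for every basic formula and every basic program w.r.t.~$\Phi$.

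For \eqref{eq: FB1} with $\varphi = p\in\SP$, we have $p^{\mME}(E_{x_2}) = p^\mM(x_2)$. Since $E$ is a $\Phi$-auto-bisimulation, $E(x,x_2)\le (p^\mM(x)\fequiv p^\mM(x_2))$, which is exactly what is required. For a nominal $s$ (when $O\in\Phi$), the first assertion of Lemma~\ref{lemma: JHLNA} gives $s^{\mME}(E_{x_2}) = s^\mM(x_2)$, and the same argument applies using Theorem~\ref{theorem: invariance}.

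For \eqref{eq: FB2} and \eqref{eq: FB3}, every basic program $\alpha$ (an action $\sigma$, a converse $\sigma^-$, or $U$) satisfies $\alpha^{\mME}(E_x,E_y) = (E\circ\alpha^\mM\circ E)(x,y)$. For actions this holds by definition; for $\sigma^-$ and $U$ it is the second and third assertions of Lemma~\ref{lemma: JHLNA}. So both conditions can be treated uniformly for such $\alpha$.

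\emph{Condition \eqref{eq: FB2}.} Fix $x,y\in\DeltaM$ and $E_{x_2}$, and choose the witness $E_y$. We need
\[
E(x,x_2)\fand\alpha^\mM(x,y) \le (E\circ\alpha^\mM\circ E)(x_2,y)\fand E(y,y).
\]
Since $E(y,y)=1$, the right-hand side is at least $E(x_2,x)\fand\alpha^\mM(x,y)\fand E(y,y)$, by taking the intermediate points $x$ and $y$ in the composition. This lower bound equals the left-hand side by symmetry of $E$.

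\emph{Condition \eqref{eq: FB3}.} This is the main obstacle. Fix $x$, $E_{x_2}$ and $E_{y_2}$. We must find $y\in\DeltaM$ with
\[
E(x,x_2)\fand (E\circ\alpha^\mM\circ E)(x_2,y_2) \le \alpha^\mM(x,y)\fand E(y,y_2).
\]
Because $\mM$ is finite and $\mL$ is linear, the supremum in the composition is attained. So there are $u,v$ with
\[
(E\circ\alpha^\mM\circ E)(x_2,y_2) = E(x_2,u)\fand\alpha^\mM(u,v)\fand E(v,y_2).
\]
By transitivity, $E(x,x_2)\fand E(x_2,u)\le E(x,u)$, so the left-hand side is bounded by $E(x,u)\fand\alpha^\mM(u,v)\fand E(v,y_2)$.

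Next, apply \eqref{eq: FB2} for $E$ itself, which is an auto-bisimulation, using the pair $(u,x)$ with $E(u,x)=E(x,u)$ and the transition from $u$ to $v$. This yields $y$ such that
\[
E(u,x)\fand\alpha^\mM(u,v)\le\alpha^\mM(x,y)\fand E(v,y).
\]
Multiplying both sides by $E(v,y_2)$ and using transitivity again, $E(y,v)\fand E(v,y_2)\le E(y,y_2)$, we obtain the required bound $\alpha^\mM(x,y)\fand E(y,y_2)$. Monotonicity and commutativity of $\fand$ justify each step, so this $y$ witnesses \eqref{eq: FB3}. Note that no right-continuity is needed here.
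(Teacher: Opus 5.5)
Your proof is correct and follows essentially the same route as the paper: \eqref{eq: FB1} from $E$'s own condition (with Lemma~\ref{lemma: JHLNA} for nominals), the uniform identity $\alpha^{\mME}(E_x,E_y)=(E\circ\alpha^\mM\circ E)(x,y)$ for all basic programs, and for \eqref{eq: FB3} a maximizer of the finite composition followed by $E$'s own bisimulation condition and transitivity. The only real difference is in \eqref{eq: FB2}. There you take $E_y$ as the witness and bound the composition directly, which uses only reflexivity and symmetry of $E$. The paper instead first applies $E$'s forth condition to obtain some $y_2$ and then uses $E_{y_2}$, so your version of that step is slightly more elementary.
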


\begin{proof}
We need to prove the conditions~\eqref{eq: FB1}--\eqref{eq: FB3} for $\mMp = \mME$. 

Consider the condition~\eqref{eq: FB1} with $x' = E_{x_2}$. Using Theorem~\ref{theorem: invariance}, we have 
\[ Z(x, E_{x_2}) = E(x,x_2) \leq (\varphi^\mM(x) \fequiv \varphi^\mM(x_2)) = (\varphi^\mM(x) \fequiv \varphi^{\mME}(E_{x_2})), \]
for every basic formula $\varphi$ w.r.t.~$\Phi$, since $\varphi^\mM(x_2) = \varphi^{\mME}(E_{x_2})$. This latter equality follows from Definition~\ref{def: HDJOS} when $\varphi$ is some $p \in \SP$, and from Lemma~\ref{lemma: JHLNA} when $O \in \Phi$ and $\varphi$ is some $s \in \So$. 

Consider the condition~\eqref{eq: FB2}, where $\alpha$ is a basic program w.r.t.~$\Phi$. Let $x' = E_{x_2}$. Since $E$ is the greatest fuzzy $\Phi$-auto-bisimulation of $\mM$, there exists $y_2 \in \DeltaM$ such that 
\[ E(x,x_2) \fand \alpha^\mM(x,y) \leq \alpha^\mM(x_2,y_2) \fand E(y,y_2). \] 
Since $E(x_2,x_2) = E(y_2, y_2) = 1$, it follows that 
\begin{equation}\label{eq: JHKSJ}
E(x,x_2) \fand \alpha^\mM(x,y) \leq (E \circ \alpha^\mM \circ E)(x_2,y_2) \fand E(y,y_2).
\end{equation}
By Definition~\ref{def: HDJOS} and Lemma~\ref{lemma: JHLNA}, we have $(E \circ \alpha^\mM \circ E)(x_2,y_2) = \alpha^{\mME}(E_{x_2},E_{y_2})$. This together with~\eqref{eq: JHKSJ} gives
\[ Z(x,E_{x_2}) \fand \alpha^\mM(x,y) \leq \alpha^{\mME}(E_{x_2},E_{y_2}) \fand Z(y,E_{y_2}). \]
So, we can choose $y' = E_{y_2}$ to obtain 
\[ Z(x,x') \fand \alpha^\mM(x,y) \leq \alpha^{\mME}(x', y') \fand Z(y,y'). \]
		
Consider the condition~\eqref{eq: FB3}, where $\alpha$ is a basic program w.r.t.~$\Phi$. Let $x' = E_{x_2}$ and $y' = E_{y_2}$ with $x_2, y_2 \in \DeltaM$. 
By Definition~\ref{def: HDJOS} and Lemma~\ref{lemma: JHLNA}, there exist $x_3, y_3 \in \DeltaM$ such that 
\[ \alpha^{\mME}(E_{x_2}, E_{y_2}) = E(x_2,x_3) \fand \alpha^\mM(x_3,y_3) \fand E(y_3, y_2). \]
Since $E$ is the greatest fuzzy $\Phi$-auto-bisimulation of $\mM$, there exists $y \in \DeltaM$ such that 
\[ E(x,x_3) \fand \alpha^\mM(x_3,y_3) \leq \alpha^\mM(x,y) \fand E(y,y_3). \]
Hence, we have 
\begin{eqnarray*}
& Z(x,E_{x_2}) \fand \alpha^{\mME}(E_{x_2},E_{y_2}) 
  = E(x,x_2) \fand \alpha^{\mME}(E_{x_2},E_{y_2}) = \\
& = E(x,x_2) \fand E(x_2,x_3) \fand \alpha^\mM(x_3,y_3) \fand E(y_3, y_2) 
  \leq E(x,x_3) \fand \alpha^\mM(x_3,y_3) \fand E(y_3, y_2) \leq \\
& \leq \alpha^\mM(x,y) \fand E(y,y_3) \fand E(y_3, y_2)
  \leq \alpha^\mM(x,y) \fand E(y,y_2)
  = \alpha^\mM(x,y) \fand Z(y,E_{y_2}).
\end{eqnarray*}
In short, we have 
\[ Z(x,x') \fand \alpha^{\mME}(x',y') \leq \alpha^\mM(x,y) \fand Z(y,y'), \]
which completes the proof.
\myend
\end{proof}

\begin{corollary}\label{cor: JFWKA}
Let $\mM$ be a finite fuzzy Kripke model and let $E$ be its greatest fuzzy $\Phi$-auto-bisimulation.
Then $\mM$ and $\mME$ are $\Phi$-bisimilar.
\end{corollary}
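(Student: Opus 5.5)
The plan is to take the fuzzy relation $Z$ from Lemma~\ref{lemma: HDAMA}, namely $Z(x,E_{x_2}) = E(x,x_2)$, and check that it witnesses $\Phi$-bisimilarity. By Lemma~\ref{lemma: HDAMA}, $Z$ is already a fuzzy $\Phi$-bisimulation between $\mM$ and $\mME$. By Definition~\ref{def: HDJOS}, being $\Phi$-bisimilar requires in addition only that $Z(s^\mM, s^{\mME}) = 1$ for every $s \in \Ss$.

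For this, I would use Definition~\ref{def: HDJOS} directly: $s^{\mME} = E_{s^\mM}$. Hence
\[ Z(s^\mM, s^{\mME}) = Z(s^\mM, E_{s^\mM}) = E(s^\mM, s^\mM) = 1, \]
where the last equality holds because $E$ is reflexive, being a fuzzy equivalence relation by Corollary~\ref{cor: HGEKS}.

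One point needs care: $Z$ has to be well defined, since $Z(x,E_{x_2})$ must not depend on which representative $x_2$ is chosen for the class. If $E_{x_2} = E_{x_2'}$, then Lemma~\ref{lemma: JFSJA} gives $E(x,x_2) = E(x,x_2')$, so $Z$ is well defined. The paper presupposes this already in Lemma~\ref{lemma: HDAMA}. There is no real obstacle here; the corollary follows immediately.
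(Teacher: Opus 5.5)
Your proof is correct and is essentially the paper's proof. You take the relation $Z$ from Lemma~\ref{lemma: HDAMA} and check that $Z(s^\mM, s^{\mME}) = Z(s^\mM, E_{s^\mM}) = E(s^\mM, s^\mM) = 1$. One small citation slip: the requirement $Z(s^\mM, s^{\mME}) = 1$ comes from the definition of $\Phi$-bisimilarity (Definition~\ref{def: HDJAK}), not from Definition~\ref{def: HDJOS}, which you correctly use afterwards for $s^{\mME} = E_{s^\mM}$.
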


\begin{proof}
Let $Z$ be defined as in Lemma~\ref{lemma: HDAMA}. By this lemma, $Z$ is a fuzzy $\Phi$-bisimulation between $\mM$ and $\mME$. 
For every $s \in \Ss$, we have $Z(s^\mM, s^\mME) = Z(s^\mM, E_{s^\mM}) = E(s^\mM, s^\mM) = 1$. Therefore, $\mM$ and $\mME$ are $\Phi$-bisimilar.
\myend
\end{proof}

\begin{corollary}\label{cor: JFWKA2}
Let $\mM$ be a finite fuzzy Kripke model and let $E$ be its greatest fuzzy $\Phi$-auto-bisimulation.
Then a formula $\varphi$ of \fKP is a global assertion of $\mM$ iff it is a global assertion of~$\mME$.
\end{corollary}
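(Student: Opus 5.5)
The plan is to use the fuzzy $\Phi$-bisimulation $Z$ from Lemma~\ref{lemma: HDAMA}, defined by $Z(x,E_{x_2}) = E(x,x_2)$, together with the invariance result of Theorem~\ref{theorem: invariance}. Both $\mM$ and $\mME$ are finite: $\DeltaME$ has at most $|\DeltaM|$ elements and the signature is unchanged. Hence Theorem~\ref{theorem: invariance} applies to $Z$ and yields, for every formula $\varphi$ of \fKP and all $x, x_2 \in \DeltaM$,
\[ E(x,x_2) \leq (\varphi^\mM(x) \fequiv \varphi^{\mME}(E_{x_2})). \]

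The key step is to specialize to $x_2 = x$. Then $E(x,x) = 1$ by reflexivity of $E$ (Corollary~\ref{cor: HGEKS}), so $(\varphi^\mM(x) \fequiv \varphi^{\mME}(E_x)) = 1$. Since $(a \fto b) = 1$ iff $a \leq b$ (footnote~\ref{footnote: HGJHS}), both implications equal $1$ and therefore $\varphi^\mM(x) = \varphi^{\mME}(E_x)$ for every $x \in \DeltaM$.

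The claim then follows because the map $x \mapsto E_x$ is surjective onto $\DeltaME$ by Definition~\ref{def: HDJOS}. If $\varphi^\mM(x) = 1$ for all $x$, then every state of $\mME$ has the form $E_x$ and satisfies $\varphi^{\mME}(E_x) = \varphi^\mM(x) = 1$. Conversely, if $\varphi^{\mME} \equiv 1$, then $\varphi^\mM(x) = \varphi^{\mME}(E_x) = 1$ for each $x$.

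I see no real obstacle here. The only point to verify is that Theorem~\ref{theorem: invariance} really covers $Z$ as a bisimulation between two \emph{different} finite models, which it does. No right-continuity assumption is needed.
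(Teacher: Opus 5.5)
Your proposal is correct and follows essentially the same route as the paper. Both use the bisimulation $Z$ from Lemma~\ref{lemma: HDAMA} with $Z(x,E_x)=E(x,x)=1$, apply Theorem~\ref{theorem: invariance} to get $\varphi^\mM(x)=\varphi^{\mME}(E_x)$, and conclude via surjectivity of $x \mapsto E_x$.
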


\begin{proof}
Let $Z$ be defined as in Lemma~\ref{lemma: HDAMA}. By this lemma, $Z$ is a fuzzy $\Phi$-bisimulation between $\mM$ and $\mME$. 
For every $x \in \DeltaM$, we have $Z(x, E_x) = E(x,x) = 1$. That is, for every $x \in \DeltaM$, there exists $x' \in \mME$ such that $Z(x,x') = 1$; and conversely, for every $x' \in \mME$, there exists $x \in \DeltaM$ such that $Z(x,x') = 1$.   

Suppose that $\varphi$ is a global assertion of $\mM$. Consider an arbitrary $x' \in \DeltaME$. There exists $x \in \DeltaM$ such that $Z(x,x') = 1$. By Theorem~\ref{theorem: invariance}, it follows that $\varphi^\mME(x') = \varphi^\mM(x) = 1$. Therefore, $\varphi$ is also a global assertion of $\mME$. 

Conversely, suppose that $\varphi$ is a global assertion of $\mME$. Consider an arbitrary $x \in \DeltaM$. There exists $x' \in \DeltaME$ such that $Z(x,x') = 1$. By Theorem~\ref{theorem: invariance}, it follows that $\varphi^\mM(x) = \varphi^\mME(x') = 1$. Therefore, $\varphi$ is also a global assertion of $\mM$. 
\myend
\end{proof}

We give below the main result of this section.

\begin{theorem}\label{theorem: HSJAO}
Suppose $\fand$ is right-continuous and $\lor \in \Phi$. Let $\mM$ be a finite fuzzy Kripke model and let $E$ be its greatest fuzzy $\Phi$-auto-bisimulation. Then $\mME$ is a minimal reduction of $\mM$ preserving global assertions in \fKP.
\end{theorem}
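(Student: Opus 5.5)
The first condition of the definition is already Corollary~\ref{cor: JFWKA2}, so the work is minimality. Let $\mMdp$ be any fuzzy Kripke model whose global assertions in \fKP coincide with those of $\mM$. We must show $|\DeltaMdp| \geq |\DeltaME|$. I will build a surjection from a subset of $\DeltaMdp$ onto $\DeltaME$, or equivalently an injection $\DeltaME \to \DeltaMdp$. We may assume $\mMdp$ is finite, since otherwise the inequality is trivial. This puts Theorems~\ref{theorem: invariance} and~\ref{theorem: HM} at our disposal.

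The plan is to show two things. First, for every $x \in \DeltaM$ there is $x'' \in \DeltaMdp$ with $Z(x,x'') = 1$, where $Z$ is the greatest fuzzy $\Phi$-bisimulation between $\mM$ and $\mMdp$. Second, no $x'' \in \DeltaMdp$ is $Z$-related with degree $1$ to two states $x, y$ with $E_x \neq E_y$. The second point is easy. If $Z(x,x'') = Z(y,x'') = 1$, then $Z \circ Z^-$ is a fuzzy $\Phi$-auto-bisimulation of $\mM$ by Proposition~\ref{prop: JHDJH}. Hence $E(x,y) \geq (Z\circ Z^-)(x,y) = 1$, and so $E_x = E_y$ by Lemma~\ref{lemma: JFSJA}. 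Choosing, for each class $E_x$, one such $x''$ then gives an injection $\DeltaME \to \DeltaMdp$.

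The main obstacle is the first point: producing, for each $x$, a state $x''$ with $Z(x,x'')=1$. By Theorem~\ref{theorem: HM} (right-continuity is used here), $Z(x,x'') = \bigwedge_\varphi (\varphi^\mM(x) \fequiv \varphi^\mMdp(x''))$. I will argue by contradiction. Suppose that for each $x''$ in the finite set $\DeltaMdp$ there is a formula $\varphi_{x''}$ with $\varphi_{x''}^\mM(x) \neq \varphi_{x''}^\mMdp(x'')$, say with values $a = \varphi_{x''}^\mM(x)$ and $b = \varphi_{x''}^\mMdp(x'')$. Linearity lets us turn this into a formula $\psi_{x''}$ with $\psi_{x''}^\mM(x) = 1$ and $\psi_{x''}^\mMdp(x'') < 1$. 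If $b < a$, take $\psi_{x''} = (a \to \varphi_{x''})$; its value at $x''$ is $a \fto b < 1$ (footnote~\ref{footnote: HGJHS}). If $a < b$, take $\psi_{x''} = (\varphi_{x''} \to a)$; its value at $x''$ is $b \fto a < 1$.

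Next form the formula $\chi = \bigvee_{x''} \big(\bigwedge_{x''} \psi_{x''} \to c\big)$, or more simply a formula of the form $\theta \to c$, which is available because $\lor \in \Phi$. Concretely, I expect to use disjunction over the states of $\mM$ instead. For each $x_1 \in \DeltaM$, the finiteness of $\mM$ allows us to conjoin finitely many formulas and obtain a characteristic formula $\chi_{x_1}$. Its value at $x_1$ is $1$, and for every $x''$ with $Z(x_1,x'')<1$ its value at $x''$ is $<1$, using $\bigwedge\{\psi_{x''} : x''\}$. Then $\bigvee_{x_1 \in \DeltaM} \chi_{x_1}$ takes value $1$ everywhere in $\mM$, so it is a global assertion of $\mM$. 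It must then be a global assertion of $\mMdp$. By linearity, at each $x''$ some disjunct $\chi_{x_1}$ equals $1$. This gives, for every $x''$, some $x_1$ with $Z(x_1,x'')=1$, which is the reverse direction.

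For the forward direction at a fixed $x$, consider the formula $\bigwedge_{x''}\psi_{x''} \to c$, where $c = \bigvee_{x''}\big(\bigwedge\psi\big)^\mMdp(x'') < 1$; that this supremum is $<1$ uses finiteness and linearity. This formula has value $1$ everywhere in $\mMdp$, since there the antecedent is at most $c$. It therefore has to be a global assertion of $\mM$, but at $x$ its value is $1 \fto c = c < 1$, a contradiction. So such $x''$ exists. The delicate part is checking that these formulas use only constructors guaranteed in \fKP, namely $\land$, $a\to\varphi$, $\varphi\to a$ and $\lor$, and that the values stay strictly below $1$. I anticipate that this bookkeeping is where care is needed.
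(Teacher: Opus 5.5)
Your proof is correct, but it follows a different route from the paper's.

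The paper never relates $\mM$ to the competitor by a bisimulation. It applies Theorem~\ref{theorem: HM} only to $\mM$ itself, obtaining for distinct classes $u_i\neq u_j$ of $\mME$ a formula $\psi_{i,j}$ with value $1$ at $u_i$ and $<1$ at $u_j$. It conjoins these into $\psi_i$, and then transfers two facts to the competitor: $\xi=\bigvee_i\psi_i$ is a global assertion, while no $\xi_i=\bigvee_{k\neq i}\psi_k$ is. This directly produces $n$ pairwise distinct witnesses in the competitor.

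You instead build the separating formula from the competitor itself. For fixed $x$, your $\theta$ is a conjunction indexed by the (without loss of generality finitely many) states of $\mMdp$, and $\theta\to c$ with $c<1$ forces a degree-$1$ twin of $x$ in $\mMdp$. Maximality of $E$ together with Proposition~\ref{prop: JHDJH} then makes the twin map on classes injective. This is closer in spirit to the proof of Lemma~\ref{lemma: JORJH}.

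\emph{What the paper's route buys.} It uses a fixed, competitor-independent family of $n+1$ formulas. It needs neither finiteness of the competitor, nor Theorem~\ref{theorem: HM} between two different models, nor composition of bisimulations. You could shed the last two as well. Call $x''$ a twin of $x$ when $\varphi^{\mM}(x)=\varphi^{\mMdp}(x'')$ for all $\varphi$; this is exactly what your contradiction argument yields. Then separate twins of $E$-inequivalent states using Theorem~\ref{theorem: HM} on $\mM$ alone.

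\emph{What your route buys.} It uses only $\land$ and formulas of the form $\varphi\to a$. It also uses only one inclusion: every global assertion of $\mMdp$ is a global assertion of $\mM$. Hence it proves minimality without the hypothesis $\lor\in\Phi$, which the paper's disjunctions $\xi$ and $\xi_i$ genuinely need.

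Two corrections to your write-up follow from this. Your remark that $\theta\to c$ is available ``because $\lor\in\Phi$'' is mistaken. The paragraph deriving the reverse direction through $\bigvee_{x_1}\chi_{x_1}$ is the only place you use $\lor$. It plays no role in the injection and can simply be deleted; note also that the finiteness that matters there is that of $\mMdp$, not of $\mM$.
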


\begin{proof}
By Corollary~\ref{cor: JFWKA2}, it suffices to show that, if $\mMp\,$ is a fuzzy Kripke model such that, for any formula $\varphi$ of \fKP, $\varphi$ is a global assertion of $\mMp\,$ iff it is a global assertion of $\mM$, then $|\DeltaMp| \geq |\DeltaME|$. 

Since $\mM$ is finite, $\mME$ is also finite. Let $n = |\DeltaME|$ and $\DeltaME = \{u_1,\ldots,u_n\}$, where $u_i = E_{x_i}$ with $x_i \in \DeltaM$, for \mbox{$1 \leq i \leq n$}. Let $Z$ be defined as in Lemma~\ref{lemma: HDAMA}. Thus, $Z(x_i, u_i) = Z(x_i, E_{x_i}) = E(x_i,x_i) = 1$, for \mbox{$1 \leq i \leq n$}. Consider any different indices \mbox{$1 \leq i,j\leq n$}. Since $u_i \neq u_j$, which means $E_{x_i} \neq E_{x_j}$, by Lemma~\ref{lemma: JFSJA}, $E(x_i, x_j) < 1$. By Theorem~\ref{theorem: HM}, there exists a formula $\varphi_{i,j}$ of \fKP such that $\varphi_{i,j}^\mM(x_i) \neq \varphi_{i,j}^\mM(x_j)$. Since $Z(x_i, u_i) = Z(x_j, u_j) = 1$, by Lemma~\ref{lemma: HDAMA} and Theorem~\ref{theorem: invariance}, we must have $\varphi_{i,j}^\mM(x_i) = \varphi_{i,j}^\mME(u_i)$ and $\varphi_{i,j}^\mM(x_j) = \varphi_{i,j}^\mME(u_j)$. Therefore, $\varphi_{i,j}^\mME(u_i) \neq \varphi_{i,j}^\mME(u_j)$. Let \mbox{$\psi_{i,j} = (\varphi_{i,j} \to \varphi_{i,j}^{\mME}(u_i))$} if \mbox{$\varphi_{i,j}^{\mME}(u_i) < \varphi_{i,j}^{\mME}(u_j)$}, and \mbox{$\psi_{i,j} = (\varphi_{i,j}^{\mME}(u_i) \to \varphi_{i,j})$} otherwise (i.e., when \mbox{$\varphi_{i,j}^{\mME}(u_i) > \varphi_{i,j}^{\mME}(u_j)$}). We have $\psi_{i,j}^{\mME}(u_i) = 1$ and $\psi_{i,j}^{\mME}(u_j) < 1$.\footref{footnote: HGJHS} Let $\psi_i = \bigwedge_{k \neq i} \psi_{i,k}$. We have $\psi_i^{\mME}(u_i) = 1$ and $\psi_i^{\mME}(u_j) < 1$. Let $\xi = \bigvee_{1 \leq i \leq n} \psi_i$ and $\xi_i = \bigvee_{k \neq i} \psi_k$. 
Thus, $\xi^\mME(u_i) = 1$ and $\xi_i^\mME(u_i) < 1$, for all $1 \leq i \leq n$. That is, $\xi$ is a global assertion of $\mME$, but $\xi_i$ is not, for any $1 \leq i \leq n$. Note that $\xi$ and $\xi_i$, for $1 \leq i \leq n$, are formulas of \fKP. 

Suppose that $\mMp$ is a fuzzy Kripke model such that, for any formula $\varphi$ of \fKP, $\varphi$ is a global assertion of $\mMp\,$ iff it is a global assertion of $\mM$. Since $\mME$ is a reduction of $\mM$ that preserves global assertions in \fKP, we have that $\xi$ is a global assertion of $\mMp$, but $\xi_i$ is not, for any $1 \leq i \leq n$. Hence, for every $1 \leq i \leq n$, there exists $y_i \in \DeltaMp$ such that $\psi_i^\mMp(y_i) = 1$ and $\psi_j^\mMp(y_i) < 1$ for all $j \neq i$, $1 \leq j \leq n$. This implies that $|\DeltaMp| \geq n = |\DeltaME|$, which completes the proof.
\myend
\end{proof}

The following corollary directly follows from the above theorem and Proposition~\ref{prop: JFKSM}.

\begin{corollary}\label{cor: JHKSO}
Given a finite fuzzy Kripke model $\mM$ and its greatest fuzzy $\Phi$-auto-bisimulation~$E$, a minimal reduction of $\mM$ preserving global assertions in \fKP can be constructed in time $O(m n^2)$, where $n = |\Delta^\mM|$ and $m = |\{\tuple{x,\sigma,y} \in \DeltaM \times \SA \times \DeltaM : \sigma^\mM(x,y) > 0\}|$ (the number of non-zero transitions in $\mM$), assuming that the signature $\tuple{\SA,\SP,\Ss,\So}$ is fixed (i.e., has constant size).
\end{corollary}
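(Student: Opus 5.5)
This corollary is a direct combination of Theorem~\ref{theorem: HSJAO} and Proposition~\ref{prop: JFKSM}. The plan is to output the quotient model $\mME$ itself and account for its correctness and its construction cost separately.

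\textbf{Correctness.} Given $\mM$ and its greatest fuzzy $\Phi$-auto-bisimulation $E$, the algorithm builds $\mME$ as in Definition~\ref{def: HDJOS}. Proposition~\ref{prop: JHRLM} guarantees that $\mME$ is well defined. Theorem~\ref{theorem: HSJAO} guarantees that $\mME$ is a minimal reduction of $\mM$ preserving global assertions in \fKP. This appeal relies on the standing hypotheses of Theorem~\ref{theorem: HSJAO}, namely that $\fand$ is right-continuous and $\lor \in \Phi$, which the corollary is read as inheriting. Those hypotheses are needed only for minimality, since preservation alone already follows from Corollary~\ref{cor: JFWKA2}.

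\textbf{Cost.} I would reuse the accounting in the proof of Proposition~\ref{prop: JFKSM}, with the signature fixed.
\begin{itemize}
\item Choose the lowest-indexed representative of each class $E_x$ by scanning the rows of $E$ for entries equal to $1$. This is justified by Lemma~\ref{lemma: JFSJA} and takes $O(n^2)$ time.
\item Copy $p^\mM$ and the named states onto the representatives in $O(n)$ time.
\item Compute $(E \circ \sigma^\mM \circ E)(x,y)$ for every pair of representatives by letting each nonzero transition $\tuple{x_3,\sigma,y_3}$ contribute $E(x,x_3)\fand\sigma^\mM(x_3,y_3)\fand E(y_3,y)$ to the entry for $(E_x,E_y)$. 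This gives $m$ transitions times $n^2$ representative pairs, hence $O(mn^2)$.
\end{itemize}
The total is $O(n^2 + n + mn^2)$. This is $O(mn^2)$ as long as $m \ge 1$. When $m=0$, all $\sigma^\mME$ are zero and the bound $O(n^2)$ is absorbed under the usual convention.

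\textbf{Main point requiring care.} The only subtle step is that the transition computation ranges only over nonzero transitions. This is legitimate because $\fand$ annihilates $0$, so zero transitions contribute nothing to the join and can be skipped.
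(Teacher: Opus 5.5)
Your proposal is correct and follows the paper's route exactly: the paper obtains this corollary directly from Theorem~\ref{theorem: HSJAO} (minimality) and Proposition~\ref{prop: JFKSM} (the $O(mn^2)$ construction cost), and your cost accounting reproduces the proof of that proposition. Two of your remarks are accurate refinements rather than departures: that the corollary tacitly inherits the hypotheses that $\fand$ is right-continuous and $\lor \in \Phi$, and that the bound needs $m \geq 1$, which the paper states explicitly only in Proposition~\ref{prop: JHEKA}.
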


%===============================================================================
\section{Minimizing fuzzy Kripke models while preserving local assertions}
\label{sec: min-la}

In this section, we assume that $\Ss$ is not empty. 
Given a finite fuzzy Kripke model $\mM$, we call a fuzzy Kripke model $\mMp$ a {\em minimal reduction of $\mM$ preserving local assertions in \fKP} if:
\begin{itemize}
\item $\varphi^\mMp(s^\mMp) = \varphi^\mM(s^\mM)$ for every formula $\varphi$ of \fKP and every $s \in \Ss$;
\item if $\mMdp$ is a fuzzy Kripke model such that $\varphi^\mMdp(s^\mMdp) = \varphi^\mM(s^\mM)$ for every formula $\varphi$ of \fKP and every $s \in \Ss$, then $|\DeltaMdp| \geq |\DeltaMp|$.
\end{itemize}

A fuzzy Kripke model $\mathcal{M}'$ is called a {\em minimal fuzzy Kripke model $\Phi$-bisimilar to $\mathcal{M}$} if it is $\Phi$-bisimilar to $\mathcal{M}$ and has minimal domain cardinality among all such models.

\subsection{Case $U \in \Phi$}

In this subsection, we consider the problem of minimizing a finite fuzzy Kripke model while preserving local assertions in the presence of the universal program (i.e., when $U \in \Phi$). The case without the universal program is studied in the next subsection.

Given fuzzy Kripke models $\mM$ and $\mMp$, for $x \in \DeltaM$ and $x' \in \DeltaMp$, we write $x \sim_\Phi x'$ to denote that there exists a fuzzy $\Phi$-bisimulation $Z$ between $\mM$ and $\mMp$ such that $Z(x,x') = 1$. For such $x$ and $x'$, we also write $x \equiv_\Phi x'$ to denote that $\varphi^\mM(x) = \varphi^\mMp(x')$ for all formulas $\varphi$ of \fKP. 

The following theorem is a generalization of the second assertion of \cite[Theorem~5]{minimization-by-fBS}, which was established only for the G\"odel structure.

\begin{theorem}\label{theorem: JFKSN}
Suppose $\fand$ is right-continuous and $U \in \Phi$. 
Let $\mM$ be a finite fuzzy Kripke model and let $E$ be its greatest fuzzy $(\Phi \setminus \{U\})$-auto-bisimulation. Then,  
\begin{enumerate}[(a)]
\item\label{ass: JFKSN 1} $\mME$  is a minimal fuzzy Kripke model $\Phi$-bisimilar to $\mM$,
\item\label{ass: JFKSN 2} $\mME$ is also a minimal reduction of $\mM$ preserving local assertions in~\fKP.
\end{enumerate}
\end{theorem}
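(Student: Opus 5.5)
By Proposition~\ref{prop: HFLMS}, $E$ is also the greatest fuzzy $\Phi$-auto-bisimulation of $\mM$. The quotient $\mME$ is therefore the one of Definition~\ref{def: HDJOS} taken with respect to $\Phi$. By Corollary~\ref{cor: JFWKA}, $\mM$ and $\mME$ are $\Phi$-bisimilar, via the relation $Z$ of Lemma~\ref{lemma: HDAMA} with $Z(s^\mM,s^\mME)=1$. Theorem~\ref{theorem: invariance} then gives $\varphi^\mME(s^\mME)=\varphi^\mM(s^\mM)$ for all formulas $\varphi$ and all $s\in\Ss$. So $\mME$ is $\Phi$-bisimilar to $\mM$ and preserves local assertions. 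What remains is minimality in both senses.

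Part~\eqref{ass: JFKSN 1} will be reduced to part~\eqref{ass: JFKSN 2}. Let $\mMp$ be $\Phi$-bisimilar to $\mM$ via some $Z'$ with $Z'(s^\mM,s^\mMp)=1$. By Theorem~\ref{theorem: invariance}, $\mMp$ agrees with $\mM$ on all local assertions at named states. The invariance theorem is stated for finite models, so if $\mMp$ is infinite there is nothing to prove; we may assume $\mMp$ is finite. Hence any lower bound on the size of models preserving local assertions also bounds the size of models $\Phi$-bisimilar to $\mM$, and it suffices to prove minimality for~\eqref{ass: JFKSN 2}.

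For~\eqref{ass: JFKSN 2}, reuse the distinguishing formulas from the proof of Theorem~\ref{theorem: HSJAO}. Let $\DeltaME=\{u_1,\ldots,u_n\}$. Since $\fand$ is right-continuous, Theorem~\ref{theorem: HM} applies, and as there we build formulas $\psi_i$ with $\psi_i^\mME(u_i)=1$ and $\psi_i^\mME(u_j)<1$ for $j\neq i$. There $\lor$ was needed to form $\xi$; here it is not, because we can use $U$ instead. Fix $s\in\Ss$, which exists since $\Ss\neq\emptyset$. Consider the formulas $\chi_i=\tuple{U}\psi_i$. We have $\chi_i^\mME(s^\mME)=\bigvee_y \psi_i^\mME(y)=1$, since $\psi_i^\mME(u_i)=1$.

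Now suppose $\mMdp$ satisfies $\varphi^\mMdp(s^\mMdp)=\varphi^\mM(s^\mM)=\varphi^\mME(s^\mME)$ for all $\varphi$. Then $\chi_i^\mMdp(s^\mMdp)=1$, that is, $\bigvee_{y\in\DeltaMdp}\psi_i^\mMdp(y)=1$. The main obstacle is to obtain an actual witness $y_i$ with $\psi_i^\mMdp(y_i)=1$ when the supremum is not attained, for instance when $\mMdp$ is infinite. If $\mMdp$ is infinite, however, the bound $|\DeltaMdp|\geq n$ is trivial, so we may assume $\mMdp$ is finite; since $\mL$ is linear, the supremum is then a maximum and the witness exists.

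It remains to show that the witnesses $y_i$ are pairwise distinct. Let $c_{ij}=\psi_j^\mME(u_i)<1$, and consider $\theta_i=[U](\psi_i\to c_{ij})$ for $j\neq i$. We could use $[U]$ if $\Box\in\Phi$, but to avoid assuming $\Box$ we use the formula $\tuple{U}(\psi_i\land(\psi_j\to c'))$ with a suitable constant $c'$, as follows. In $\mME$ the value of $\tuple{U}(\psi_i\land\psi_j)$ is some $d_{ij}<1$: at each $u_k$, one of the two conjuncts is $<1$, so each term of the finite supremum is $<1$ by linearity. In $\mMdp$, if $y_i=y_j$ then $\tuple{U}(\psi_i\land\psi_j)^\mMdp(s^\mMdp)\geq 1$. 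This contradicts preservation of that local assertion. Hence $i\mapsto y_i$ is injective and $|\DeltaMdp|\geq n=|\DeltaME|$.
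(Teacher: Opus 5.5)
Your argument is correct, but it runs in the opposite direction from the paper's proof.

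The paper proves (a) first, semantically. For a finite competitor $\mMp$, it takes the greatest fuzzy $\Phi$-bisimulation $Z$ between $\mMp$ and $\mME$; this relates $s^\mMp$ to $s^\mME$ with degree $1$ by transitivity of $\Phi$-bisimilarity (Corollary~\ref{cor: NSLAJ}). It then applies the back condition for the basic program $U$ to obtain states $y_i$ with $Z(y_i,u_i)=1$, and separates these using $u_i\not\equiv_\Phi u_j$. Finally, it derives (b) from (a) by applying Theorem~\ref{theorem: HM} a second time, between the competitor and $\mM$, to turn agreement on local assertions into $\Phi$-bisimilarity.

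You instead prove the lower bound in (b) directly and syntactically:
\begin{itemize}
\item the formulas $\psi_i$ from the proof of Theorem~\ref{theorem: HSJAO}, which, as you note, do not need $\lor$;
\item the formulas $\tuple{U}\psi_i$, which force witnesses $y_i$ in a finite competitor;
\item the formulas $\tuple{U}(\psi_i\land\psi_j)$, whose value at $s^\mME$ is $<1$, which force $y_i\neq y_j$.
\end{itemize}
You then get (a) from (b) by invariance alone.

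What your route buys: you use the Hennessy--Milner theorem only inside $\mM$, never with respect to the competitor. You need neither a greatest bisimulation with the competitor nor transitivity of bisimilarity. You also prove slightly more: any model that agrees with $\mM$ at a single named state on the finitely many local assertions $\tuple{U}\psi_i$ and $\tuple{U}(\psi_i\land\psi_j)$ already has at least $|\DeltaME|$ states. The paper's route is shorter given the bisimulation machinery, and its pattern is reused later in Lemma~\ref{lemma: JORJH}.

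One cleanup: the sentence introducing $\theta_i=[U](\psi_i\to c_{ij})$ and the formula $\tuple{U}(\psi_i\land(\psi_j\to c'))$ play no role. What you actually use is $\tuple{U}(\psi_i\land\psi_j)$, which needs neither $\Box$ nor an auxiliary constant, so delete those remarks.
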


\begin{proof}
By Proposition~\ref{prop: HFLMS}, $E$ is the greatest fuzzy $\Phi$-auto-bisimulation of~$\mM$. 
By Corollary~\ref{cor: JFWKA}, $\mM$ and $\mME$ are $\Phi$-bisimilar. 
By Theorem~\ref{theorem: invariance}, it follows that $\varphi^\mME(s^\mME) = \varphi^\mM(s^\mM)$ for every $s \in \Ss$ and every formula $\varphi$ of \fKP.

Since $\mM$ is finite, $\mME$ is also finite. Let $\DeltaME = \{u_1,\ldots,u_n\}$, where $u_1,\ldots,u_n$ are pairwise distinct. For each $1 \leq i \leq n$, there exists $x_i \in \DeltaM$ such that $u_i = E_{x_i}$. By Lemma~\ref{lemma: HDAMA}, we have $x_i \sim_\Phi u_i$ for all $1 \leq i \leq n$. Then, by Theorem~\ref{theorem: invariance}, it follows that $x_i \equiv_\Phi u_i$ for all $1 \leq i \leq n$. Let $i$ and $j$ be distinct indices in $1..n$. Since $u_i \neq u_j$, we have $E_{x_i} \neq E_{x_j}$, and thus, by Lemma~\ref{lemma: JFSJA}, $E(x_i, x_j) < 1$. By Theorem~\ref{theorem: HM}, it follows that $x_i \not\equiv_\Phi x_j$, and hence $u_i \not\equiv_\Phi u_j$. 

Consider the assertion~\eqref{ass: JFKSN 1}. Let $\mMp$ be a fuzzy Kripke model $\Phi$-bisimilar to $\mM$. We need to show that $|\DeltaMp| \geq n$. Without loss of generality, assume that $\mMp$ is finite. By Corollary~\ref{cor: NSLAJ}, $\mMp$ and $\mME$ are $\Phi$-bisimilar, as they are $\Phi$-bisimilar to $\mM$. Let $Z$ be the greatest fuzzy $\Phi$-bisimulation between $\mMp$ and $\mME$, and let $s \in \Ss$. We have $Z(s^\mMp, s^\mME) = 1$. For each $1 \leq i \leq n$, by the condition~\eqref{eq: FB3} with $\alpha = U$, there exists $y_i \in \DeltaMp$ such that $Z(s^\mMp, s^\mME) \leq Z(y_i,u_i)$. Thus, $Z(y_i, u_i) = 1$ for all $1 \leq i \leq n$. By Theorem~\ref{theorem: invariance}, it follows that $y_i \equiv_\Phi u_i$ for all $1 \leq i \leq n$.
For distinct indices $i$ and $j$ in $1..n$, since $u_i \not\equiv_\Phi u_j$, it follows that $y_i \not\equiv_\Phi y_j$. Therefore, $|\DeltaMp| \geq n$. 

Consider the assertion~\eqref{ass: JFKSN 2}. Let $\mMp$ be a fuzzy Kripke model such that $\varphi^\mMp(s^\mMp) = \varphi^\mM(s^\mM)$ for every formula $\varphi$ of \fKP and every $s \in \Ss$. We need to show that $|\DeltaMp| \geq n$. Without loss of generality, assume that $\mMp$ is finite. By Theorem~\ref{theorem: HM}, $\mMp$ is $\Phi$-bisimilar to $\mM$. As proved in the above paragraph, we have $|\DeltaMp| \geq n$, which completes the proof.
\myend
\end{proof}

\subsection{Case $U \notin \Phi$}

In this subsection, we present Algorithm~\ref{alg1}, which, given a finite fuzzy Kripke model $\mM$ and its greatest fuzzy $\Phi$-auto-bisimulation $E$, constructs a minimal reduction of $\mM$ that preserves local assertions in \fKP, under the assumption that $U \notin \Phi$ and $\fand$ is right-continuous and strictly monotone w.r.t.~1. Moreover, we show that the model returned by Algorithm~\ref{alg1} is also a minimal fuzzy Kripke model $\Phi$-bisimilar to~$\mM$.

The reduction, denoted by $\mMp$, has states drawn from $\mM$ with the same contents as in $\mM$. That is, for every $x \in \DeltaMp \subseteq \DeltaM$ and every $p \in \SP$, we have $p^\mMp(x) = p^\mM(x)$. The algorithm initializes $\mMp$ as follows: for each $s \in \Ss$, if $s^\mM$ is equivalent (via~$E$) to a named state already added to $\DeltaMp$, then $s^\mMp$ is set to that state; otherwise, $s^\mM$ is added to $\DeltaMp$, and $s^\mMp$ is set to it. The set $\unprocessed$ is then initialized with the states in $\DeltaMp$. 
As the main loop, while $\unprocessed$ is not empty, the algorithm extracts a state $x$ from the set and processes it as follows. For each basic program $\alpha$ w.r.t.~$\Phi$, the $\alpha$-successors of $x$ in $\mM$ -- that is, the states $y \in \Delta^\mM$ with $\alpha^\mM(x,y) > 0$ -- are considered in decreasing order with respect to $\alpha^\mM(x,y)$. 
For each such $\alpha$-successor $y$ of $x$ in~$\mM$, the algorithm proceeds as follows:
\begin{itemize}
\item If $x$ already has an $\alpha$-successor $y'$ in $\mMp$ with $\alpha^\mMp(x,y') \fand E(y',y) \geq \alpha^\mM(x,y)$, then $y$ is ignored, as $y'$ can serve the role of $y$. 
\item Otherwise, if there exists $y' \in \DeltaMp$ such that $E(y,y') = 1$, then $y'$ is used as the corresponding $\alpha$-successor of $x$ in $\mMp$, setting $\alpha^\mMp(x,y') = \alpha^\mM(x,y)$ (in the case where $\alpha = \sigma^-$, this is understood as \mbox{$\sigma^\mMp(y',x) := \sigma^\mM(y,x)$}). 

\item Otherwise, $y$ is added to $\DeltaMp$ and to $\unprocessed$, with $\alpha^\mMp(x,y)$ set to $\alpha^\mM(x,y)$.
\end{itemize}

\begin{algorithm}[t]
\caption{Minimizing a fuzzy Kripke model while preserving local assertions in \fKP \label{alg1}}
\Input{a finite fuzzy Kripke model $\mM$ and its greatest fuzzy $\Phi$-auto-bisimulation $E$.}
\Output{a minimal reduction of $\mM$ preserving local assertions in \fKP, under the assumption that $U \notin \Phi$ and $\fand$ is right-continuous and strictly monotone w.r.t.~1.}

\BlankLine

initialize $\mMp$ by setting $\DeltaMp = \emptyset$\label{step: alg1 1}\;
\ForEach{$s \in \Ss$\label{step: alg1 1b}}{
    \lIf{there exists $x \in \DeltaMp$ such that $E(s^\mM,x) = 1$}{$s^\mMp := x$\label{step: alg1 1c}}
    \lElse{add $s^\mM$ to $\DeltaMp$, set $s^\mMp = s^\mM$, and set $p^\mMp(s^\mM) = p^\mM(s^\mM)$ for all $p \in \SP$\label{step: alg1 1d}}
}
$\unprocessed := \DeltaMp$\label{step: alg1 2}\tcp*{as a hard copy}

\While{$\unprocessed \neq \emptyset$}{
    extract $x$ from $\unprocessed$\;
    \ForEach{basic program $\alpha$ w.r.t.~$\Phi$, and each $y \in \DeltaM$ with $\alpha^\mM(x,y) > 0$, in the decreasing order with respect to $\alpha^\mM(x,y)$ when restricted to~$\alpha$,\label{step: alg1 5}}{
        \If{there does not exist any $y' \in \DeltaMp$ such that $\alpha^\mMp(x,y') \fand E(y',y) \geq \alpha^\mM(x,y)$\label{step: alg1 6}}{
            \uIf{there exists $y' \in \DeltaMp$ such that $E(y,y') = 1$\label{step: alg1 7}}{
                $\alpha^\mMp(x,y') := \alpha^\mM(x,y)$\label{step: alg1 8}\tcp*{when $\alpha = \sigma^-$, update $\sigma^\mMp$ appropriately}
            }
            \Else{\label{step: alg1 9}
                add $y$ to both $\DeltaMp$ and $\unprocessed$\label{step: alg1 10}\;
                \lForEach{$p \in \SP$}{$p^\mMp(y) := p^\mM(y)$\label{step: alg1 11}}
                $\alpha^\mMp(x,y) := \alpha^\mM(x,y)$\label{step: alg1 12}\tcp*{when $\alpha = \sigma^-$, update $\sigma^\mMp$ appropriately}
            }
        }
    }
}
\Return{$\mMp$}\;
\end{algorithm}

It is easily seen that the following property always holds during the execution of Algorithm~\ref{alg1}:
\begin{equation}\label{eq: JHHAL}
\unprocessed \subseteq \DeltaMp \subseteq \DeltaM.
\end{equation}

We provide two examples below to illustrate the execution of Algorithm~\ref{alg1}. 
They can be verified using our implementation of this algorithm~\cite{MinFKM-prog}, described in Section~\ref{section: experiments}. 

\begin{figure}[t]
\begin{center}
\begin{tabular}{|c|c|}
\hline
\begin{tikzpicture}[->,>=stealth,auto,black]
\MforHFJSK
\node (extra) [node distance=8.25cm, below of=M] {};
\end{tikzpicture}
& 
\begin{tikzpicture}[->,>=stealth,auto,black]
	\node (M) {$\mMp$\ \ for the product semantics:};
	\node (tau) [draw, node distance=1.0cm, below of=M] {$v_0\,(\tau)$};
	\node (v1)  [node distance=2.5cm, right of=tau] {};
	\node (v3) [node distance=1.4cm, below of=tau] {};
	\node (v2) [draw, node distance=2.5cm, left of=v3] {$v_2\!:\!p_{\,0.9}$};
	\node (v4) [draw, node distance=1.4cm, below of=v1] {$v_4\!:\!p_{\,0.8}$};
	\draw (tau) to node [left,pos=0.3,xshift=-5pt]{\footnotesize{1}} (v2);	
	\draw (tau) to node [right,pos=0.3,xshift=5pt]{\footnotesize{0.9}} (v4);	
	\draw (v4) to node [below]{\footnotesize{0.9}} (v2);
    \draw (v2) edge[out=-130,in=-50,looseness=8] node[below]{\footnotesize{0.8}} (v2);

	\node (lM) [node distance=4.5cm, below of=M] {$\mMp$\ \ for the \L{}ukasiewicz semantics:};
	\node (ltau) [draw, node distance=1.0cm, below of=lM] {$v_0\,(\tau)$};
	\node (lv1)  [node distance=2.5cm, right of=ltau] {};
	\node (lv3) [node distance=1.4cm, below of=ltau] {};
	\node (lv2) [draw, node distance=2.5cm, left of=lv3] {$v_2\!:\!p_{\,0.9}$};
	\draw (ltau) to node [left,pos=0.3,xshift=-5pt]{\footnotesize{1}} (lv2);	
    \draw (lv2) edge[out=-130,in=-50,looseness=8] node[below]{\footnotesize{0.8}} (lv2);
\end{tikzpicture}
\\
\hline
\end{tabular}

\medskip

\begin{tabular}{cc}
\EPforHFJSK
&
\ELforHFJSK
\end{tabular}
\caption{A supplement for Example~\ref{example: HFJSK2}.\label{fig: HFJSK2}}
\end{center}
\end{figure}

\begin{example}\label{example: HFJSK2}
Let $\SA = \{\sigma\}$, $\SP = \{p\}$, $\Ss = \{\tau\}$, $\So = \emptyset$, and $\Phi \subseteq \{\cup, \circ, *, \lor, \Box\}$. We reconsider the fuzzy Kripke model $\mM$ specified in Example~\ref{example: HFJSK} and illustrated in Figure~\ref{fig: HFJSK}. Recall that the greatest fuzzy $\Phi$-auto-bisimulations of $\mM$ under the product and {\L}ukasiewicz semantics (one for each semantics) are given in Figure~\ref{fig: HFJSK}. In this example, we examine the execution of Algorithm~\ref{alg1} for $\mM$ with respect to these bisimulations under the corresponding semantics. The resulting fuzzy Kripke models $\mMp$, one for the product semantics and one for the {\L}ukasiewicz semantics, are shown in Figure~\ref{fig: HFJSK2}. To facilitate verification of the execution of Algorithm~\ref{alg1} for $\mM$, this figure also includes the illustration of $\mM$ together with the corresponding greatest fuzzy $\Phi$-auto-bisimulations.

First, consider the execution of Algorithm~\ref{alg1} for $\mM$ and its greatest fuzzy $\Phi$-auto-bisimulation $E$ under the product semantics. After executing the statements \ref{step: alg1 1}--\ref{step: alg1 2}, we have $\DeltaMp = \{v_0\}$ and $\unprocessed = \{v_0\}$. In the first iteration of the ``while'' loop, $v_0$ is extracted from the set $\unprocessed$ and processed. Since $\sigma^\mM(v_0, v_2) > \sigma^\mM(v_0, v_4) > \sigma^\mM(v_0, v_3) > 0$, the states $v_2$, $v_4$, and $v_3$ are processed in subsequent iterations of the (inner) ``foreach'' loop. As a result, $v_2$ and $v_4$ are added to $\DeltaMp$ and to $\unprocessed$ by the statement~\ref{step: alg1 10}, with $\sigma^\mMp(v_0, v_2)$ and $\sigma^\mMp(v_0, v_4)$ set to 1 and 0.9, respectively, by the statement~\ref{step: alg1 12}, while $v_3$ is not included in $\DeltaMp$. The reason is that 
\[ \sigma^\mMp(v_0, v_2) \fand E(v_2, v_4) = \frac{8}{9} < 0.9 = \sigma^\mM(v_0, v_4), \]
\[ \sigma^\mMp(v_0, v_2) \fand E(v_2, v_3) = \frac{8}{9} > 0.8 = \sigma^\mM(v_0, v_3). \]
In the second iteration of the ``while'' loop, $v_2$ is extracted from $\unprocessed$ and processed. As a result, $\sigma^\mMp(v_2,v_2)$ is set to 0.8 by the statement~\ref{step: alg1 8} (where $y' = y = v_2$). 
In the third iteration of the ``while'' loop, $v_4$ is extracted from $\unprocessed$ and processed. As a result, $\sigma^\mMp(v_4,v_2)$ is set to 0.9 by the statement~\ref{step: alg1 8} (where $y = v_6$ and $y' = v_2$). The resulting fuzzy Kripke model $\mMp$ has three states: $v_0$, $v_2$, and $v_4$.

Now, consider the execution of Algorithm~\ref{alg1} for $\mM$ and its greatest fuzzy $\Phi$-auto-bisimulation $E$ under the \L{}ukasiewicz semantics. This case differs from that under the product semantics in that $v_4$ is not included in $\DeltaMp$. The reason is that 
\[ \sigma^\mMp(v_0, v_2) \fand E(v_2, v_4) = 0.9 = \sigma^\mM(v_0, v_4). \]
Consequently, the resulting fuzzy Kripke model $\mMp$ has only two states: $v_0$ and $v_2$.
\myend
\end{example}

\begin{figure}[t]
\begin{center}
\begin{tabular}{|c|c|}
\hline
\begin{tikzpicture}[->,>=stealth,auto,black]
\MforHGDHJ
\end{tikzpicture}
&
\begin{tikzpicture}[->,>=stealth,auto,black]
	\node (M) {};
	\node (Mp)  [node distance=1.0cm, right of=M] {$\mMp$};
	\node (v0) [draw, node distance=1.0cm, below of=M] {$v_0\,(\tau)$};
	\node (bv0) [node distance=2.4cm, below of=v0] {};
	\node (v2)  [draw, node distance=1.0cm, right of=bv0] {$v_2$};
	\node (bv2) [node distance=2.4cm, below of=v2] {};
	\node (v5)  [draw, node distance=1.0cm, right of=bv2] {$v_5\!:\!p_{\,1.0}$};
	\draw (v0) to node [right,pos=0.60]{\footnotesize{0.7}} (v2);	
	\draw (v2) to node [right,pos=0.30]{\footnotesize{0.6}} (v5);	
    \draw[dashed] (v5) edge[bend right=35] node[right]{\footnotesize{0.8}} (v0);
\end{tikzpicture}
\\
\hline
\end{tabular}
\caption{An illustration for Example~\ref{example: HGDHJ2}.\label{fig: HGDHJ2}}
\end{center}
\end{figure}

\begin{example}\label{example: HGDHJ2}
Let $\SA = \{\sigma,\varrho\}$, $\SP = \{p\}$, $\Ss = \{\tau\}$, $\So = \emptyset$, and $\Phi \subseteq \{\cup, \circ, *, \lor, \Box\}$.  We reconsider the fuzzy Kripke model $\mM$ specified in Example~\ref{example: HGDHJ} and illustrated in Figure~\ref{fig: HGDHJ}. Recall that the greatest fuzzy $\Phi$-auto-bisimulations of $\mM$ under the product and Łukasiewicz semantics (one for each semantics) are given in Figure~\ref{fig: HGDHJ}. In this example, we examine the execution of Algorithm~\ref{alg1} for $\mM$ with respect to these bisimulations under the corresponding semantics. In both cases, the algorithm returns the same fuzzy Kripke model $\mMp$, which is illustrated in Figure~\ref{fig: HGDHJ2}. For ease of reference, this figure also includes the illustration of $\mM$.

After executing the statements \ref{step: alg1 1}--\ref{step: alg1 2}, we have $\DeltaMp = \{v_0\}$ and $\unprocessed = \{v_0\}$. In the first iteration of the ``while'' loop, $v_0$ is extracted from the set $\unprocessed$ and processed. Since $\sigma^\mM(v_0, v_2) > \sigma^\mM(v_0, v_1) > 0$, the states $v_2$ and $v_1$ are processed in subsequent iterations of the (inner) ``foreach'' loop. As a result, $v_2$ is added to $\DeltaMp$ and to $\unprocessed$ by the statement~\ref{step: alg1 10}, with $\sigma^\mMp(v_0, v_2)$ set to 0.7 by the statement~\ref{step: alg1 12}, while $v_1$ is not included in $\DeltaMp$ because 
\[ \sigma^\mMp(v_0, v_2) \fand E(v_2, v_1) \geq 0.5 = \sigma^\mM(v_0, v_1). \]
In fact, $\sigma^\mMp(v_0, v_2) \fand E(v_2, v_1)$ is equal to $0.7 \fand 1 = 0.7$ under the product semantics, and to $0.7 \fand 0.8 = 0.5$ under the \L{}ukasiewicz semantics. 
In the second iteration of the ``while'' loop, $v_2$ is extracted from $\unprocessed$ and processed. As a result, $v_5$ is added to $\DeltaMp$ and to $\unprocessed$ by the statement~\ref{step: alg1 10}, with $\sigma^\mMp(v_2, v_5)$ set to 0.6 by the statement~\ref{step: alg1 12}. 
In the third iteration of the ``while'' loop, $v_5$ is extracted from $\unprocessed$ and processed. As a result, $\varrho^\mMp(v_5, v_0)$ is set to 0.8 by the statement~\ref{step: alg1 8} (where $y' = y = v_0$). The resulting fuzzy Kripke model $\mMp$ has three states: $v_0$, $v_2$, and $v_5$.
\myend
\end{example}

The following lemma establishes a property of the fuzzy Kripke model returned by Algorithm~\ref{alg1}.

\begin{lemma}\label{lemma: KFGSB}
Suppose $U \notin \Phi$ and $\fand$ is strictly monotone w.r.t.~1. 
Let $\mMp$ be the fuzzy Kripke model returned by Algorithm~\ref{alg1} when applied to a finite fuzzy Kripke model~$\mM$ and its greatest fuzzy $\Phi$-auto-bisimulation~$E$. Then, for every basic program $\alpha$ w.r.t.~$\Phi$, and every $x,y \in \DeltaMp$ with $\alpha^\mMp(x,y) > 0$ set by the statement~\ref{step: alg1 12}, there does not exist any $y' \in \DeltaMp$ such that $y' \neq y$ and  \mbox{$\alpha^\mMp(x,y') \fand E(y',y) \geq \alpha^\mMp(x,y)$}.
\end{lemma}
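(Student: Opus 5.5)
The plan is a proof by contradiction. Fix a basic program $\alpha$ and states $x,y \in \DeltaMp$, and suppose $a := \alpha^\mMp(x,y) = \alpha^\mM(x,y) > 0$ was set by statement~\ref{step: alg1 12}. Call $t$ the iteration, during the processing of $x$, at which this happened. Assume, for contradiction, that some $y' \in \DeltaMp$ with $y' \neq y$ satisfies $b \fand E(y',y) \geq a$, where $b := \alpha^\mMp(x,y')$ is the final value.

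Two facts about iteration $t$ drive the argument.
\begin{itemize}
\item Since statement~\ref{step: alg1 12} was reached, the test at statement~\ref{step: alg1 7} failed. So no state already in $\DeltaMp$ is $E$-equivalent to $y$. Any $y' \neq y$ added to $\DeltaMp$ later is also not $E$-equivalent to $y$, because otherwise it would have been redirected to $y$ at statement~\ref{step: alg1 7} rather than added. Hence $E(y',y) < 1$. Since $b \geq b \fand E(y',y) \geq a > 0$, strict monotonicity of $\fand$ w.r.t.~1 gives $b \fand E(y',y) < b$.
\item Since the test at statement~\ref{step: alg1 6} succeeded at time $t$, every value $b_t := \alpha^\mMp(x,y')$ present at time $t$ satisfies $b_t \fand E(y',y) < a$.
\end{itemize}

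So it suffices to show that the edge $\alpha^\mMp(x,y')$ is never increased after time $t$ beyond what keeps $b \fand E(y',y) < a$. I would split into cases according to when and how the final value $b$ was assigned.

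Case 1: $b$ was already present at time $t$. This contradicts the second fact directly.

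Case 2: $b$ was assigned later, during the same processing of $x$ and for the same $\alpha$, at statement~\ref{step: alg1 8} or~\ref{step: alg1 12}. Successors are handled in decreasing order of $\alpha^\mM(x,\cdot)$, so $b = \alpha^\mM(x,y'') \leq a$ for some later successor $y''$. Then the first fact gives $b \fand E(y',y) < b \leq a$, a contradiction.

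I also need a side observation: statement~\ref{step: alg1 8} never lowers an existing edge. If $\alpha^\mMp(x,y')$ were already at least the new value, the test at statement~\ref{step: alg1 6} would hold with this $y'$ (since $E(y',y')=1$) and the assignment would be skipped. So values only grow over time, which justifies treating the final value as the one from the last assignment.

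Case 3, the main obstacle: $C \in \Phi$, and the $\alpha$-edge from $x$ was written while processing another state $z$ with the converse program $\alpha^-$. Since programs are in converse normal form, $\alpha^-$ is again a basic program, so this can happen. In this situation $z = y'$, and the value written is $\alpha^\mM(w,y')$ for some $w$ with $E(w,x)=1$ (or $w=x$). Here I would invoke the auto-bisimulation condition~\eqref{eq: FB2} for $E$. Since $E(w,x)=1$, there is $y''$ with $\alpha^\mM(w,y') \leq \alpha^\mM(x,y'') \fand E(y',y'')$. The idea is to compare $\alpha^\mM(x,y'')$ with $a$ using the decreasing order in which $x$'s successors were processed, together with transitivity of $E$:
\[ b \fand E(y',y) \leq \alpha^\mM(x,y'') \fand E(y'',y) . \]
Then the same two facts as before should exclude the inequality $\geq a$. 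Specifically:
\begin{itemize}
\item if $y''$ was handled before $y$, its coverage at time $t$ yields the bound from the second fact;
\item otherwise $\alpha^\mM(x,y'') \leq a$, and strict monotonicity applies as in Case 2.
\end{itemize}
I expect this converse case to require the most care, and it is where I would spend effort making the bookkeeping precise. The case without converses follows from Cases 1 and 2 alone.
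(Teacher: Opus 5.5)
Your proposal is correct. Its core is the paper's proof: your two facts ($E(y',y)<1$ for distinct states of $\DeltaMp$, and the successful test at statement~\ref{step: alg1 6} at time $t$), together with Cases~1--2, are exactly the paper's argument via the decreasing order of statement~\ref{step: alg1 5} and strict monotonicity.

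Where you go beyond the paper is Case~3. The paper's appeal to the decreasing order only controls assignments to $\alpha^\mMp(x,\cdot)$ made while $x$ is processed with $\alpha$. It says nothing about $\alpha^\mMp(x,y')$ being written while $y'$ is processed with $\alpha^-$ when $C\in\Phi$. Your use of \eqref{eq: FB2} for $E$ with $E(w,x)=1$ pulls the value $\alpha^\mM(w,y')$ back to an $\alpha$-successor $y''$ of $x$, and this is a sound way to close that case. Your side observation that statement~\ref{step: alg1 8} never lowers an edge is also needed and is not spelled out in the paper.

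When you write Case~3 out, make three small repairs.
\begin{enumerate}
\item In the ``otherwise'' branch, the displayed bound $b \fand E(y',y) \leq \alpha^\mM(x,y'') \fand E(y'',y)$ only yields $\leq a$, because $E(y'',y)$ may equal $1$ (for instance when $y''=y$). Strictness must come from your first fact applied to $b$ itself:
\[ b \fand E(y',y) < b \leq \alpha^\mM(x,y'') \fand E(y',y'') \leq \alpha^\mM(x,y'') \leq a. \]
\item The state processed with $\alpha^-$ need not be ``another'' state. If $y'=x$, the loop $\alpha^\mMp(x,x)$ can be written while $x$ itself is processed with $\alpha^-$ after its $\alpha$-loop. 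The same argument applies.
\item You tacitly take $a$ to be the final value of $\alpha^\mMp(x,y)$. The same \eqref{eq: FB2} argument shows that processing $y$ with $\alpha^-$ can never raise it above $a$. Such a raise would need some $w$ with $E(w,x)=1$ and $\alpha^\mM(w,y)>a$. Then there is a witness $y''$ with $\alpha^\mM(x,y'')\fand E(y,y'')>a$. That $y''$ would have been handled before $y$ and would have covered $y$ at statement~\ref{step: alg1 6}. This is worth stating, since Lemma~\ref{lemma: JORJH} applies the present lemma to the final value of $\alpha^\mMp(x,y)$.
\end{enumerate}
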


\begin{proof}
Due to the condition of the ``if'' statement~\ref{step: alg1 6}, when $\alpha^\mMp(x,y)$ is assigned a positive value, there exists no $y' \in \DeltaMp$ such that $y' \neq y$ and \mbox{$\alpha^\mMp(x,y') \fand E(y',y) \geq \alpha^\mMp(x,y)$}. Furthermore, due to the decreasing order imposed by the ``foreach'' statement~\ref{step: alg1 5}, together with the condition of the ``if'' statement~\ref{step: alg1 6} and the assumption that $\fand$ is strictly monotone w.r.t.~1, when $\alpha^\mMp(x,y')$ is assigned a positive value (either by the statement~\ref{step: alg1 8}, or by the statement~\ref{step: alg1 12} where~$y$ denotes~$y'$), there exists no $y \in \DeltaMp$ such that $y \neq y'$ and \mbox{$\alpha^\mMp(x,y') \fand E(y',y) \geq \alpha^\mMp(x,y)$}. 
In fact, by the decreasing order imposed by the ``foreach'' statement~\ref{step: alg1 5} and the assumption that $\fand$ is strictly monotone w.r.t.~1, the inequality \mbox{$\alpha^\mMp(x,y') \fand E(y',y) \geq \alpha^\mMp(x,y)$} can hold only if $E(y',y) = 1$. However, due to the ``if'' statement~\ref{step: alg1 7}, there do not exist distinct $y$ and $y'$ in $\DeltaMp$ such that $E(y',y) = 1$. 
\myend
\end{proof}

The following lemma shows that  the fuzzy Kripke model returned by Algorithm~\ref{alg1} is $\Phi$-bisimilar to the input model and therefore, by Theorem~\ref{theorem: invariance}, preserves its local assertions (in \fKP).

\begin{lemma}\label{lemma: KJHZL}
Suppose $U \notin \Phi$. Let $\mMp$ be the fuzzy Kripke model returned by Algorithm~\ref{alg1} when applied to a finite fuzzy Kripke model~$\mM$ and its greatest fuzzy $\Phi$-auto-bisimulation $E$. Then $\mMp$ is $\Phi$-bisimilar to~$\mM$.
\end{lemma}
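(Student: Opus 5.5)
We define $Z : \DeltaM \times \DeltaMp \to L$ by $Z(x,x') = E(x,x')$, which makes sense because $\DeltaMp \subseteq \DeltaM$ by~\eqref{eq: JHHAL}. We will show that $Z$ is a fuzzy $\Phi$-bisimulation between $\mM$ and $\mMp$ and that $Z(s^\mM, s^\mMp) = 1$ for every $s \in \Ss$. The second fact is read off the initialization loop. Either $s^\mMp = s^\mM$, or $s^\mMp = x$ for some $x$ with $E(s^\mM,x)=1$; in both cases $Z(s^\mM, s^\mMp) = 1$.

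\textbf{Condition~\eqref{eq: FB1}.} Propositions are copied, so $p^\mMp(x') = p^\mM(x')$. For a nominal $s \in \So$ (when $O \in \Phi$), $s^\mMp$ is the unique state of $\DeltaMp$ that is $E$-equivalent to $s^\mM$. Uniqueness holds because the test at statement~\ref{step: alg1 7} and the initialization never add two $E$-equivalent states. It follows that $s^\mMp(x') = 1$ iff $E(x', s^\mM) = 1$ iff $s^\mM(x') = 1$, where the last step uses Theorem~\ref{theorem: invariance} applied to $E$. Hence every basic formula $\varphi$ satisfies $\varphi^\mMp(x') = \varphi^\mM(x')$ for $x' \in \DeltaMp$. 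Then $Z(x,x') = E(x,x') \le (\varphi^\mM(x) \fequiv \varphi^\mM(x'))$ by Theorem~\ref{theorem: invariance}.

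\textbf{Condition~\eqref{eq: FB3}.} Every positive value $\alpha^\mMp(x',y')$ was set in statement~\ref{step: alg1 8} or~\ref{step: alg1 12}, in both cases equal to $\alpha^\mM(x',y_0)$ for some $y_0$ with $E(y_0,y')=1$. For $\alpha=\sigma^-$ the update is on $\sigma^\mMp(y',x')$, which gives the same equality. We apply~\eqref{eq: FB2} for $E$ to $x, x', y_0$ and obtain $y$ with $E(x,x') \fand \alpha^\mM(x',y_0) \le \alpha^\mM(x,y) \fand E(y,y_0)$. Since $E(y,y_0) = E(y,y')$ by Lemma~\ref{lemma: JFSJA}, condition~\eqref{eq: FB3} follows. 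If $\alpha^\mMp(x',y') = 0$, any $y$ works.

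\textbf{Condition~\eqref{eq: FB2}.} This is the main step. Take $x \in \DeltaM$, $x' \in \DeltaMp$ and $y \in \DeltaM$. By~\eqref{eq: FB2} for $E$, there is $y_1 \in \DeltaM$ with $E(x,x') \fand \alpha^\mM(x,y) \le \alpha^\mM(x',y_1) \fand E(y,y_1)$. It therefore suffices to find $y' \in \DeltaMp$ with $\alpha^\mM(x',y_1) \fand E(y,y_1) \le \alpha^\mMp(x',y') \fand E(y,y')$.

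First, every $x' \in \DeltaMp$ is eventually extracted from $\unprocessed$ and processed, since states enter $\DeltaMp$ and $\unprocessed$ together. When $x'$ is processed, all $\alpha$-successors $y_1$ of $x'$ in $\mM$ are considered, so at the end of that iteration some $y' \in \DeltaMp$ satisfies
\[ \alpha^\mMp(x',y') \fand E(y',y_1) \ge \alpha^\mM(x',y_1). \]
Three cases cover this. The test at statement~\ref{step: alg1 6} already succeeded. Or statement~\ref{step: alg1 8} fired, and then $E(y_1,y') = 1$. Or statement~\ref{step: alg1 12} fired, and then $y' = y_1$. We also need $\alpha^\mMp(x',y')$ never to decrease afterwards. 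I expect this to need care, since statement~\ref{step: alg1 8} overwrites values. Each assignment happens only when the test at statement~\ref{step: alg1 6} fails, and in that situation the old value $a$ of $\alpha^\mMp(x',y')$ satisfies $a = a \fand E(y',y') < \alpha^\mM(x',y)$. So the overwrite raises the value, and the witness inequality persists to the final model.

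The conclusion then uses transitivity of $E$ and monotonicity of $\fand$:
\[ \alpha^\mM(x',y_1) \fand E(y,y_1) \le \alpha^\mMp(x',y') \fand E(y',y_1) \fand E(y_1,y) \le \alpha^\mMp(x',y') \fand E(y',y). \]
Converse programs should go through uniformly, because the algorithm treats $\sigma^-$ as a basic program and updates $\sigma^\mMp$ consistently.
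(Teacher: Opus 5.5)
\textbf{What matches the paper.} Your choice of $Z$, the check on named states, \eqref{eq: FB1}, and \eqref{eq: FB2} agree with the paper's proof. In \eqref{eq: FB2} you are more careful than the paper. You observe that every write at statement~\ref{step: alg1 8} strictly increases the entry it overwrites, because the search at statement~\ref{step: alg1 6} has just failed for the target state itself. That is what carries the witness inequality from the moment $x'$ is processed to the final model, and the paper leaves it implicit. This observation is valid for every write, including writes to $\sigma^\mMp(x',y')$ made while $y'$ is processed with $\sigma^-$. Your \eqref{eq: FB3} paragraph overlooks exactly that second kind of write.

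\textbf{The gap in \eqref{eq: FB3} when $C\in\Phi$.} You claim that every positive final value $\alpha^\mMp(x',y')$ equals $\alpha^\mM(x',y_0)$ for some $y_0$ with $E(y_0,y')=1$. This only covers writes made while $x'$ is processed. When $y'$ is processed with $\sigma^-$, statement~\ref{step: alg1 8} performs $\sigma^\mMp(x',y'):=\sigma^\mM(w,y')$ for some $w$ with $E(w,x')=1$. The equivalence then sits on the source, not the target. Symmetrically, $(\sigma^-)^\mMp(x',y')$ can be written while $y'$ is processed with $\sigma$.

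This really occurs. Take the product or {\L}ukasiewicz semantics, $C\in\Phi$, $U,O\notin\Phi$, $\SP=\emptyset$, $\DeltaM=\{w,x',y',u,u'\}$, and $\sigma^\mM=\{\tuple{w,y'}\!:\!0.5,\ \tuple{w,u}\!:\!1,\ \tuple{x',u'}\!:\!1\}$. Let $\Ss=\{\tau_1,\tau_2\}$ with $\tau_1^\mM=x'$ and $\tau_2^\mM=y'$. Then $E(w,x')=E(u,u')=1$ and $E(y',u)=E(y',u')=0.5$. In every extraction order, processing $y'$ sets $\sigma^\mMp(x',y')=0.5$ via statement~\ref{step: alg1 8}. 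Yet the only $\sigma$-successor of $x'$ in $\mM$ is $u'$, with $\sigma^\mM(x',u')=1$ and $E(u',y')=0.5$, so your claimed equality fails.

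\textbf{The repair is short.} The final value is always $\alpha^\mM(w,v)$ for some $w,v$ with $E(w,x')=E(v,y')=1$. By Lemma~\ref{lemma: JFSJA} and \eqref{eq: FB2} for $E$ applied at the pair $(w,x)$, there is $y\in\DeltaM$ with
\[ E(x,x')\fand\alpha^\mMp(x',y') = E(w,x)\fand\alpha^\mM(w,v) \leq \alpha^\mM(x,y)\fand E(v,y) = \alpha^\mM(x,y)\fand E(y,y'). \]
The paper's own argument for \eqref{eq: FB3} makes the same tacit assumption: it asserts that the positive value of $\alpha^\mMp(x,y_1)$ is set in the iteration in which $x$ is extracted. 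The same repair is therefore needed there.
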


\begin{proof}
Let $Z = E|_{\DeltaM \times \DeltaMp}$. For every $s \in \Ss$, either $s^\mMp = x$ for some $x \in \DeltaMp$ with \mbox{$E(s^\mM,x) = 1$}, or $s^\mMp = s^\mM$ (and $Z(s^\mM, s^\mMp) = E(s^\mM, s^\mM) = 1$). Hence $Z(s^\mM, s^\mMp) = 1$ for all $s \in \Ss$. Thus, it suffices to prove that $Z$ is a fuzzy $\Phi$-bisimulation between~$\mM$ and~$\mMp$. 

Consider the condition~\eqref{eq: FB1} in the case $\varphi = p$, where $p \in \SP$. By the statements~\ref{step: alg1 1d} and~\ref{step: alg1 11}, we have $p^\mMp(x') = p^\mM(x')$ for all $p \in \SP$ and $x' \in \DeltaMp$. Consequently, since $E$ is a fuzzy $\Phi$-auto-bisimulation of~$\mM$, for every $p \in \SP$, $x \in \DeltaM$, and $x' \in \DeltaMp$, we have
\[ Z(x,x') = E(x,x') \leq (p^\mM(x) \fequiv p^\mM(x')) = (p^\mM(x) \fequiv p^\mMp(x')). \]

Consider the condition~\eqref{eq: FB1} in the case where $O \in \Phi$ and $\varphi = s$ with $s \in \So$. Since $E$ is a fuzzy $\Phi$-auto-bisimulation of~$\mM$, for every $s \in \So$ and $x \in \DeltaM$, it follows that $E(s^\mM,x) = 1$ only if $x = s^\mM$. Therefore, by the statements~\ref{step: alg1 1b}--\ref{step: alg1 1d}, we have $s^\mMp = s^\mM$ for all $s \in \So$. Consequently, since $E$ is a fuzzy $\Phi$-auto-bisimulation of~$\mM$, for every $s \in \So$, $x \in \DeltaM$, and $x' \in \DeltaMp$, we have
\[ Z(x,x') = E(x,x') \leq (s^\mM(x) \fequiv s^\mM(x')) = (s^\mM(x) \fequiv s^\mMp(x')). \]

Consider the condition~\eqref{eq: FB2} in the following form, where $x_0, y_0 \in \DeltaM$, $x \in \DeltaMp$ and $\alpha$ is a basic program w.r.t.~$\Phi$:
\[ \E y' \in \DeltaMp\ (Z(x_0,x) \fand \alpha^\mM(x_0,y_0) \leq \alpha^\mMp(x,y') \fand Z(y_0,y')). \]
It is equivalent to 
\begin{equation}\label{eq: JJSKW}
\E y' \in \DeltaMp\ (E(x_0,x) \fand \alpha^\mM(x_0,y_0) \leq \alpha^\mMp(x,y') \fand E(y_0,y')).
\end{equation}
Without loss of generality, assume that $0 < E(x_0,x) \fand \alpha^\mM(x_0,y_0)$. Since $E$ is a fuzzy $\Phi$-auto-bisimulation of $\mM$, there exists $y \in \DeltaM$ such that 
\begin{equation}\label{eq: JHDJW}
E(x_0,x) \fand \alpha^\mM(x_0,y_0) \leq \alpha^\mM(x,y) \fand E(y_0,y).
\end{equation}
We have $\alpha^\mM(x,y) > 0$. Consider the iteration of the ``while'' loop in which $x$ is extracted from $\unprocessed$. 
If the conditions of the ``if'' statements~\ref{step: alg1 6} and~\ref{step: alg1 7} hold, then $E(y,y') = 1$ and after executing the statement~\ref{step: alg1 8} we have $\alpha^\mMp(x,y') = \alpha^\mM(x,y)$. In this case, \eqref{eq: JJSKW} holds because, by~\eqref{eq: JHDJW} and the fact that $E$ is a fuzzy equivalence relation, we have
\begin{eqnarray*}
& E(x_0,x) \fand \alpha^\mM(x_0,y_0) \leq \alpha^\mM(x,y) \fand E(y_0,y) = \\
& = \alpha^\mMp(x,y') \fand E(y_0,y) \fand E(y,y') \leq \alpha^\mMp(x,y') \fand E(y_0,y').
\end{eqnarray*}
Consider the case where the condition of the ``if'' statement~\ref{step: alg1 6} holds, but the condition of the ``if'' statement~\ref{step: alg1 7} does not. After executing the body of the ``else'' statement~\ref{step: alg1 9}, \eqref{eq: JJSKW} follows from~\eqref{eq: JHDJW} by taking $y' = y$.
Now, consider the case where the condition of the ``if'' statements~\ref{step: alg1 6} does not hold. Thus, there exists $y' \in \DeltaMp$ such that $\alpha^\mMp(x,y') \fand E(y',y) \geq \alpha^\mM(x,y)$. By~\eqref{eq: JHDJW} and the fact that $E$ is a fuzzy equivalence relation, it follows that 
\[ E(x_0,x) \fand \alpha^\mM(x_0,y_0) \leq \alpha^\mMp(x,y') \fand E(y',y) \fand E(y_0,y) \leq \alpha^\mMp(x,y') \fand E(y_0,y'). \]
This implies~\eqref{eq: JJSKW} and completes the proof of the condition~\eqref{eq: FB2}. 

Consider the condition~\eqref{eq: FB3} in the following form, where $x_0 \in \DeltaM$, $x,y_1 \in \DeltaMp$ and $\alpha$ is a basic program w.r.t.~$\Phi$:
\[ \E y_0 \in \DeltaM\ (Z(x_0,x) \fand \alpha^\mMp(x,y_1) \leq \alpha^\mM(x_0,y_0) \fand Z(y_0,y_1)). \]
It is equivalent to 
\begin{equation}\label{eq: OITND}
\E y_0 \in \DeltaM\ (E(x_0,x) \fand \alpha^\mMp(x,y_1) \leq \alpha^\mM(x_0,y_0) \fand E(y_0,y_1)).
\end{equation}
Without loss of generality, assume that $0 < E(x_0,x) \fand \alpha^\mMp(x,y_1)$. 
Consider the iteration of the ``while'' loop in which $x$ is extracted from $\unprocessed$. The positive value of $\alpha^\mMp(x,y_1)$ is set in this iteration either by the statement~\ref{step: alg1 8} with $y' = y_1$ or by the statement~\ref{step: alg1 12} with $y = y_1$. For the former case, let $y \in \DeltaM$ be the involved element. 
Since $E$ is a fuzzy $\Phi$-auto-bisimulation of~$\mM$, there exists $y_0 \in \DeltaM$ such that 
\begin{equation}\label{eq: JRIOA}
E(x_0,x) \fand \alpha^\mM(x,y) \leq \alpha^\mM(x_0,y_0) \fand E(y_0,y).
\end{equation}

Consider the case where $\alpha^\mMp(x,y_1)$ is set by the statement~\ref{step: alg1 8} with $y' = y_1$. We have \mbox{$E(y,y') = 1$} and $\alpha^\mMp(x,y') = \alpha^\mM(x,y)$. 
Consequently, by using also~\eqref{eq: JRIOA} and the fact that $E$ is a fuzzy equivalence relation, we have the following, which implies~\eqref{eq: OITND}:
\begin{eqnarray*}
&    E(x_0,x) \fand \alpha^\mMp(x,y_1) 
   = E(x_0,x) \fand \alpha^\mMp(x,y')
   = E(x_0,x) \fand \alpha^\mM(x,y) 
   \leq \alpha^\mM(x_0,y_0) \fand E(y_0,y) = \\
&
   = \alpha^\mM(x_0,y_0) \fand E(y_0,y) \fand E(y,y')
   \leq \alpha^\mM(x_0,y_0) \fand E(y_0,y')
   = \alpha^\mM(x_0,y_0) \fand E(y_0,y_1).
\end{eqnarray*}

Consider the remaining case, where $\alpha^\mMp(x,y_1)$ is set by the statement~\ref{step: alg1 12} with $y = y_1$. We have $\alpha^\mMp(x,y_1) = \alpha^\mMp(x,y) = \alpha^\mM(x,y)$. By using~\eqref{eq: JRIOA} and the fact that $E$ is a fuzzy equivalence relation, it follows that:
\[   E(x_0,x) \fand \alpha^\mMp(x,y_1) 
   = E(x_0,x) \fand \alpha^\mM(x,y) 
   \leq \alpha^\mM(x_0,y_0) \fand E(y_0,y) 
   = \alpha^\mM(x_0,y_0) \fand E(y_0,y_1).
\]
Hence, \eqref{eq: OITND} holds. This completes the proof of the condition~\eqref{eq: FB3}, as well as the proof of the lemma.
\myend
\end{proof}

In combination with the previous lemma, the following lemma shows that the model returned by Algorithm~\ref{alg1} is a minimal fuzzy Kripke model $\Phi$-bisimilar to the input model.

\begin{lemma}\label{lemma: JORJH}
Suppose $U \notin \Phi$ and $\fand$ is right-continuous and strictly monotone w.r.t.~1. Let $\mMp$ be the fuzzy Kripke model returned by Algorithm~\ref{alg1} when applied to a finite fuzzy Kripke model~$\mM$ and its greatest fuzzy $\Phi$-auto-bisimulation $E$. Let $\mMdp$ be any fuzzy Kripke model $\Phi$-bisimilar to~$\mM$. Then $|\DeltaMdp| \geq |\DeltaMp|$.
\end{lemma}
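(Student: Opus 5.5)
Without loss of generality we may assume $\mMdp$ is finite; this matches the reduction in the proof of Theorem~\ref{theorem: JFKSN}. The plan is to build an injection from $\DeltaMp$ into $\DeltaMdp$. Fix a fuzzy $\Phi$-bisimulation witnessing that $\mM$ and $\mMdp$ are $\Phi$-bisimilar. By Proposition~\ref{prop: JHDJH} we may then take $Z$ to be the greatest fuzzy $\Phi$-bisimulation between $\mMp$ and $\mMdp$. Indeed, compose $(E|_{\DeltaM\times\DeltaMp})^-$ from Lemma~\ref{lemma: KJHZL} with the given one; then $Z(s^\mMp,s^\mMdp)=1$ for all $s\in\Ss$.

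The aim is to show, by induction along the order in which states are added to $\DeltaMp$ by Algorithm~\ref{alg1}, that every $x\in\DeltaMp$ has some $f(x)\in\DeltaMdp$ with $Z(x,f(x))=1$. Given such an $f$, it is injective. If $f(x)=f(x')$, then by Theorem~\ref{theorem: invariance} the states $x$ and $x'$ satisfy the same formulas, as states of $\mMp$. Via Lemma~\ref{lemma: KJHZL}, $x\sim_\Phi x$ and $x'\sim_\Phi x'$ as states of $\mM$, so they also satisfy the same formulas in $\mM$. Theorem~\ref{theorem: HM}, which uses right-continuity, then gives $E(x,x')=1$. Statements~\ref{step: alg1 1c} and~\ref{step: alg1 7} ensure that distinct states of $\DeltaMp$ are never $E$-equivalent, so $x=x'$. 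Hence $|\DeltaMdp|\ge|\DeltaMp|$.

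The base case covers the named states: $Z(s^\mMp,s^\mMdp)=1$. For the inductive step, suppose $y$ was added at statement~\ref{step: alg1 10} as an $\alpha$-successor of some already present $x$, with $a=\alpha^\mMp(x,y)>0$, and let $x''=f(x)$. Apply \eqref{eq: FB2} to $Z$, and then \eqref{eq: FB3} back to $Z$. This yields $y''\in\DeltaMdp$ with $a\le\alpha^\mMdp(x'',y'')\fand Z(y,y'')$, and then $y_1\in\DeltaMp$ with
\[ a\le \alpha^\mMdp(x'',y'')\fand Z(y,y'') \le \alpha^\mMp(x,y_1)\fand Z(y_1,y'')\fand Z(y,y''). \]
Composing $Z$ with $Z^-$ gives an auto-bisimulation of $\mMp$. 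We need to compare it with $E$ restricted to $\DeltaMp$, and the cleanest route is again through logical indistinguishability and Theorem~\ref{theorem: HM}. This gives $Z(y_1,y'')\fand Z(y,y'')\le E(y_1,y)$, so $a\le \alpha^\mMp(x,y_1)\fand E(y_1,y)$.

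Lemma~\ref{lemma: KFGSB}, which relies on strict monotonicity w.r.t.~1, now forces $y_1=y$. Then $a\le\alpha^\mMp(x,y)\fand Z(y,y'')=a\fand Z(y,y'')$. Since $a>0$, strict monotonicity w.r.t.~1 forces $Z(y,y'')=1$, and we set $f(y)=y''$. The main obstacle is this step: justifying $Z\circ Z^-\le E$ on $\DeltaMp$. I would handle it by showing that $E|_{\DeltaMp}$ coincides with the greatest auto-bisimulation of $\mMp$. Both equal the infimum of $\varphi(u)\fequiv\varphi(v)$ over all formulas, by Theorem~\ref{theorem: HM} applied in $\mM$ and in $\mMp$ together with the preservation of values given by Lemma~\ref{lemma: KJHZL} and Theorem~\ref{theorem: invariance}. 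Then $Z\circ Z^-$, being an auto-bisimulation of $\mMp$ by Proposition~\ref{prop: JHDJH}, lies below it.
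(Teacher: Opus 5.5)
Your proof is correct and follows the paper's argument essentially step for step. You take the greatest bisimulation $Z$ between $\mMp$ and $\mMdp$ and argue by induction on the order in which states enter $\DeltaMp$, applying \eqref{eq: FB2} and then \eqref{eq: FB3}; Lemma~\ref{lemma: KFGSB} forces $y_1=y$, strict monotonicity gives $Z(y,y'')=1$, and Theorem~\ref{theorem: HM} together with statements~\ref{step: alg1 1c} and~\ref{step: alg1 7} gives distinctness. The one cosmetic difference is how you get $Z(y_1,y'')\fand Z(y,y'')\le E(y_1,y)$: you bound $Z\circ Z^-$ by the greatest auto-bisimulation of $\mMp$, which you identify with $E$ restricted to $\DeltaMp$, whereas the paper applies $(a\fequiv b)\fand(b\fequiv c)\le(a\fequiv c)$ formula by formula and then invokes Theorem~\ref{theorem: HM} in $\mM$.
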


\begin{proof}
Without loss of generality, assume that $\mMdp$ is finite. 
By Lemma~\ref{lemma: KJHZL} and Corollary~\ref{cor: NSLAJ}, $\mMp$ and $\mMdp$ are $\Phi$-bisimilar. 
Let $Z$ be the greatest fuzzy $\Phi$-bisimulation between $\mMp$ and $\mMdp$. We have $Z(s^\mMp, s^\mMdp) = 1$ for all $s \in \Ss$. 

We first prove that, for every $y \in \DeltaMp$, there exists $y'' \in \DeltaMdp$ such that \mbox{$Z(y,y'') = 1$}. 
It is done by induction on the moment when $y$ is added to $\DeltaMp$. 
Consider the case $y \neq s^\mMp$ for all $s \in \Ss$ and the context where $y$ is added to $\DeltaMp$ by the statement~\ref{step: alg1 10}. 
By~\eqref{eq: JHHAL}, $x \in \DeltaMp$. By the inductive assumption, there exists $x'' \in \DeltaMdp$ such that $Z(x,x'') = 1$. 
Since $Z$ is a fuzzy $\Phi$-bisimulation between $\mMp$ and $\mMdp$, there exists $y'' \in \DeltaMdp$ such that 
\[
   Z(x,x'') \fand \alpha^\mMp(x,y) \leq \alpha^\mMdp(x'',y'') \fand Z(y,y'').
\]
For a similar reason, there also exists $y' \in \DeltaMp$ such that 
\[
   Z(x,x'') \fand \alpha^\mMdp(x'',y'') \leq \alpha^\mMp(x,y') \fand Z(y',y'').
\]
Since $Z(x,x'') = 1$, it follows that 
\begin{equation}\label{eq: JHRIA}
\alpha^\mMp(x,y) \leq \alpha^\mMp(x,y') \fand Z(y',y'') \fand Z(y,y'').
\end{equation}
It is known that (see, e.g., \cite[Lemma~2.1]{FBSML}), for every $a,b,c \in L$, 
\[ (a \fequiv b) \fand (b \fequiv c) \leq (a \fequiv c). \]
Using this and Theorem~\ref{theorem: HM}, we can derive that 
\begin{equation}\label{eq: JHHGS}
Z(y',y'') \fand Z(y,y'') \leq \bigwedge\{\varphi^\mMp(y') \fequiv \varphi^\mMp(y) \mid \textrm{$\varphi$ is a formula of \fKP}\}.
\end{equation}
As shown in the proof of Lemma~\ref{lemma: KJHZL}, $E|_{\DeltaM \times \DeltaMp}$ is a fuzzy $\Phi$-bisimulation between $\mM$ and $\mMp$. By Theorem~\ref{theorem: invariance}, it follows that $\varphi^\mM(y) = \varphi^\mMp(y)$ and $\varphi^\mM(y') = \varphi^\mMp(y')$, for any formula $\varphi$ of \fKP. Hence, \eqref{eq: JHHGS} can be rewritten to 
\[ Z(y',y'') \fand Z(y,y'') \leq \bigwedge\{\varphi^\mM(y') \fequiv \varphi^\mM(y) \mid \textrm{$\varphi$ is a formula of \fKP}\}. \]
By Theorem~\ref{theorem: HM}, it follows that 
\[ Z(y',y'') \fand Z(y,y'') \leq E(y',y). \]
From this and~\eqref{eq: JHRIA} it follows that 
\[ \alpha^\mMp(x,y) \leq \alpha^\mMp(x,y') \fand E(y',y). \]
Consequently, by Lemma~\ref{lemma: KFGSB}, we must have $y = y'$. 
By~\eqref{eq: JHRIA} and the assumption that $\fand$ is strictly monotone w.r.t.~1, we further derive that $Z(y,y'') = 1$. 

Let $|\DeltaMp| = n$ and $\DeltaMp = \{y_1,\ldots,y_n\}$. Consider any different indices \mbox{$1 \leq i, j \leq n$}. By the conditions of the ``if'' statements~\ref{step: alg1 1c} and~\ref{step: alg1 7}, we have $E(y_i, y_j) < 1$. Consequently, by Theorem~\ref{theorem: HM}, there exists a formula $\varphi_{i,j}$ of \fKP such that $\varphi_{i,j}^\mM(y_i) \neq \varphi_{i,j}^\mM(y_j)$. Using a similar reasoning as done before, we have $\varphi_{i,j}^\mM(y_i) = \varphi_{i,j}^\mMp(y_i)$ and $\varphi_{i,j}^\mM(y_j) = \varphi_{i,j}^\mMp(y_j)$. Hence, $\varphi_{i,j}^\mMp(y_i) \neq \varphi_{i,j}^\mMp(y_j)$. 

There exist $y''_1,\ldots,y''_n \in \DeltaMdp$ such that $Z(y_i, y''_i) = 1$ for all $1 \leq i \leq n$. Consider any different indices \mbox{$1 \leq i, j \leq n$}. Since $Z$ is a fuzzy $\Phi$-bisimulation between $\mMp$ and $\mMdp$, by Theorem~\ref{theorem: invariance}, we have $\varphi_{i,j}^\mMp(y_i) = \varphi_{i,j}^\mMdp(y''_i)$ and $\varphi_{i,j}^\mMp(y_j) = \varphi_{i,j}^\mMdp(y''_j)$. Hence, $\varphi_{i,j}^\mMdp(y''_i) \neq \varphi_{i,j}^\mMdp(y''_j)$. This implies that $y''_i \neq y''_j$. Therefore, $|\DeltaMdp| \geq n$, which completes the proof.
\myend
\end{proof}

We give below the main result of this subsection.

\begin{theorem}\label{theorem: LFHZM}
Suppose $U \notin \Phi$ and $\fand$ is right-continuous and strictly monotone w.r.t.~1. Let $\mM$ be a finite fuzzy Kripke model and let $E$ be its greatest fuzzy $\Phi$-auto-bisimulation. Then the fuzzy Kripke model $\mMp$ returned by Algorithm~\ref{alg1} when applied to~$\mM$ and~$E$ is a minimal reduction of $\mM$ preserving local assertions in \fKP. Furthermore, $\mMp$ is also a minimal fuzzy Kripke model $\Phi$-bisimilar to~$\mM$. 
\end{theorem}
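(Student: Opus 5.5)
The plan is to combine Lemmas~\ref{lemma: KJHZL} and~\ref{lemma: JORJH} with Theorems~\ref{theorem: invariance} and~\ref{theorem: HM}. We prove the two claims of the statement in turn: first that $\mMp$ is a minimal fuzzy Kripke model $\Phi$-bisimilar to $\mM$, and then that it is a minimal reduction preserving local assertions.

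\textbf{Bisimilarity part.} By Lemma~\ref{lemma: KJHZL}, $\mMp$ is $\Phi$-bisimilar to $\mM$. By Lemma~\ref{lemma: JORJH}, every fuzzy Kripke model $\mMdp$ $\Phi$-bisimilar to $\mM$ satisfies $|\DeltaMdp| \geq |\DeltaMp|$. Together these give minimality of $\mMp$ among the models $\Phi$-bisimilar to $\mM$.

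\textbf{Local assertions: preservation.} Let $Z$ be a fuzzy $\Phi$-bisimulation between $\mM$ and $\mMp$ with $Z(s^\mM,s^\mMp)=1$ for all $s \in \Ss$; it exists by Lemma~\ref{lemma: KJHZL}. Both models are finite, since $\DeltaMp \subseteq \DeltaM$ by~\eqref{eq: JHHAL}. Hence Theorem~\ref{theorem: invariance} gives $(\varphi^\mM(s^\mM) \fequiv \varphi^\mMp(s^\mMp)) = 1$ for every formula $\varphi$ of \fKP. Therefore $\varphi^\mMp(s^\mMp) = \varphi^\mM(s^\mM)$, using the fact recalled in footnote~\ref{footnote: HGJHS}.

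\textbf{Local assertions: minimality.} Let $\mMdp$ satisfy $\varphi^\mMdp(s^\mMdp)=\varphi^\mM(s^\mM)$ for all formulas $\varphi$ and all $s \in \Ss$. We need $|\DeltaMdp| \geq |\DeltaMp|$.
\begin{enumerate}
\item If $\mMdp$ is infinite, the inequality is trivial, so we may assume $\mMdp$ is finite.
\item Let $Z$ be the fuzzy relation given by Theorem~\ref{theorem: HM} for $\mM$ and $\mMdp$. It is the greatest fuzzy $\Phi$-bisimulation between them, and $Z(x,x'')$ is the infimum of $\varphi^\mM(x) \fequiv \varphi^\mMdp(x'')$ over all formulas $\varphi$.
\item For $x = s^\mM$ and $x'' = s^\mMdp$, every term of this infimum equals~$1$ by the hypothesis on $\mMdp$. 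Hence $Z(s^\mM,s^\mMdp)=1$ for all $s\in\Ss$, so $\mMdp$ is $\Phi$-bisimilar to $\mM$.
\item Lemma~\ref{lemma: JORJH} then yields $|\DeltaMdp| \geq |\DeltaMp|$.
\end{enumerate}
This mirrors the last paragraph of the proof of Theorem~\ref{theorem: JFKSN}.

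The only delicate point is the reduction to finite $\mMdp$. Theorems~\ref{theorem: invariance} and~\ref{theorem: HM}, and Lemma~\ref{lemma: JORJH} through its own WLOG, are stated for finite models, but an infinite $\mMdp$ trivially has at least $|\DeltaMp|$ states. Beyond this, no new obstacle is expected, because the substantive work is already done in the preceding lemmas.
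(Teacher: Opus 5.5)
Your proposal is correct and follows essentially the same route as the paper: preservation via Lemma~\ref{lemma: KJHZL} and Theorem~\ref{theorem: invariance}, minimality for local assertions via Theorem~\ref{theorem: HM} and Lemma~\ref{lemma: JORJH}, and minimality among $\Phi$-bisimilar models via Lemmas~\ref{lemma: KJHZL} and~\ref{lemma: JORJH}. Your explicit treatment of an infinite $\mMdp$ only spells out the restriction to finite models that the paper makes without comment.
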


\begin{proof}
By Lemma~\ref{lemma: KJHZL} and Theorem~\ref{theorem: invariance}, $\varphi^\mMp(s^\mMp) = \varphi^\mM(s^\mM)$ for every formula $\varphi$ of \fKP and every $s \in \Ss$. Let $\mMdp$ be a finite fuzzy Kripke model such that $\varphi^\mMdp(s^\mMdp) = \varphi^\mM(s^\mM)$ for every formula $\varphi$ of \fKP and every $s \in \Ss$. By Theorem~\ref{theorem: HM}, $\mMdp$ and $\mM$ are $\Phi$-bisimilar. By Lemma~\ref{lemma: JORJH}, it follows that $|\DeltaMdp| \geq |\DeltaMp|$. Therefore, $\mMp$ is a minimal reduction of $\mM$ preserving local assertions in \fKP. 
By Lemmas~\ref{lemma: KJHZL} and~\ref{lemma: JORJH}, $\mMp$ is also a minimal fuzzy Kripke model $\Phi$-bisimilar to~$\mM$.
\myend
\end{proof}

The following proposition addresses the complexity of Algorithm~\ref{alg1} when the greatest fuzzy $\Phi$-auto-bisimulation of the input fuzzy Kripke model is given explicitly.

\begin{proposition}\label{prop: JHEKA}
Given a finite fuzzy Kripke model $\mM$ and its greatest fuzzy $\Phi$-auto-bisimulation~$E$ as input, Algorithm~\ref{alg1} runs in time $O(mn)$, where $n = |\Delta^\mM|$ and $m = |\{\tuple{x,\sigma,y} \in \DeltaM \times \SA \times \DeltaM : \sigma^\mM(x,y) > 0\}|$ (the number of non-zero transitions in $\mM$), assuming that $m \geq 1$ and that the signature $\tuple{\SA,\SP,\Ss,\So}$ is fixed (i.e., has constant size).
\end{proposition}

\begin{proof}
The initialization (the statements \ref{step: alg1 1}--\ref{step: alg1 2}) can be performed in time $O(1)$. 
The ``foreach'' loop~\ref{step: alg1 5} executes at most $m$ iterations in total, each of which takes time $O(n)$. Therefore, 
the ``while'' loop runs in time $O(n + mn)$. Thus, the overall running time of the algorithm is $O(mn)$, assuming $m \geq 1$.
\myend
\end{proof}

%===============================================================================
\section{Implementation and performance tests}
\label{section: experiments}

We have developed a Python package~\cite{MinFKM-prog} that implements the construction of the quotient fuzzy Kripke model $\mME$ under the product, G\"odel, and {\L}ukasiewicz semantics for a given finite fuzzy Kripke model $\mM$ and its greatest fuzzy $\Phi$-auto-bisimulation~$E$. The package also implements Algorithm~\ref{alg1} for the case where the underlying residuated lattice is the product or {\L}ukasiewicz structure. In addition, it includes a previously developed implementation for computing the greatest fuzzy $\Phi$-auto-bisimulation of a given finite fuzzy Kripke model $\mM$ under the product, G\"odel, or {\L}ukasiewicz semantics~\cite{FDSML,CompMinimaxG}.

In the following, let $\mM$ be a finite fuzzy Kripke model, and let $E$ be its greatest fuzzy $\Phi$-auto-bisimulation. Recall that, by Theorem~\ref{theorem: HSJAO}, $\mME$ is a minimal reduction of $\mM$ that preserves global assertions in \fKP, provided that $\fand$ is right-continuous and $\lor \in \Phi$. Moreover, by Theorem~\ref{theorem: JFKSN}, if $U \in \Phi$, then $\mME$ is both a minimal fuzzy Kripke model that is $\Phi$-bisimilar to $\mM$ and a minimal reduction of $\mM$ preserving local assertions in~\fKP. 
Furthermore, by Theorem~\ref{theorem: LFHZM}, if $U \notin \Phi$, then the fuzzy Kripke model returned by Algorithm~\ref{alg1} when applied to $\mM$ and $E$ is both a minimal fuzzy Kripke model that is $\Phi$-bisimilar to $\mM$ and a minimal reduction of $\mM$ preserving local assertions in~\fKP.

The module {\em MinFKMPL.py} of~\cite{MinFKM-prog} provides the following functions:
\begin{itemize}
\item \verb|computeQuotientFuzzyKripkeModel|$(\mM, \mathit{lattice}, E)$ computes and returns the quotient fuzzy Kripke model $\mME$, together with the execution time (in seconds), using the residuated lattice specified by the second parameter.

\item \verb|computeQuotientFuzzyKripkeModel_|$(\mM, \mathit{lattice})$ first computes the greatest fuzzy $\Phi$-auto-bisimulation $E$ of $\mM$ and then calls \verb|computeQuotientFuzzyKripkeModel| to construct $\mME$. The second parameter must specify the product, G\"odel, or {\L}ukasiewicz structure. The information about whether $C \in \Phi$ and/or $O \in \Phi$ is encoded in the input model $\mM$. The function returns a triple $(\mME, t_1, t_2)$, where $t_1$ is the time required to compute $E$ and $t_2$ is the additional time required to construct $\mME$, both measured in seconds.

\item \verb|minPreservingLocalAssertions_as_in_the_paper|$(\mM, \mathit{lattice}, E)$ implements Algorithm~\ref{alg1} using the residuated lattice specified by the second parameter. The information about whether $C \in \Phi$ and/or $O \in \Phi$ is encoded in the input model $\mM$. In addition to the reduced model, the function returns the execution time (in seconds).

\item \verb|minPreservingLocalAssertions_as_in_the_paper_|$(\mM, \mathit{lattice})$ first computes the greatest fuzzy $\Phi$-auto-bisimulation $E$ of $\mM$ and then calls the function described in the previous list item. The second parameter must specify either the product structure or the {\L}ukasiewicz structure. The function returns a triple $(\mMp, t_1, t_2)$, where $\mMp$ is the reduced model, $t_1$ is the time required to compute $E$, and $t_2$ is the additional time required to compute $\mMp$, both measured in seconds.

\item \verb|minPreservingLocalAssertions| and \verb|minPreservingLocalAssertions_| are improved versions of the two functions above, respectively.
\end{itemize}
The code under \verb|if __name__ == '__main__'| in the module {\em MinFKMPL.py} provides examples illustrating the use of the aforementioned functions. Executing the module {\em MinFKMPL.py} (i.e., executing that code) produces the reductions described in Examples~\ref{example: HFJSK}, \ref{example: HGDHJ}, \ref{example: HFJSK2}, and~\ref{example: HGDHJ2}.

\begin{table}[t]
\caption{Running times of computing the greatest fuzzy $\Phi$-auto-bisimulation $E$ of a randomly generated fuzzy Kripke model $\mM$ (columns $t_1$), constructing the quotient fuzzy Kripke model $\mME$ (columns $t_2$), and executing Algorithm~\ref{alg1} (columns $t_3$). To generate $\mM$, we use the parameters listed in columns 2--11 and set $\So = \Ss$. Here, $C?$, $O?$, and $A?$ indicate whether $C \in \Phi$, whether $O \in \Phi$, and whether $\mM$ is acyclic, respectively. The meanings of the remaining parameters are described in the text accompanying this table.\label{table: JHKSP}}

\smallskip

\centering
{\footnotesize
\begin{tabular}{|c||c|c|c|r|r|r|r|r|r|c||r|c|c||r|r|c|}
\hline
&&&&&&&&&&& \multicolumn{6}{c|}{Running time (ms)}\\
\cline{12-17}
$\!\!$Test$\!\!$ & $\!C?\!$ & $\!O?\!$ & $n$ & $m$ & $\!\!|\Ss|\!\!$ & $p$ & $l$ & $\!\!|\SP|\!\!$ & $\!\!|\SA|\!\!$ & $\!A?\!$ & \multicolumn{3}{c||}{product sem.} & \multicolumn{3}{c|}{{\L}ukasiewicz sem.} \\
\cline{12-17}
&&&&&&&&&&& $t_1$ & $t_2$ & $t_3$ & $t_1$ & $t_2$ & $t_3$ \\
\hline
\hline
1 & {\bf 0} & {\bf 0} & {\bf 100} & {\bf 600} & {\bf 6} & {\bf 300} & {\bf 10} & {\bf 10} & {\bf 10} & {\bf 0} & {\bf 3116} & {\bf 1} & {\bf 1} & {\bf 3703} & {\bf 179} & {\bf 2} \\ \hline
2 & {\bf 1} & 0 & 100 & 600 & 6 & 300 & 10 & 10 & 10 & 0 & 7700 & 3 & 3 & 8951 & 27 & 3 \\ \hline
3 & 0 & {\bf 1} & 100 & 600 & 6 & 300 & 10 & 10 & 10 & 0 & 3210 & 1 & 1 & 3642 & 145 & 1 \\ \hline
4 & 0 & 0 & 100 & {\bf 200} & 6 & 300 & 10 & 10 & 10 & 0 & 2673 & 1 & 0 & 2732 & 277 & 1 \\ \hline
5 & 0 & 0 & 100 & {\bf 1000} & 6 & 300 & 10 & 10 & 10 & 0 & 3605 & 2 & 2 & 4937 & 75 & 2 \\ \hline
6 & 0 & 0 & 100 & 600 & {\bf 2} & 300 & 10 & 10 & 10 & 0 & 3171 & 1 & 1 & 3691 & 179 & 2 \\ \hline
7 & 0 & 0 & 100 & 600 & {\bf 10} & 300 & 10 & 10 & 10 & 0 & 3274 & 1 & 1 & 3801 & 195 & 2 \\ \hline
8 & 0 & 0 & 100 & 600 & 6 & {\bf 100} & 10 & 10 & 10 & 0 & 3637 & 1 & 1 & 4349 & 299 & 2 \\ \hline
9 & 0 & 0 & 100 & 600 & 6 & {\bf 500} & 10 & 10 & 10 & 0 & 3122 & 1 & 1 & 3573 & 135 & 1 \\ \hline
10 & 0 & 0 & 100 & 600 & 6 & 300 & {\bf 1} & 10 & 10 & 0 & 2463 & 1 & 1 & 2504 & 2 & 1 \\ \hline
11 & 0 & 0 & 100 & 600 & 6 & 300 & {\bf 100} & 10 & 10 & 0 & 3323 & 2 & 1 & 4497 & 3024 & 3 \\ \hline
12 & 0 & 0 & 100 & 600 & 6 & 300 & 10 & {\bf 5} & 10 & 0 & 3343 & 2 & 1 & 4103 & 261 & 2 \\ \hline
13 & 0 & 0 & 100 & 600 & 6 & 300 & 10 & {\bf 15} & 10 & 0 & 3184 & 2 & 1 & 3699 & 173 & 1 \\ \hline
14 & 0 & 0 & 100 & 600 & 6 & 300 & 10 & 10 & {\bf 5} & 0 & 1417 & 1 & 1 & 2133 & 179 & 1 \\ \hline
15 & 0 & 0 & 100 & 600 & 6 & 300 & 10 & 10 & {\bf 15} & 0 & 5105 & 2 & 1 & 5421 & 156 & 1 \\ \hline
16 & 0 & 0 & 100 & 600 & 6 & 300 & 10 & 10 & 10 & {\bf 1} & 3206 & 1 & 1 & 3643 & 222 & 1 \\ \hline
\end{tabular}
}
\end{table}

The module {\em experiments3.py} of the package~\cite{MinFKM-prog} contains our performance tests for the functions provided by the module {\em MinFKMPL.py}. The tests use randomly generated fuzzy Kripke models $\mM$, whose generation is controlled by the following parameters:
\begin{itemize}
\item {\em withC} (denoted by $C?$ in Table~\ref{table: JHKSP}) specifies whether $C \in \Phi$;
\item {\em withO} (denoted by $O?$ in Table~\ref{table: JHKSP}) specifies whether $O \in \Phi$;
\item $n$ denotes the number of states in $\mM$ (i.e., $|\DeltaM|$);
\item $m$ denotes the number of non-zero transitions in $\mM$ (i.e., $m = |\{\tuple{x,\sigma,y} \in \DeltaM \times \SA \times \DeltaM :$ $\sigma^\mM(x,y) > 0\}|$);
\item $p$ denotes the number of non-zero proposition instances in $\mM$ (i.e., $|\{\tuple{p,x} \in \SP \times \DeltaM :$ $p^\mM(x) > 0\}|$);
\item $l$ denotes the number of distinct positive fuzzy values that may occur in $\mM$;
\item {\em acyclic} (denoted by $A?$ in Table~\ref{table: JHKSP}) specifies whether the graph induced by the non-zero transitions of $\mM$ is required to be acyclic;
\item $|\SP|$, $|\SA|$, and $|\Ss|$ are additional parameters, under the assumption that $\So = \Ss$ whenever $O \in \Phi$.
\end{itemize}

All experiments were carried out on a virtual machine running Debian GNU/Linux~12 (Bookworm), using a standard, non-privileged user account. The virtual machine provided access to 64 logical Intel Broadwell CPU cores clocked at approximately 2.2~GHz and 184~GiB of RAM. The account operated under the default resource limits imposed by the system.

\begin{figure}[th!]
\includegraphics[scale=0.55]{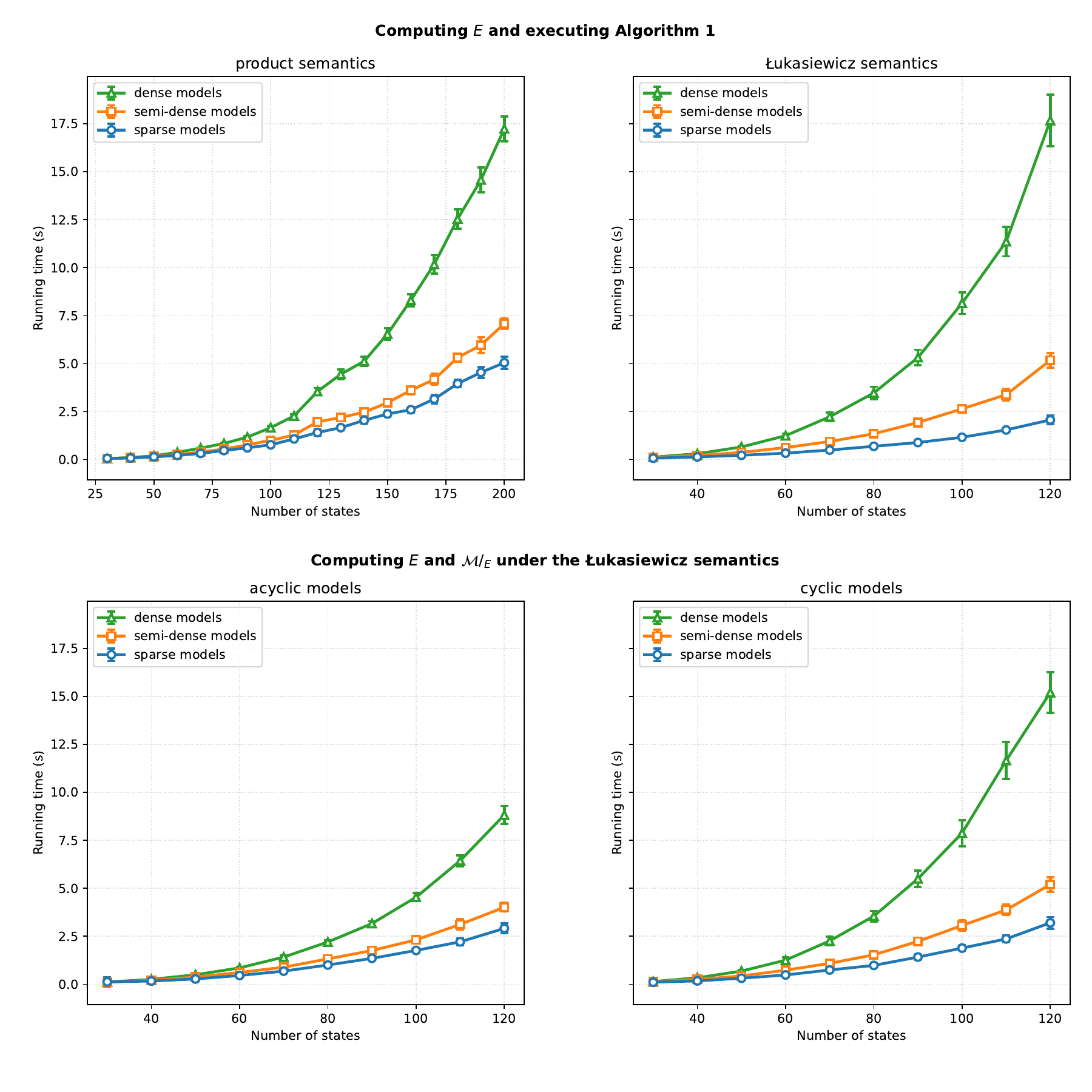}
\caption{The combined running time of computing the greatest fuzzy $\Phi$-auto-bisimulation $E$ of a randomly generated fuzzy Kripke model $\mM$ and executing Algorithm~\ref{alg1} (upper subplots), and the combined running time of computing $E$ and the quotient model $\mME$ under the {\L}ukasiewicz semantics (lower subplots). The model $\mM$ is generated using the parameters $\mathit{withC} = \mathit{withO} = \mathit{False}$, $p = n$, $l = 10$, $|\SP| = |\SA| = 3$, $|\Ss| = [\log n]$, $\So = \Ss$, and, for the upper subplots, $\mathit{acyclic} = \mathit{False}$, where $n$ denotes the number of states of $\mM$. Sparse, semi-dense, and dense models correspond to $m = 3n$, $m = [n\log n]$, and $m = [n^2/5]$, respectively.\label{fig: JEKSN}}
\end{figure}

Table~\ref{table: JHKSP} presents the results of 16 performance tests. It reports the running times of computing the greatest fuzzy $\Phi$-auto-bisimulation $E$ of a randomly generated fuzzy Kripke model $\mM$ (columns~$t_1$), constructing the quotient fuzzy Kripke model $\mME$ given~$E$ (columns~$t_2$), and executing Algorithm~\ref{alg1} (columns~$t_3$), under the product and {\L}ukasiewicz semantics. Each test was repeated 30 times, and the values of $t_1$, $t_2$, and $t_3$ are the rounded averages over these repetitions. Each of Tests~2--16 differs from Test~1 in that exactly one parameter used to generate $\mM$, except~$n$, is modified. The table leads to the following observations:
\begin{itemize}
\item The running times show that $t_3$ (the execution time of Algorithm~\ref{alg1}) is much smaller than $t_1$ (the time required to compute the greatest fuzzy $\Phi$-auto-bisimulation~$E$ of~$\mM$). Likewise, under the product semantics, $t_2$ (the time required to construct $\mME$ given~$E$) is much smaller than $t_1$. Under the {\L}ukasiewicz semantics, however, $t_2$ is considerably larger than $t_3$, although it remains smaller than~$t_1$.

\item The values of $t_1$ further show that computing $E$ is consistently faster under the product semantics than under the {\L}ukasiewicz semantics.

\item Test~2 shows that computing $E$ takes longer when converse modalities are included (i.e., when $C \in \Phi$). This is because the number of basic programs doubles, and the graph induced by the instances of basic programs contains substantially more cycles.

\item Test~3 suggests that the presence of nominals (i.e., when $O \in \Phi$) generally has little effect on the time required to compute~$E$.

\item Tests~4 and~5 demonstrate that increasing $m$ leads to a noticeable increase in the time required to compute~$E$.

\item Tests~8 and~9 indicate that increasing $p$ generally does not increase the time required to compute~$E$.

\item Tests~10 and~11 show that changing $l$ noticeably affects $t_1$ and, under the {\L}ukasiewicz semantics, has an even more pronounced effect on $t_2$. The unusually large value of $t_2$ in Test~11 under the {\L}ukasiewicz semantics is explained by the fact that the corresponding relation $E$ is much denser than in the other tests or under the product semantics. Recall that, by Corollary~\ref{cor: JHKSO}, $t_2$ is bounded by $O(mn^2)$. Comparing $t_1$ and $t_2$ also yields an interesting observation: although the implementation for computing $E$ when {\em acyclic = False} may have exponential running time in the worst case, the implemented optimizations substantially accelerate the computation, so that $t_1$ in Test~11 under the {\L}ukasiewicz semantics is only about one and a half times larger than~$t_2$.

\item Tests~12 and~13 show that increasing $|\SP|$ does not noticeably affect the time required to compute~$E$.

\item Tests~14 and~15 demonstrate that increasing $|\SA|$ leads to a noticeable increase in the time required to compute~$E$.

\item Finally, Test~16 shows that allowing cycles does not noticeably increase the time required to compute~$E$ under the product semantics, whereas it leads to a slight increase under the {\L}ukasiewicz semantics.
\end{itemize}

Figure~\ref{fig: JEKSN} presents the results of additional performance tests. The upper subplots show the combined running time of computing the greatest fuzzy $\Phi$-auto-bisimulation $E$ of a randomly generated fuzzy Kripke model $\mM$ and executing Algorithm~\ref{alg1}, whereas the lower subplots show the combined running time of computing $E$ and the quotient model $\mME$ under the {\L}ukasiewicz semantics. In these experiments, the main varying parameter is $n$, the number of states of~$\mM$, while $m$ is chosen as $3n$, $[n\log n]$, or $[n^2/5]$, corresponding to sparse, semi-dense, and dense models, respectively. The remaining parameters are as specified in the caption of Figure~\ref{fig: JEKSN}. The results confirm the observations drawn from Table~\ref{table: JHKSP}: the combined time required to compute $E$ and execute Algorithm~\ref{alg1} is consistently smaller under the product semantics than under the {\L}ukasiewicz semantics; increasing $m$ leads to a noticeable increase in the time required to compute~$E$; and allowing cycles results in a noticeable increase in the running time under the {\L}ukasiewicz semantics.

We do not present subplots showing the combined running time of computing the greatest fuzzy $\Phi$-auto-bisimulation $E$ of a randomly generated fuzzy Kripke model $\mM$ and constructing the quotient model $\mME$ under the product semantics, since, according to the experimental results from Table~\ref{table: JHKSP}, this running time is consistently smaller than that under the {\L}ukasiewicz semantics. Moreover, there is little difference in the running time between the cases with and without cycles under the product semantics.

To replicate the performance tests, run the modules {\em experiments3.py}, {\em process\_results3a.py}, and {\em process\_results3bc.py} from the package~\cite{MinFKM-prog}. The module {\em experiments3.py} generates the files {\em results3a.txt}, {\em results3b.txt}, and {\em results3c.txt}. The module {\em process\_results3a.py} processes {\em results3a.txt} to produce the data reported in Table~\ref{table: JHKSP}. The module {\em process\_results3bc.py} processes {\em results3b.txt} and {\em results3c.txt} to produce the plots shown in Figure~\ref{fig: JEKSN}.

%===============================================================================

\section{Discussion and conclusions}
\label{sec: conc}

In this work, we have studied the minimization of finite fuzzy Kripke models over residuated lattices while preserving global or local assertions in fuzzy modal logics~\fKP, which are parameterized by the underlying residuated lattice $\mL$ and a set $\Phi \subseteq \{\cup, \circ, *, \lor, \Box, C, U, O\}$ of optional features. The logic~\fKP with $\Phi = \{\cup, \circ, *, \lor, \Box\}$ corresponds to fuzzy propositional dynamic logic, whereas the additional features $C$, $U$, and $O$ stand for converse programs, the universal program, and nominals, respectively.

We have proved that, given a finite fuzzy Kripke model $\mM$, if the multiplication operator $\fand$ of the underlying residuated lattice is right-continuous and $\lor \in \Phi$, then the quotient $\mME$ induced by the greatest fuzzy $\Phi$-auto-bisimulation $E$ of $\mM$ is a minimal reduction of $\mM$ preserving global assertions in~\fKP (Theorem~\ref{theorem: HSJAO}). When $\fand$ is right-continuous and $U \in \Phi$, $\mME$ is also a minimal reduction of $\mM$ preserving local assertions, as well as a minimal fuzzy Kripke model $\Phi$-bisimilar to~$\mM$ (Theorem~\ref{theorem: JFKSN}).

In the case where $U \notin \Phi$ and $\fand$ is right-continuous and strictly monotone with respect to~1 (which holds in the product and {\L}ukasiewicz structures), we have provided the first algorithm (Algorithm~\ref{alg1}) that, given a finite fuzzy Kripke model $\mM$ and its greatest fuzzy $\Phi$-auto-bisimulation, constructs a minimal reduction of $\mM$ preserving local assertions in~\fKP. The resulting model is also a minimal fuzzy Kripke model $\Phi$-bisimilar to~$\mM$ (Theorem~\ref{theorem: LFHZM}).

Our results extend previous work on the minimization of fuzzy interpretations under the G\"odel semantics~\cite{minimization-by-fBS,nguyen2026approximateminimizationinterpretationsfuzzy} to a broader class of residuated lattices and to the setting of fuzzy Kripke models. The quotient construction $\mME$ and the techniques used to prove Theorems~\ref{theorem: HSJAO} and~\ref{theorem: JFKSN} generalize those developed in~\cite{minimization-by-fBS}, which considered fuzzy interpretations under the G\"odel semantics. In contrast, Algorithm~\ref{alg1} and the techniques used to prove Theorem~\ref{theorem: LFHZM} are novel and are not derived from the techniques developed in~\cite{minimization-by-fBS,nguyen2026approximateminimizationinterpretationsfuzzy}.

From an algorithmic perspective, assuming that the greatest fuzzy auto-bisimulation is given, the quotient construction can be performed in time $O(mn^2)$, while Algorithm~\ref{alg1} runs in time $O(mn)$, where $n$ is the number of states and $m$ is the number of non-zero transitions. The main computational bottleneck is therefore the computation of the greatest fuzzy auto-bisimulation $E$ of the input fuzzy Kripke model. For the G\"odel structure, $E$ can be computed efficiently, in time \mbox{$O(m\cdot\log{l}\cdot\log{n} + n^2)$}~\cite{DBLP:journals/isci/Nguyen23}.\footnote{The corresponding fuzzy partition can be computed in time \mbox{$O((m \log{l} + n)\log{n})$}~\cite{DBLP:journals/isci/Nguyen23}.} However, it is currently not known whether $E$ can be computed in polynomial time when the underlying residuated lattice is the product or {\L}ukasiewicz structure.

We have developed a Python package~\cite{MinFKM-prog} that implements Algorithm~\ref{alg1} for the product and {\L}ukasiewicz structures, as well as the construction of the quotient $\mME$ under the product, G\"odel, and {\L}ukasiewicz semantics for a given finite fuzzy Kripke model $\mM$ and its greatest fuzzy $\Phi$-auto-bisimulation~$E$. The package also includes a previously developed implementation for computing the greatest fuzzy $\Phi$-auto-bisimulation of a given finite fuzzy Kripke model under the product, G\"odel, and {\L}ukasiewicz semantics. We have also conducted performance tests and reported and discussed the obtained results.

As mentioned above, the time complexity of computing the greatest fuzzy $\Phi$-auto-bisimulation of a finite fuzzy Kripke model over the product or {\L}ukasiewicz structure remains an open problem: it is not known whether this problem can be solved in polynomial time or is NP-hard. Another problem worthy of study is the minimization of fuzzy Kripke models while preserving assertions in restricted classes of formulas, such as positive existential modal formulas or positive modal formulas. These classes are related to (fuzzy or crisp) simulations and directed simulations, respectively.

%===============================================================================

\biboptions{sort&compress}
\bibliography{BSfDL}

@article{BASAK2002223,
	title = {On quotient machines of a fuzzy automaton and the minimal machine},
	journal = {Fuzzy Sets Syst.},
	volume = {125},
	number = {2},
	pages = {223-229},
	year = {2002},
	doi = {10.1016/S0165-0114(01)00064-1},
	author = {N. C. Basak and A. Gupta},
}

@article{Belohlvek2009OnAM,
	title={On Approximate Minimization of Fuzzy Automata},
	author={R. B{\v e}lohl{\'a}vek and M. Krupka},
	journal={J. Multiple Valued Log. Soft Comput.},
	year={2009},
	volume={15},
	pages={125-135},
	doi={10.1142/9789812709677_0194}
}

@article{DEMENDIVILGRAU2024109108,
	title = {Polynomial crisp-minimization algorithm for fuzzy deterministic automata},
	journal = {Fuzzy Sets Syst.},
	volume = {495-496},
	pages = {109108},
	year = {2024},
	doi = {10.1016/j.fss.2024.109108},
	author = {A. G. {de Mendívil Grau} and F. Fariña and S. Stanimirović and I. Micić and J. R. G. {de Mendívil}},
}

@article{Halamish2015,
	author = {Halamish, S. and Kupferman, O.},
	title = {Minimizing Deterministic Lattice Automata},
	year = {2015},
	issue_date = {March 2015},
	publisher = {Association for Computing Machinery},
	address = {New York, NY, USA},
	volume = {16},
	number = {1},
         pages        = {1:1--1:21},
	doi = {10.1145/2631915},
	journal = {ACM Trans. Comput. Logic}
}

@ARTICLE{LITFS2015,
	author={Li, L. and Qiu, D.},
	journal={{IEEE} Trans. Fuzzy Syst.}, 
	title={On the State Minimization of Fuzzy Automata}, 
	year={2015},
	volume={23},
	number={2},
	pages={434-443},
	doi={10.1109/TFUZZ.2014.2315620}
}

@article{LI20071423,
	title = {Minimization of lattice finite automata and its application to the decomposition of lattice languages},
	journal = {Fuzzy Sets Syst.},
	volume = {158},
	number = {13},
	pages = {1423-1436},
	year = {2007},
	doi = {10.1016/j.fss.2007.03.003},
	author = {Y. Li and W. Pedrycz},
}

@article{ShamsizadehZG24,
  author       = {M. Shamsizadeh and
                  M. M. Zahedi and
                  M. J. A. Goki},
  title        = {Minimization of hesitant {L}-fuzzy automaton},
  journal = {Fuzzy Sets Syst.},
  volume       = {477},
  pages        = {108793},
  year         = {2024},
  _url          = {https://doi.org/10.1016/j.fss.2023.108793},
  doi          = {10.1016/J.FSS.2023.108793},
  _timestamp    = {Tue, 07 May 2024 20:21:38 +0200},
  _biburl       = {https://dblp.org/rec/journals/fss/ShamsizadehZG24.bib},
  _bibsource    = {dblp computer science bibliography, https://dblp.org}
}

@article{SCB.18,
    title = {Ranks of fuzzy matrices. {A}pplications in state reduction of fuzzy automata},
    journal = {Fuzzy Sets Syst.},
    volume = {333},
    pages = {124-139},
    year = {2018},
    doi = {10.1016/j.fss.2017.05.028},
    author = {A. Stamenkovi\'c and M. \'Ciri\'c and M. Ba\v{s}i\'c},
}

@article{SCI.14,
    title = {Reduction of fuzzy automata by means of fuzzy quasi-orders},
    journal = {Inf. Sci.},
    volume = {275},
    pages = {168-198},
    year = {2014},
    doi = {10.1016/j.ins.2014.02.028},
    author = {A. Stamenković and M. Ćirić and J. Ignjatović},
}

@article{Fan15,
  author    = {T.-F. Fan},
  title     = {Fuzzy Bisimulation for {G\"{o}del} Modal Logic},
  journal   = {{IEEE} Trans. Fuzzy Systems},
  doi          = {10.1109/TFUZZ.2015.2426724},
  volume    = {23},
  number    = {6},
  pages     = {2387--2396},
  year      = {2015}
}

@article{EleftheriouKN12,
  author    = {P.E. Eleftheriou and
               C.D. Koutras and
               C. Nomikos},
  title     = {Notions of Bisimulation for {Heyting}-Valued Modal Languages},
  journal   = {J. Log. Comput.},
  doi          = {10.1093/LOGCOM/EXQ005},
  volume    = {22},
  number    = {2},
  pages     = {213--235},
  year      = {2012}
}

@article{CiricIDB12,
  author    = {M. {\'C}iri{\'c} and
               J. Ignjatovi{\'c} and
               N. Damljanovi{\'c} and
               M. Ba\u{s}ic},
  title     = {Bisimulations for fuzzy automata},
  journal   = {Fuzzy Sets Syst.},
  doi          = {10.1016/J.FSS.2011.07.003},
  volume    = {186},
  number    = {1},
  pages     = {100--139},
  year      = {2012}
}

@article{ai/FanL14,
  author    = {T.{-}F. Fan and
               C.{-}J. Liau},
  title     = {Logical characterizations of regular equivalence in weighted social
               networks},
  journal   = {Artif. Intell.},
  volume    = {214},
  pages     = {66--88},
  year      = {2014},
  doi       = {10.1016/j.artint.2014.05.007}
}

@inproceedings{IgnjatovicCS15,
  author    = {J. Ignjatovi{\'c} and
               M. {\'C}iri{\'c} and
               I. Stankovi{\'c}},
  title     = {Bisimulations in fuzzy social network analysis},
  booktitle = {Proceedings of IFSA-EUSFLAT-15},
  pages={404--411},
  publisher = {Atlantis Press},
  year      = {2015}
}

@article{minimization-by-fBS,
  author    = {L. A. Nguyen and N. T. Nguyen},
  title     = {Minimizing interpretations in fuzzy description logics under the {G\"{o}del}
               semantics by using fuzzy bisimulations},
  journal   = { J. Intell. Fuzzy Syst.},
  volume    = {37},
  number    = {6},
  pages     = {7669--7678},
  doi       = {10.3233/JIFS-179371},
  year      = {2019}
}

@article{fss/WuD16,
  author    = {H. Wu and
               Y. Deng},
  title     = {Logical characterizations of simulation and bisimulation for fuzzy
               transition systems},
  journal   = {Fuzzy Sets Syst.},
  volume    = {301},
  pages     = {19--36},
  doi          = {10.1016/J.FSS.2015.09.012},
  year      = {2016}
}

@article{DBLP:journals/fss/WuCBD18,
  author    = {H. Wu and
               Y. Chen and
               T.{-}M. Bu and
               Y. Deng},
  title     = {Algorithmic and logical characterizations of bisimulations for non-deterministic
               fuzzy transition systems},
  journal   = {Fuzzy Sets Syst.},
  volume    = {333},
  pages     = {106--123},
  doi          = {10.1016/J.FSS.2017.02.008},
  year      = {2018}
}

@article{FSS2020,
  author    = {L. A. Nguyen and
               Q.-T. Ha and                
               N. T. Nguyen and
	       T.H.K. Nguyen and
               T.-L. Tran},
  title     = {Bisimulation and bisimilarity for fuzzy description logics under the {G\"odel} semantics},
  journal   = {Fuzzy Sets Syst.},
  volume    = {388},
  pages     = {146--178},
  doi       = {10.1016/j.fss.2019.08.004},
  year      = {2020}
}

@article{DBLP:journals/ijar/WuCHC18,
  author    = {H. Wu and
               T. Chen and
               T. Han and
               Y. Chen},
  title     = {Bisimulations for fuzzy transition systems revisited},
  journal   = {Int. J. Approx. Reason.},
  volume    = {99},
  pages     = {1--11},
  year      = {2018},
  doi       = {10.1016/j.ijar.2018.04.010}
}

@article{aml/MartiM18,
  author    = {M. Marti and
               G. Metcalfe},
  title     = {Expressivity in chain-based modal logics},
  journal   = {Arch. Math. Log.},
  volume    = {57},
  number    = {3-4},
  pages     = {361--380},
  year      = {2018},
  doi       = {10.1007/s00153-017-0573-4}
}

@article{fuin/Diaconescu20,
  author    = {D. Diaconescu},
  title     = {Modal Equivalence and Bisimilarity in Many-valued Modal Logics with Many-valued Accessibility Relations},
  journal   = {Fundam. Informaticae},
  volume    = {173},
  number    = {2-3},
  pages     = {177--189},
  year      = {2020},
  doi       = {10.3233/FI-2020-1920}
}

@article{StanimirovicMC.22,
  author="S. Stanimirovi{\'c} and I. Mici{\'c} and M. {\'C}iri{\'c}",
  title="Approximate Bisimulations for Fuzzy Automata over Complete {Heyting} Algebras", 
  journal = {{IEEE} Trans. Fuzzy Syst.}, 
  volume    = {30},
  number    = {2},
  pages     = {437--447},
  year      = {2022},
  _url       = {https://doi.org/10.1109/TFUZZ.2020.3039968},
  doi       = {10.1109/TFUZZ.2020.3039968},
  _timestamp = {Tue, 15 Feb 2022 10:48:26 +0100},
  _biburl    = {https://dblp.org/rec/journals/tfs/StanimirovicMC22.bib},
  _bibsource = {dblp computer science bibliography, https://dblp.org}   
}

@article{YANG202172,
	title = {Fuzzy $\epsilon$-approximate regular languages and minimal deterministic fuzzy automata $\epsilon$-accepting them},
	journal = {Fuzzy Sets Syst.},
	volume = {420},
	pages = {72-86},
	year = {2021},
	doi = {10.1016/j.fss.2020.08.010},
	_url = {https://www.sciencedirect.com/science/article/pii/S016501142030316X},
	author = {C. Yang and Y. Li}
}

@article{DBLP:journals/tfs/QiaoZF23,
  author       = {S. Qiao and
                  P. Zhu and
                  J. Feng},
  title        = {Fuzzy Bisimulations for Nondeterministic Fuzzy Transition Systems},
  journal      = {{IEEE} Trans. Fuzzy Syst.},
  volume       = {31},
  number       = {7},
  pages        = {2450--2463},
  year         = {2023},
  _url          = {https://doi.org/10.1109/TFUZZ.2022.3227400},
  doi          = {10.1109/TFUZZ.2022.3227400},
  _timestamp    = {Thu, 08 Feb 2024 10:33:06 +0100},
  _biburl       = {https://dblp.org/rec/journals/tfs/QiaoZF23.bib},
  _bibsource    = {dblp computer science bibliography, https://dblp.org}
}

@article{DBLP:journals/isci/Nguyen23,
  author       = {L. A. Nguyen},
  title        = {Computing the fuzzy partition corresponding to the greatest fuzzy
                  auto-bisimulation of a fuzzy graph-based structure under the {G{\"o}del}
                  semantics},
  journal      = {Inf. Sci.},
  volume       = {630},
  pages        = {482--506},
  year         = {2023},
  _url          = {https://doi.org/10.1016/j.ins.2023.02.029},
  doi          = {10.1016/J.INS.2023.02.029},
  _timestamp    = {Fri, 24 Mar 2023 08:43:31 +0100},
  _biburl       = {https://dblp.org/rec/journals/isci/Nguyen23.bib},
  _bibsource    = {dblp computer science bibliography, https://dblp.org}
}

@article{FBSML,
  author       = {L. A. Nguyen},
  title        = {Logical characterizations of fuzzy bisimulations in fuzzy modal logics
                  over residuated lattices},
  journal      = {Fuzzy Sets Syst.},
  volume       = {431},
  pages        = {70--93},
  year         = {2022},
  _url          = {https://doi.org/10.1016/j.fss.2021.08.009},
  doi          = {10.1016/J.FSS.2021.08.009},
  _timestamp    = {Fri, 25 Feb 2022 09:52:00 +0100},
  _biburl       = {https://dblp.org/rec/journals/fss/Nguyen22.bib},
  _bibsource    = {dblp computer science bibliography, https://dblp.org}
}

@BOOK{Belohlavek2002,
  title  = {Fuzzy Relational Systems: Foundations and Principles},
  author = {R. B\v{e}lohl{\'a}vek},
  publisher = {Kluwer},
  year = 2002
}

@BOOK{Hajek1998,
  title  = {Metamathematics of Fuzzy Logics},
  author = {P. H{\'a}jek},
  publisher = {Kluwer Academic Publishers},
  year = 1998
}

@article{DBLP:journals/tfs/NguyenN23,
  author       = {L. A. Nguyen and
                  N. T. Nguyen},
  title        = {Logical Characterizations of Crisp Bisimulations in Fuzzy Description
                  Logics},
  journal      = {{IEEE} Trans. Fuzzy Syst.},
  volume       = {31},
  number       = {4},
  pages        = {1294--1304},
  year         = {2023},
  _url          = {https://doi.org/10.1109/TFUZZ.2022.3198853},
  doi          = {10.1109/TFUZZ.2022.3198853},
  _timestamp    = {Sat, 29 Apr 2023 19:26:30 +0200},
  _biburl       = {https://dblp.org/rec/journals/tfs/NguyenN23.bib},
  _bibsource    = {dblp computer science bibliography, https://dblp.org}
}

@article{fss/QiaoZP23,
  author       = {S. Qiao and
                  P. Zhu and
                  W. Pedrycz},
  title        = {Approximate bisimulations for fuzzy-transition systems},
  journal      = {Fuzzy Sets Syst.},
  volume       = {472},
  pages        = {108533},
  year         = {2023},
  _url          = {https://doi.org/10.1016/j.fss.2023.108533},
  doi          = {10.1016/J.FSS.2023.108533},
  _timestamp    = {Thu, 08 Feb 2024 10:33:06 +0100},
  _biburl       = {https://dblp.org/rec/journals/fss/QiaoZP23.bib},
  _bibsource    = {dblp computer science bibliography, https://dblp.org}
}

@article{DBLP:journals/fss/Nguyen24,
  author       = {L. A. Nguyen},
  title        = {Minimizing fuzzy interpretations in fuzzy description logics by using
                  crisp bisimulations},
  journal      = {Fuzzy Sets Syst.},
  volume       = {481},
  pages        = {108896},
  year         = {2024},
  _url          = {https://doi.org/10.1016/j.fss.2024.108896},
  doi          = {10.1016/J.FSS.2024.108896},
  _timestamp    = {Sun, 19 Jan 2025 14:22:01 +0100},
  _biburl       = {https://dblp.org/rec/journals/fss/Nguyen24.bib},
  _bibsource    = {dblp computer science bibliography, https://dblp.org}
}

@article{Jiang1993,
	author = {Jiang, T. and Ravikumar, B.},
	title = {Minimal {NFA} Problems are Hard},
	journal = {{SIAM} J. Comput.},
	volume = {22},
	number = {6},
	pages = {1117-1141},
	year = {1993},
	doi = {10.1137/0222067},
}

@article{FDSML,
  author = {L. A. Nguyen},
  title        = {Fuzzy directed simulations for fuzzy modal logics over residuated
                  lattices},
  journal      = {CoRR},
  volume       = {abs/2607.16346},
  year         = {2026},
  _url          = {https://doi.org/10.48550/arXiv.2607.16346},
  _doi          = {10.48550/ARXIV.2607.16346},
  eprinttype   = {arXiv},
  eprint       = {2607.16346},
  _timestamp    = {Sat, 08 Aug 2026 20:52:01 +0200},
  _biburl       = {https://dblp.org/rec/journals/corr/abs-2607-16346.bib},
  _bibsource    = {dblp computer science bibliography, https://dblp.org}
}

@article{PEEVA20084152,
	title = {Computing behavior of finite fuzzy machines – Algorithm and its application to reduction and minimization},
	journal = {Inf. Sci.},
	volume = {178},
	number = {21},
	pages = {4152-4165},
	year = {2008},
	doi = {10.1016/j.ins.2008.07.009},
	author = {K. Peeva and Z. Zahariev},
}

@article{CSIP.10,
    title = {Fuzzy relation equations and reduction of fuzzy automata},
    journal = {J. Comput. Syst. Sci.},
    volume = {76},
    number = {7},
    pages = {609-633},
    year = {2010},
    doi = {10.1016/j.jcss.2009.10.015},
    author = {\'Ciri\'c, M. and Stamenkovi\'c, A. and Ignjatovi\'c, J. and Petkovi\'c, T.},
}

@article{YangL.19,
	title = {Approximate bisimulations and state reduction of fuzzy automata under fuzzy similarity measures},
	journal = {Fuzzy Sets Syst.},
	volume = "391",
	pages = "72--95",
	year = "2020",
	author = {C. Yang and Y. Li},
	doi = {10.1016/j.fss.2019.07.010}
}

@article{FSS-D-25-00029,
  author = {S. Stanimirovi{\'c} and L. A. Nguyen and M. \'Ciri\'c and M. Stankovi\'c},
  title = {Approximate state reduction of fuzzy finite automata},
  journal      = {Fuzzy Sets Syst.},
  volume       = {519},
  pages        = {109535},
  year         = {2025},
  _url          = {https://doi.org/10.1016/j.fss.2025.109535},
  doi          = {10.1016/J.FSS.2025.109535},
  _timestamp    = {Wed, 09 Sep 2026 18:14:32 +0200},
  _biburl       = {https://dblp.org/rec/journals/fss/StanimirovicN0025a.bib},
  _bibsource    = {dblp computer science bibliography, https://dblp.org}
}

@article{nguyen2026approximateminimizationinterpretationsfuzzy,
  title={Approximate minimization of interpretations in fuzzy description logics under the {G\"odel} semantics}, 
  author={L. A. Nguyen},
  journal = {Inf. Sci.},
  volume = {761},
  pages = {124171},
  year = {2027},
  issn = {0020-0255},
  doi = {10.1016/j.ins.2026.124171},
  _url = {https://www.sciencedirect.com/science/article/pii/S0020025526011023},
}

@misc{MinFKM-prog,
  author = {L. A. Nguyen},
  title = {An implementation of the algorithms given in the current paper},
  howpublished = {\url{http://mimuw.edu.pl/~nguyen/MinFKM}},
  year = 2026
}

@article{CompMinimaxG,
  author = {L. A. Nguyen},
  title = {Computing markings for fuzzy minimax nets over the {G\"odel} structure},
  journal      = {CoRR},
  volume       = {abs/2607.15494},
  year         = {2026},
  _url          = {https://doi.org/10.48550/arXiv.2607.15494},
  _doi          = {10.48550/ARXIV.2607.15494},
  eprinttype   = {arXiv},
  eprint       = {2607.15494},
  _timestamp    = {Fri, 07 Aug 2026 08:53:35 +0200},
  _biburl       = {https://dblp.org/rec/journals/corr/abs-2607-15494.bib},
  _bibsource    = {dblp computer science bibliography, https://dblp.org}
}
\bibliographystyle{elsarticle-harv}

%===============================================================================
\appendix

\section{Additional examples}
\renewcommand{\thesection}{A}

Example~\ref{example: HGDHJ} shows that the quotient fuzzy Kripke model $\mME$ of a fuzzy Kripke model $\mM$ with respect to its greatest fuzzy $\Phi$-auto-bisimulation~$E$ may have different sizes depending on whether the underlying residuated lattice is the product or {\L}ukasiewicz structure. 
Likewise, Example~\ref{example: HFJSK2} shows that Algorithm~\ref{alg1} may return minimal reductions of a finite fuzzy Kripke model (that preserve local assertions in \fKP) of different sizes depending on whether the underlying residuated lattice is the product or {\L}ukasiewicz structure. 
Examples~\ref{example: HFJSK}, \ref{example: HGDHJ}, \ref{example: HFJSK2}, and \ref{example: HGDHJ2}, however, all deal with the case where $\Phi \subseteq \{\cup, \circ, *, \lor, \Box, U\}$.
In this appendix, we present examples involving cases where $\Phi$ also includes $C$ and/or $O$, i.e., cases involving converse programs and/or nominals.

%----------------------------------------------------------------------

\begin{figure}
\centering
\begin{tikzpicture}[scale=0.9, transform shape,->,>=stealth,auto,black]
\node (top) {};
\node (btop) [node distance=0.2cm, below of=top] {};
\node[draw, circle] (v0) [node distance=1.5cm, right of=btop] {$v_0$};
\node[draw, circle] (v1) [node distance=3.0cm, right of=v0] {$v_1$};
\node[draw, circle] (v2) [node distance=3.0cm, right of=v1] {$v_2$};
\node[draw, circle] (v3) [node distance=2.5cm, below of=btop] {$v_3$};
\node[draw, circle] (v4) [node distance=3.0cm, right of=v3] {$v_4$};
\node[draw, circle] (v5) [node distance=3.0cm, right of=v4] {$v_5$};
\node[draw, circle] (v6) [node distance=3.0cm, right of=v5] {$v_6$};
\node[draw, circle] (v7) [node distance=5.0cm, below of=v0] {$v_7$};
\node[draw, circle] (v8) [node distance=3.0cm, right of=v7] {$v_8$};
\node[draw, circle] (v9) [node distance=3.0cm, right of=v8] {$v_9$};
\draw (v0) to node [left]{\footnotesize{1}} (v3);
\draw (v0) to node [left]{\footnotesize{0.8}} (v4);
\draw (v1) to node [left]{\footnotesize{1}} (v4);
\draw (v1) to node [left, pos=0.3]{\footnotesize{0.8}} (v5);
\draw (v1) to node [left, pos=0.8, xshift=-5pt]{\footnotesize{0.5}} (v6);
\draw (v2) to node [left, pos=0.12, xshift=-5pt]{\footnotesize{1}} (v4);
\draw (v2) to node [right, pos=0.3]{\footnotesize{0.5}} (v5);
\draw (v2) to node [right]{\footnotesize{0.8}} (v6);
\draw (v3) to node [left]{\footnotesize{0.8}} (v7);
\draw (v4) to node [left]{\footnotesize{0.8}} (v7);
\draw (v5) to node [left]{\footnotesize{0.8}} (v8);
\draw (v5) to node [left, pos=0.3]{\footnotesize{0.5}} (v9);
\draw (v6) to node [left, pos=0.2, xshift=-3pt]{\footnotesize{0.5}} (v8);
\draw (v6) to node [right]{\footnotesize{0.8}} (v9);
\end{tikzpicture}
\caption{An illustration of the fuzzy Kripke model $\mM$ considered in Example~\ref{example: JHRUA}.\label{fig: JHRUA1}}
\end{figure}

%----------------------------------------------------------------------

\begin{example}\label{example: JHRUA}
Let $\SA = \{\sigma\}$, $\SP = \emptyset$, $\Ss = \{v_0, v_1, v_2\}$, $\So = \{v_0, v_1\}$, and $\{\cup, \circ, *, \lor, \Box\} \subseteq \Phi \subseteq \{\cup, \circ, *, \lor, \Box, C,O\}$. Let $\mM$ be the fuzzy Kripke model illustrated in Figure~\ref{fig: JHRUA1} and specified as follows:
\begin{itemize}
\item $\DeltaM = \{v_0,\ldots,v_9\}$, 
\item $v_{i}^\mM = v_i$ for $v_i \in \Ss$, 
\item $\sigma^\mM = \{\tuple{v_0,v_3}\!:\!1$, 
$\tuple{v_0,v_4}\!:\!0.8$, 
$\tuple{v_1,v_4}\!:\!1$, 
$\tuple{v_1,v_5}\!:\!0.8$, 
$\tuple{v_1,v_6}\!:\!0.5$, 
$\tuple{v_2,v_4}\!:\!1$, 
$\tuple{v_2,v_5}\!:\!0.5$, 
$\tuple{v_2,v_6}\!:\!0.8$, 
$\tuple{v_3,v_7}\!:\!0.8$, 
$\tuple{v_4,v_7}\!:\!0.8$, 
$\tuple{v_5,v_8}\!:\!0.8$, 
$\tuple{v_5,v_9}\!:\!0.5$, 
$\tuple{v_6,v_8}\!:\!0.5$, 
$\tuple{v_6,v_9}\!:\!0.8\}$.
\end{itemize}

%----------------------------------------------------------------------

\begin{table}
\caption{The greatest fuzzy $\Phi$-auto-bisimulation of the fuzzy Kripke model $\mM$ considered in Example~\ref{example: JHRUA} in different cases. The occurrences of $a$ in the bottom-left subtable denote $0.625$. \label{table: JSIWH}}
\scriptsize

\bigskip

\noindent
\begin{tabular}{ccc}
& Under the product semantics: & Under the {\L}ukasiewicz semantics: \\[1ex]
\!\!\!\!\!\!\!\!
\begin{tabular}{l}
Case \\
$C \notin \Phi$,\\ 
$O \notin \Phi$:
\end{tabular}\!\!\!\!\!
& 
\begin{tabular}{|*{11}{>{\centering\arraybackslash}m{0.22cm}|}}
\hline
$E$   & $v_0$ & $v_1$ & $v_2$ & $v_3$ & $v_4$ & $v_5$ & $v_6$ & $v_7$ & $v_8$ & $v_9$ \\ \hline
$v_0$ &   1   &   1   &   1   &   0   &   0   &   0   &   0   &   0   &   0   &   0   \\ \hline
$v_1$ &   1   &   1   &   1   &   0   &   0   &   0   &   0   &   0   &   0   &   0   \\ \hline
$v_2$ &   1   &   1   &   1   &   0   &   0   &   0   &   0   &   0   &   0   &   0   \\ \hline
$v_3$ &   0   &   0   &   0   &   1   &   1   &   1   &   1   &   0   &   0   &   0   \\ \hline
$v_4$ &   0   &   0   &   0   &   1   &   1   &   1   &   1   &   0   &   0   &   0   \\ \hline
$v_5$ &   0   &   0   &   0   &   1   &   1   &   1   &   1   &   0   &   0   &   0   \\ \hline
$v_6$ &   0   &   0   &   0   &   1   &   1   &   1   &   1   &   0   &   0   &   0   \\ \hline
$v_7$ &   0   &   0   &   0   &   0   &   0   &   0   &   0   &   1   &   1   &   1   \\ \hline
$v_8$ &   0   &   0   &   0   &   0   &   0   &   0   &   0   &   1   &   1   &   1   \\ \hline
$v_9$ &   0   &   0   &   0   &   0   &   0   &   0   &   0   &   1   &   1   &   1   \\ \hline
\end{tabular}
& 
\begin{tabular}{|*{11}{>{\centering\arraybackslash}m{0.22cm}|}}
\hline
$E$   & $v_0$ & $v_1$ & $v_2$ & $v_3$ & $v_4$ & $v_5$ & $v_6$ & $v_7$ & $v_8$ & $v_9$ \\ \hline
$v_0$ &   1   &   1   &   1   &   0   &   0   &   0   &   0   &   0   &   0   &   0   \\ \hline
$v_1$ &   1   &   1   &   1   &   0   &   0   &   0   &   0   &   0   &   0   &   0   \\ \hline
$v_2$ &   1   &   1   &   1   &   0   &   0   &   0   &   0   &   0   &   0   &   0   \\ \hline
$v_3$ &   0   &   0   &   0   &   1   &   1   &   1   &   1   & \!0.2 & \!0.2 & \!0.2 \\ \hline
$v_4$ &   0   &   0   &   0   &   1   &   1   &   1   &   1   & \!0.2 & \!0.2 & \!0.2 \\ \hline
$v_5$ &   0   &   0   &   0   &   1   &   1   &   1   &   1   & \!0.2 & \!0.2 & \!0.2 \\ \hline
$v_6$ &   0   &   0   &   0   &   1   &   1   &   1   &   1   & \!0.2 & \!0.2 & \!0.2 \\ \hline
$v_7$ &   0   &   0   &   0   & \!0.2 & \!0.2 & \!0.2 & \!0.2 &   1   &   1   &   1   \\ \hline
$v_8$ &   0   &   0   &   0   & \!0.2 & \!0.2 & \!0.2 & \!0.2 &   1   &   1   &   1   \\ \hline
$v_9$ &   0   &   0   &   0   & \!0.2 & \!0.2 & \!0.2 & \!0.2 &   1   &   1   &   1   \\ \hline
\end{tabular}
\\
\\
\!\!\!\!\!\!\!\!
\begin{tabular}{l}
Case \\
$C \notin \Phi$,\\ 
$O \in \Phi$:
\end{tabular}\!\!\!\!\!
& 
\begin{tabular}{|*{11}{>{\centering\arraybackslash}m{0.22cm}|}}
\hline
$E$   & $v_0$ & $v_1$ & $v_2$ & $v_3$ & $v_4$ & $v_5$ & $v_6$ & $v_7$ & $v_8$ & $v_9$ \\ \hline
$v_0$ &   1   &   0   &   0   &   0   &   0   &   0   &   0   &   0   &   0   &   0   \\ \hline
$v_1$ &   0   &   1   &   0   &   0   &   0   &   0   &   0   &   0   &   0   &   0   \\ \hline
$v_2$ &   0   &   0   &   1   &   0   &   0   &   0   &   0   &   0   &   0   &   0   \\ \hline
$v_3$ &   0   &   0   &   0   &   1   &   1   &   1   &   1   &   0   &   0   &   0   \\ \hline
$v_4$ &   0   &   0   &   0   &   1   &   1   &   1   &   1   &   0   &   0   &   0   \\ \hline
$v_5$ &   0   &   0   &   0   &   1   &   1   &   1   &   1   &   0   &   0   &   0   \\ \hline
$v_6$ &   0   &   0   &   0   &   1   &   1   &   1   &   1   &   0   &   0   &   0   \\ \hline
$v_7$ &   0   &   0   &   0   &   0   &   0   &   0   &   0   &   1   &   1   &   1   \\ \hline
$v_8$ &   0   &   0   &   0   &   0   &   0   &   0   &   0   &   1   &   1   &   1   \\ \hline
$v_9$ &   0   &   0   &   0   &   0   &   0   &   0   &   0   &   1   &   1   &   1   \\ \hline
\end{tabular}
& 
\begin{tabular}{|*{11}{>{\centering\arraybackslash}m{0.22cm}|}}
\hline
$E$   & $v_0$ & $v_1$ & $v_2$ & $v_3$ & $v_4$ & $v_5$ & $v_6$ & $v_7$ & $v_8$ & $v_9$ \\ \hline
$v_0$ &   1   &   0   &   0   &   0   &   0   &   0   &   0   &   0   &   0   &   0   \\ \hline
$v_1$ &   0   &   1   &   0   &   0   &   0   &   0   &   0   &   0   &   0   &   0   \\ \hline
$v_2$ &   0   &   0   &   1   &   0   &   0   &   0   &   0   &   0   &   0   &   0   \\ \hline
$v_3$ &   0   &   0   &   0   &   1   &   1   &   1   &   1   & \!0.2 & \!0.2 & \!0.2 \\ \hline
$v_4$ &   0   &   0   &   0   &   1   &   1   &   1   &   1   & \!0.2 & \!0.2 & \!0.2 \\ \hline
$v_5$ &   0   &   0   &   0   &   1   &   1   &   1   &   1   & \!0.2 & \!0.2 & \!0.2 \\ \hline
$v_6$ &   0   &   0   &   0   &   1   &   1   &   1   &   1   & \!0.2 & \!0.2 & \!0.2 \\ \hline
$v_7$ &   0   &   0   &   0   & \!0.2 & \!0.2 & \!0.2 & \!0.2 &   1   &   1   &   1   \\ \hline
$v_8$ &   0   &   0   &   0   & \!0.2 & \!0.2 & \!0.2 & \!0.2 &   1   &   1   &   1   \\ \hline
$v_9$ &   0   &   0   &   0   & \!0.2 & \!0.2 & \!0.2 & \!0.2 &   1   &   1   &   1   \\ \hline
\end{tabular}
\\
\\
\!\!\!\!\!\!\!\!
\begin{tabular}{l}
Case \\
$C \in \Phi$,\\ 
$O \notin \Phi$:
\end{tabular}\!\!\!\!\!
& 
\begin{tabular}{|*{11}{>{\centering\arraybackslash}m{0.22cm}|}}
\hline
$E$   & $v_0$ & $v_1$ & $v_2$ & $v_3$ & $v_4$ & $v_5$ & $v_6$ & $v_7$ & $v_8$ & $v_9$ \\ \hline
$v_0$ &   1   &   1   &   1   &   0   &   0   &   0   &   0   &   0   &   0   &   0   \\ \hline
$v_1$ &   1   &   1   &   1   &   0   &   0   &   0   &   0   &   0   &   0   &   0   \\ \hline
$v_2$ &   1   &   1   &   1   &   0   &   0   &   0   &   0   &   0   &   0   &   0   \\ \hline
$v_3$ &   0   &   0   &   0   &   1   &   1   & \!0.8 & \!0.8 &   0   &   0   &   0   \\ \hline
$v_4$ &   0   &   0   &   0   &   1   &   1   & \!0.8 & \!0.8 &   0   &   0   &   0   \\ \hline
$v_5$ &   0   &   0   &   0   & \!0.8 & \!0.8 &   1   &   1   &   0   &   0   &   0   \\ \hline
$v_6$ &   0   &   0   &   0   & \!0.8 & \!0.8 &   1   &   1   &   0   &   0   &   0   \\ \hline
$v_7$ &   0   &   0   &   0   &   0   &   0   &   0   &   0   &   1   & \!0.8 & \!0.8 \\ \hline
$v_8$ &   0   &   0   &   0   &   0   &   0   &   0   &   0   & \!0.8 &   1   &   1   \\ \hline
$v_9$ &   0   &   0   &   0   &   0   &   0   &   0   &   0   & \!0.8 &   1   &   1   \\ \hline
\end{tabular}
& 
\begin{tabular}{|*{11}{>{\centering\arraybackslash}m{0.22cm}|}}
\hline
$E$   & $v_0$ & $v_1$ & $v_2$ & $v_3$ & $v_4$ & $v_5$ & $v_6$ & $v_7$ & $v_8$ & $v_9$ \\ \hline
$v_0$ &   1   &   1   &   1   &   0   &   0   &   0   &   0   &   0   &   0   &   0   \\ \hline
$v_1$ &   1   &   1   &   1   &   0   &   0   &   0   &   0   &   0   &   0   &   0   \\ \hline
$v_2$ &   1   &   1   &   1   &   0   &   0   &   0   &   0   &   0   &   0   &   0   \\ \hline
$v_3$ &   0   &   0   &   0   &   1   &   1   & \!0.8 & \!0.8 &   0   &   0   &   0   \\ \hline
$v_4$ &   0   &   0   &   0   &   1   &   1   & \!0.8 & \!0.8 &   0   &   0   &   0   \\ \hline
$v_5$ &   0   &   0   &   0   & \!0.8 & \!0.8 &   1   &   1   & \!0.2 & \!0.2 & \!0.2 \\ \hline
$v_6$ &   0   &   0   &   0   & \!0.8 & \!0.8 &   1   &   1   & \!0.2 & \!0.2 & \!0.2 \\ \hline
$v_7$ &   0   &   0   &   0   &   0   &   0   & \!0.2 & \!0.2 &   1   & \!0.8 & \!0.8 \\ \hline
$v_8$ &   0   &   0   &   0   &   0   &   0   & \!0.2 & \!0.2 & \!0.8 &   1   &   1   \\ \hline
$v_9$ &   0   &   0   &   0   &   0   &   0   & \!0.2 & \!0.2 & \!0.8 &   1   &   1   \\ \hline
\end{tabular}
\\
\\
\!\!\!\!\!\!\!\!
\begin{tabular}{l}
Case \\
$C \in \Phi$,\\ 
$O \in \Phi$:
\end{tabular}\!\!\!\!\!
& 
\begin{tabular}{|*{11}{>{\centering\arraybackslash}m{0.22cm}|}}
\hline
$E$   & $v_0$ & $v_1$ & $v_2$ & $v_3$ & $v_4$ & $v_5$ & $v_6$ & $v_7$ & $v_8$ & $v_9$ \\ \hline
$v_0$ &   1   &   0   &   0   &   0   &   0   &   0   &   0   &   0   &   0   &   0   \\ \hline
$v_1$ &   0   &   1   &   0   &   0   &   0   &   0   &   0   &   0   &   0   &   0   \\ \hline
$v_2$ &   0   &   0   &   1   &   0   &   0   &   0   &   0   &   0   &   0   &   0   \\ \hline
$v_3$ &   0   &   0   &   0   &   1   &   0   &   0   &   0   &   0   &   0   &   0   \\ \hline
$v_4$ &   0   &   0   &   0   &   0   &   1   &   0   &   0   &   0   &   0   &   0   \\ \hline
$v_5$ &   0   &   0   &   0   &   0   &   0   &   1   &   a   &   0   &   0   &   0   \\ \hline
$v_6$ &   0   &   0   &   0   &   0   &   0   &   a   &   1   &   0   &   0   &   0   \\ \hline
$v_7$ &   0   &   0   &   0   &   0   &   0   &   0   &   0   &   1   &   0   &   0   \\ \hline
$v_8$ &   0   &   0   &   0   &   0   &   0   &   0   &   0   &   0   &   1   &   a   \\ \hline
$v_9$ &   0   &   0   &   0   &   0   &   0   &   0   &   0   &   0   &   a   &   1   \\ \hline
\end{tabular}
& 
\begin{tabular}{|*{11}{>{\centering\arraybackslash}m{0.22cm}|}}
\hline
$E$   & $v_0$ & $v_1$ & $v_2$ & $v_3$ & $v_4$ & $v_5$ & $v_6$ & $v_7$ & $v_8$ & $v_9$ \\ \hline
$v_0$ &   1   &   0   &   0   &   0   &   0   &   0   &   0   &   0   &   0   &   0   \\ \hline
$v_1$ &   0   &   1   &   0   &   0   &   0   &   0   &   0   &   0   &   0   &   0   \\ \hline
$v_2$ &   0   &   0   &   1   &   0   &   0   &   0   &   0   &   0   &   0   &   0   \\ \hline
$v_3$ &   0   &   0   &   0   &   1   &   0   &   0   &   0   &   0   &   0   &   0   \\ \hline
$v_4$ &   0   &   0   &   0   &   0   &   1   & \!0.2 & \!0.2 &   0   &   0   &   0   \\ \hline
$v_5$ &   0   &   0   &   0   &   0   & \!0.2 &   1   & \!0.7 & \!0.2 & \!0.2 & \!0.2 \\ \hline
$v_6$ &   0   &   0   &   0   &   0   & \!0.2 & \!0.7 &   1   & \!0.2 & \!0.2 & \!0.2 \\ \hline
$v_7$ &   0   &   0   &   0   &   0   &   0   & \!0.2 & \!0.2 &   1   & \!0.2 & \!0.2 \\ \hline
$v_8$ &   0   &   0   &   0   &   0   &   0   & \!0.2 & \!0.2 & \!0.2 &   1   & \!0.7 \\ \hline
$v_9$ &   0   &   0   &   0   &   0   &   0   & \!0.2 & \!0.2 & \!0.2 & \!0.7 &   1   \\ \hline
\end{tabular}
\end{tabular}
\end{table}

%----------------------------------------------------------------------

The greatest fuzzy $\Phi$-auto-bisimulation $E$ of~$\mM$ is displayed in Table~\ref{table: JSIWH} for different cases of~$\Phi$ under the product and {\L}ukasiewicz semantics. The computation of $E$ and the other results reported in this example can be checked using the {\em MinFKMPL.py} module of the package~\cite{MinFKM-prog}, by running this module and examining the outputs involving the input files {\em m1.in} and {\em m1o.in}.   

%----------------------------------------------------------------------

\newcommand{\EMPq}{
\begin{tikzpicture}[scale=0.9, transform shape,->,>=stealth,auto,black]
\node (top) {};
\node[draw] (v0) [node distance=0.4cm, below of=top] {$E_{v_0}$};
\node[draw] (v3) [node distance=2.5cm, below of=v0] {$E_{v_3}$};
\node[draw] (v7) [node distance=2.5cm, below of=v3] {$E_{v_7}$};
\draw (v0) to node [left]{\footnotesize{1}} (v3);
\draw (v3) to node [left, xshift=2pt]{\footnotesize{0.8}} (v7);
\end{tikzpicture}
}

\newcommand{\EMLq}{
\begin{tikzpicture}[scale=0.9, transform shape,->,>=stealth,auto,black]
\node (top) {};
\node[draw] (v0) [node distance=0.4cm, below of=top] {$E_{v_0}$};
\node[draw] (v3) [node distance=2.5cm, below of=v0] {$E_{v_3}$};
\node[draw] (v7) [node distance=2.5cm, below of=v3] {$E_{v_7}$};
\draw (v0) to node [left]{\footnotesize{1}} (v3);
\draw (v3) to node [left, xshift=2pt]{\footnotesize{0.8}} (v7);
\draw (v0) edge[bend left=20] node[right, pos=0.21, xshift=-2pt]{\footnotesize{0.2}} (v7);
\end{tikzpicture}
}

\newcommand{\EMPqO}{
\begin{tikzpicture}[scale=0.9, transform shape,->,>=stealth,auto,black]
\node (top) {};
\node[draw] (v1) [node distance=0.4cm, below of=top] {$E_{v_1}$};
\node[draw] (v0) [node distance=1.0cm, left of=v1] {$E_{v_0}$};
\node[draw] (v2) [node distance=1.0cm, right of=v1] {$E_{v_2}$};
\node[draw] (v3) [node distance=2.5cm, below of=v1] {$E_{v_3}$};
\node[draw] (v7) [node distance=2.5cm, below of=v3] {$E_{v_7}$};
\draw (v0) to node [left, xshift=1pt]{\footnotesize{1}} (v3);
\draw (v1) to node [left, xshift=2pt]{\footnotesize{1}} (v3);
\draw (v2) to node [left, xshift=2pt]{\footnotesize{1}} (v3);
\draw (v3) to node [left, xshift=2pt]{\footnotesize{0.8}} (v7);
\end{tikzpicture}
}

\newcommand{\EMLqO}{
\begin{tikzpicture}[scale=0.9, transform shape,->,>=stealth,auto,black]
\node (top) {};
\node[draw] (v1) [node distance=0.4cm, below of=top] {$E_{v_1}$};
\node[draw] (v0) [node distance=1.0cm, left of=v1] {$E_{v_0}$};
\node[draw] (v2) [node distance=1.0cm, right of=v1] {$E_{v_2}$};
\node[draw] (v3) [node distance=2.5cm, below of=v1] {$E_{v_3}$};
\node[draw] (v7) [node distance=2.5cm, below of=v3] {$E_{v_7}$};
\draw (v0) to node [left, xshift=1pt]{\footnotesize{1}} (v3);
\draw (v1) to node [left, xshift=2pt]{\footnotesize{1}} (v3);
\draw (v2) to node [right, pos=0.4, xshift=-2pt]{\footnotesize{1}} (v3);
\draw (v3) to node [left, pos=0.2, xshift=2pt]{\footnotesize{0.8}} (v7);
\draw (v0) edge[bend right=20] node[left, pos=0.80, xshift=1pt]{\footnotesize{0.2}} (v7);
\draw (v1) edge[bend left=20] node[right, pos=0.3, xshift=-2pt]{\footnotesize{0.2}} (v7);
\draw (v2) edge[bend left=20] node[right, pos=0.80, xshift=-1pt]{\footnotesize{0.2}} (v7);
\end{tikzpicture}
}

\newcommand{\EMPqC}{
\begin{tikzpicture}[scale=0.9, transform shape,->,>=stealth,auto,black]
\node (top) {};
\node[draw] (v0) [node distance=0.4cm, below of=top] {$E_{v_0}$};
\node (bv0) [node distance=2.5cm, below of=v0] {};
\node[draw] (v3) [node distance=1.0cm, left of=bv0] {$E_{v_3}$};
\node[draw] (v5) [node distance=1.0cm, right of=bv0] {$E_{v_5}$};
\node[draw] (v7) [node distance=2.5cm, below of=v3] {$E_{v_7}$};
\node[draw] (v8) [node distance=2.5cm, below of=v5] {$E_{v_8}$};
\draw (v0) to node [left]{\footnotesize{1}} (v3);
\draw (v0) to node [right]{\footnotesize{0.8}} (v5);
\draw (v3) to node [left, xshift=2pt]{\footnotesize{0.8}} (v7);
\draw (v3) to node [left, pos=0.9]{\footnotesize{0.64}} (v8);
\draw (v5) to node [right, xshift=-2pt]{\footnotesize{0.8}} (v8);
\draw (v5) to node [left, pos=0.1]{\footnotesize{0.64}} (v7);
\end{tikzpicture}
}

\newcommand{\EMLqC}{
\begin{tikzpicture}[scale=0.9, transform shape,->,>=stealth,auto,black]
\node (top) {};
\node[draw] (v0) [node distance=0.4cm, below of=top] {$E_{v_0}$};
\node (bv0) [node distance=2.5cm, below of=v0] {};
\node[draw] (v3) [node distance=1.0cm, left of=bv0] {$E_{v_3}$};
\node[draw] (v5) [node distance=1.0cm, right of=bv0] {$E_{v_5}$};
\node[draw] (v7) [node distance=2.5cm, below of=v3] {$E_{v_7}$};
\node[draw] (v8) [node distance=2.5cm, below of=v5] {$E_{v_8}$};
\draw (v0) to node [left]{\footnotesize{1}} (v3);
\draw (v0) to node [right]{\footnotesize{0.8}} (v5);
\draw (v3) to node [left, xshift=2pt]{\footnotesize{0.8}} (v7);
\draw (v3) to node [left, pos=0.9]{\footnotesize{0.6}} (v8);
\draw (v5) to node [right, xshift=-2pt]{\footnotesize{0.8}} (v8);
\draw (v5) to node [left, pos=0.1]{\footnotesize{0.6}} (v7);
\end{tikzpicture}
}

\begin{figure}
\begin{center}
\begin{tabular}{|c|c|c||c|c|c|}
\hline
\multicolumn{3}{|c||}{Under the product semantics} & \multicolumn{3}{c|}{Under the {\L}ukasiewicz semantics} 
\\ \hline
$C \notin \Phi$ & $C \notin \Phi$ & $C \in \Phi$ & 
$C \notin \Phi$ & $C \notin \Phi$ & $C \in \Phi$
\\
$O \notin \Phi$ & $O \in \Phi$ & $O \notin \Phi$ & 
$O \notin \Phi$ & $O \in \Phi$ & $O \notin \Phi$
\\ \hline
  \EMPq
& \EMPqO
& \EMPqC
& \EMLq
& \EMLqO
& \EMLqC 
\\ \hline
\end{tabular}
\end{center}
\caption{The quotient model $\mME$ of the fuzzy Kripke model $\mM$ considered in Example~\ref{example: JHRUA} with respect to its greatest fuzzy $\Phi$-auto-bisimulation $E$, for different cases of~$\Phi$ under the product and {\L}ukasiewicz semantics.\label{fig: JHRUA2}}
\end{figure}

%----------------------------------------------------------------------

The quotient fuzzy Kripke model $\mME$ is illustrated in Figure~\ref{fig: JHRUA2} for different cases of~$\Phi$ under the product and {\L}ukasiewicz semantics. This figure does not cover the case where $\{C,O\} \subset \Phi$. In this remaining case, $\mME$ is isomorphic to $\mM$ under both the product and {\L}ukasiewicz semantics, via the bijection that maps $E_{v_i}$ to $v_i$, for $0 \leq i \leq 9$.

%----------------------------------------------------------------------

\newcommand{\EMPl}{
\begin{tikzpicture}[scale=0.9, transform shape,->,>=stealth,auto,black]
\node (top) {};
\node[draw, circle] (v0) [node distance=0.4cm, below of=top] {$v_0$};
\node[draw, circle] (v3) [node distance=2.5cm, below of=v0] {$v_3$};
\node[draw, circle] (v7) [node distance=2.5cm, below of=v3] {$v_7$};
\draw (v0) to node [left]{\footnotesize{1}} (v3);
\draw (v3) to node [left, xshift=2pt]{\footnotesize{0.8}} (v7);
\end{tikzpicture}
}

\newcommand{\EMPlO}{
\begin{tikzpicture}[scale=0.9, transform shape,->,>=stealth,auto,black]
\node (top) {};
\node[draw, circle] (v1) [node distance=0.4cm, below of=top] {$v_1$};
\node[draw, circle] (v0) [node distance=1.2cm, left of=v1] {$v_0$};
\node[draw, circle] (v2) [node distance=1.2cm, right of=v1] {$v_2$};
\node[draw, circle] (v3) [node distance=2.5cm, below of=v1] {$v_3$};
\node[draw, circle] (v7) [node distance=2.5cm, below of=v3] {$v_7$};
\draw (v0) to node [left, xshift=1pt]{\footnotesize{1}} (v3);
\draw (v1) to node [left, xshift=2pt]{\footnotesize{1}} (v3);
\draw (v2) to node [left, xshift=2pt]{\footnotesize{1}} (v3);
\draw (v3) to node [left, xshift=2pt]{\footnotesize{0.8}} (v7);
\end{tikzpicture}
}

\newcommand{\EMPlCO}{
\begin{tikzpicture}[scale=0.9, transform shape,->,>=stealth,auto,black]
\node (top) {};
\node (btop) [node distance=0.4cm, below of=top] {};
\node[draw, circle] (v0) [node distance=1.2cm, right of=btop] {$v_0$};
\node[draw, circle] (v1) [node distance=2.4cm, right of=v0] {$v_1$};
\node[draw, circle] (v2) [node distance=2.4cm, right of=v1] {$v_2$};
\node[draw, circle] (v3) [node distance=2.5cm, below of=btop] {$v_3$};
\node[draw, circle] (v4) [node distance=2.4cm, right of=v3] {$v_4$};
\node[draw, circle] (v5) [node distance=2.4cm, right of=v4] {$v_5$};
\node[draw, circle] (v6) [node distance=2.4cm, right of=v5] {$v_6$};
\node[draw, circle] (v7) [node distance=5.0cm, below of=v0] {$v_7$};
\node[draw, circle] (v8) [node distance=2.4cm, right of=v7] {$v_8$};
\node[draw, circle] (v9) [node distance=2.4cm, right of=v8] {$v_9$};
\draw (v0) to node [left]{\footnotesize{1}} (v3);
\draw (v0) to node [left]{\footnotesize{0.8}} (v4);
\draw (v1) to node [left]{\footnotesize{1}} (v4);
\draw (v1) to node [left, pos=0.3]{\footnotesize{0.8}} (v5);
\draw (v1) to node [left, pos=0.8, xshift=-5pt]{\footnotesize{0.5}} (v6);
\draw (v2) to node [left, pos=0.12, xshift=-5pt]{\footnotesize{1}} (v4);
\draw (v2) to node [right, pos=0.3]{\footnotesize{0.5}} (v5);
\draw (v2) to node [right]{\footnotesize{0.8}} (v6);
\draw (v3) to node [left]{\footnotesize{0.8}} (v7);
\draw (v4) to node [left]{\footnotesize{0.8}} (v7);
\draw (v5) to node [left]{\footnotesize{0.8}} (v8);
%\draw (v5) to node [left, pos=0.3]{\footnotesize{0.5}} (v9);
%\draw (v6) to node [left, pos=0.2, xshift=-3pt]{\footnotesize{0.5}} (v8);
\draw (v6) to node [right]{\footnotesize{0.8}} (v9);
\end{tikzpicture}
}

\begin{figure}
\centering
\begin{tabular}{|c|c|c|}
\hline
\ \ $O \notin \Phi$\ \ & 
$O \in \Phi$, $C \notin \Phi$ & 
$O \in \Phi$, $C \in \Phi$
\\ \hline 
  \EMPl
& \EMPlO
& \EMPlCO
\\ \hline 
\end{tabular}
\caption{The reduced fuzzy Kripke model obtained by applying Algorithm~\ref{alg1} to the fuzzy Kripke model $\mM$ considered in Example~\ref{example: JHRUA}, for different cases of~$\Phi$ under the product and {\L}ukasiewicz semantics.\label{fig: JHRUA3}}
\end{figure}

%----------------------------------------------------------------------

The reduced fuzzy Kripke model $\mMp$ obtained by applying Algorithm~\ref{alg1} to~$\mM$ is illustrated in Figure~\ref{fig: JHRUA3} for different cases of~$\Phi$ under the product and {\L}ukasiewicz semantics. The reduced model in each case is identical under the product and {\L}ukasiewicz semantics. The case where $O \notin \Phi$ is independent of whether $C \in \Phi$. 
Observe that:
\begin{itemize}
\item $\mMp$ in the case where $O \notin \Phi$ is isomorphic to $\mME$ in the case where $C \notin \Phi$ and $O \notin \Phi$.
\item When $O \in \Phi$ and $C \notin \Phi$, $\mMp$ is isomorphic to $\mME$.
\item When $O \in \Phi$ and $C \in \Phi$, $\mMp$ differs from $\mM$ only in that it does not contain the transitions $\tuple{v_5,\sigma,v_9}$ and $\tuple{v_6,\sigma,v_8}$ (i.e., $\sigma^\mMp(v_5,v_9) = \sigma^\mMp(v_6,v_8) = 0$).
\end{itemize}

Consider the case where $O \notin \Phi$, $C \in \Phi$, and the underlying residuated lattice is the product or {\L}ukasiewicz structure. We examine the execution of Algorithm~\ref{alg1} for $\mM$:
\begin{itemize}
\item After executing the statements~\ref{step: alg1 1}--\ref{step: alg1 2}, we have $\DeltaMp = \{v_0\}$, $v_i^\mMp = v_0$ for $v_i \in \Ss$, and $\unprocessed = \{v_0\}$.

\item In the first iteration of the ``while'' loop, $v_0$ is extracted from $\unprocessed$ and processed. Since $\sigma^\mM(v_0, v_3) > \sigma^\mM(v_0, v_4) > 0$, the states $v_3$ and $v_4$ are processed in subsequent iterations of the inner ``foreach'' loop~\ref{step: alg1 5}.
When $v_3$ is processed, it is added to $\DeltaMp$ and $\unprocessed$ by the statement~\ref{step: alg1 10}, with $\sigma^\mMp(v_0, v_3)$ set to 1 by the statement~\ref{step: alg1 12}. When $v_4$ is processed, it does not satisfy the condition of the ``if'' statement~\ref{step: alg1 6}, since
\[ \sigma^\mMp(v_0, v_3) \fand E(v_3, v_4) = 1 > 0.8 = \sigma^\mM(v_0, v_4). \]

\item In the second iteration of the ``while'' loop, $v_3$ is extracted from $\unprocessed$ and processed. For $\alpha = \sigma$, only the state $v_7$ is processed by the inner ``foreach'' loop~\ref{step: alg1 5}. As a result, $v_7$ is added to $\DeltaMp$ and $\unprocessed$ by the statement~\ref{step: alg1 10}, with $\sigma^\mMp(v_3, v_7)$ set to~$0.8$ by the statement~\ref{step: alg1 12}. For $\alpha = \sigma^-$, only the state $v_0$ is processed by the inner ``foreach'' loop~\ref{step: alg1 5}. However, $v_0$ does not satisfy the condition of the ``if'' statement~\ref{step: alg1 6}, since
\[ (\sigma^-)^\mMp(v_3, v_0) \fand E(v_0, v_0) = 1 = (\sigma^-)^\mM(v_3, v_0). \]

\item In the third iteration of the ``while'' loop, $v_7$ is extracted from $\unprocessed$ and processed. The inner ``foreach'' loop~\ref{step: alg1 5} executes only two iterations, with $\alpha = \sigma^-$ and $y \in \{v_3, v_4\}$. Neither of these values of $y$ satisfies the condition of the ``if'' statement~\ref{step: alg1 6}, since
\begin{eqnarray*}
&& (\sigma^-)^\mMp(v_7, v_3) \fand E(v_3, v_3) = 0.8 = (\sigma^-)^\mM(v_7, v_3),\\
&& (\sigma^-)^\mMp(v_7, v_3) \fand E(v_3, v_4) = 0.8 = (\sigma^-)^\mM(v_7, v_4).
\end{eqnarray*}
Thus, no changes are made to $\mMp$ or $\unprocessed$. 

\item After the third iteration, the ``while'' loop terminates, and the algorithm returns $\mMp$, which is illustrated in the left subfigure in Figure~\ref{fig: JHRUA3}.
\end{itemize}

Now consider the case where $O \in \Phi$, $C \notin \Phi$, and the underlying residuated lattice is the product or {\L}ukasiewicz structure. We examine the execution of Algorithm~\ref{alg1} for $\mM$:
\begin{itemize}
\item After executing the statements~\ref{step: alg1 1}--\ref{step: alg1 2}, we have $\DeltaMp = \{v_0,v_1,v_2\}$, $v_i^\mMp = v_i$ for $v_i \in \Ss$, and $\unprocessed = \{v_0,v_1,v_2\}$.

\item Consider the first iteration of the ``while'' loop and assume that $v_0$ is extracted from $\unprocessed$ and processed. As in the previous case, the states $v_3$ and $v_4$ are processed in subsequent iterations of the inner ``foreach'' loop~\ref{step: alg1 5}, but only $v_3$ is added to $\DeltaMp$ and $\unprocessed$, with $\sigma^\mMp(v_0, v_3)$ set to~1.

\item Consider the second iteration of the ``while'' loop and assume that $v_1$ is extracted from $\unprocessed$ and processed. Since $\sigma^\mM(v_1, v_4) > \sigma^\mM(v_1, v_5) > \sigma^\mM(v_1, v_6) > 0$, the states $v_4$, $v_5$, and $v_6$ are processed (as $y$) in subsequent iterations of the inner ``foreach'' loop~\ref{step: alg1 5}. When $v_4$ is processed, it satisfies the condition of the ``if'' statement~\ref{step: alg1 6}, as well as the condition of the ``if'' statement~\ref{step: alg1 7} with $y' = v_3$. As a result, $\sigma^\mMp(v_1,v_3)$ is set to $\sigma^\mM(v_1,v_4)$, i.e., to~1. When $v_5$ and $v_6$ are processed, neither satisfies the condition of the ``if'' statement~\ref{step: alg1 6}, since
\begin{eqnarray*}
&& \sigma^\mMp(v_1, v_3) \fand E(v_3, v_5) = 1 > 0.8 = \sigma^\mM(v_1, v_5),\\
&& \sigma^\mMp(v_1, v_3) \fand E(v_3, v_6) = 1 > 0.5 = \sigma^\mM(v_1, v_6).
\end{eqnarray*}
Thus, no further changes are made to $\mMp$ or $\unprocessed$.

\item Consider the third iteration of the ``while'' loop and assume that $v_2$ is extracted from $\unprocessed$ and processed. As in the second iteration, $\sigma^\mMp(v_2,v_3)$ is set to~1, and no further changes are made to $\mMp$ or $\unprocessed$.

\item In the fourth iteration of the ``while'' loop, $v_3$ is extracted from $\unprocessed$ and processed. Only the state $v_7$ is processed by the inner ``foreach'' loop~\ref{step: alg1 5}. As a result, $v_7$ is added to $\DeltaMp$ and $\unprocessed$ by the statement~\ref{step: alg1 10}, with $\sigma^\mMp(v_3, v_7)$ set to~$0.8$ by the statement~\ref{step: alg1 12}.

\item In the fifth iteration of the ``while'' loop, $v_7$ is extracted from $\unprocessed$ and processed, but no iterations of the inner ``foreach'' loop~\ref{step: alg1 5} are executed.

\item After the fifth iteration, the ``while'' loop terminates, and the algorithm returns $\mMp$, which is illustrated in the middle subfigure in Figure~\ref{fig: JHRUA3}.
\myend
\end{itemize}
\end{example}

%===============================================================================

\begin{figure}
\centering
\begin{tikzpicture}[scale=0.9, transform shape,->,>=stealth,auto,black]
\node (top) {};
\node (btop) [node distance=0.2cm, below of=top] {};
\node[draw] (v0) [node distance=1.5cm, right of=btop] {$v_0\!:\!p_{1.0}$};
\node[draw] (v1) [node distance=3.0cm, right of=v0] {$v_1\!:\!p_{0.5}$};
\node[draw] (v2) [node distance=3.0cm, right of=v1] {$v_2\!:\!p_{1.0}$};
\node[draw] (v3) [node distance=2.5cm, below of=btop] {$v_3$};
\node[draw] (v4) [node distance=3.0cm, right of=v3] {$v_4$};
\node[draw] (v5) [node distance=3.0cm, right of=v4] {$v_5$};
\node[draw] (v6) [node distance=3.0cm, right of=v5] {$v_6$};
\node[draw] (v7) [node distance=5.0cm, below of=v0] {$v_7$};
\node[draw] (v8) [node distance=3.0cm, right of=v7] {$v_8$};
\node[draw] (v9) [node distance=3.0cm, right of=v8] {$v_9$};
\draw (v0) to node [left]{\footnotesize{0.4}} (v3);
\draw (v0) to node [left,pos=0.3]{\footnotesize{0.4}} (v4);
\draw (v1) to node [left,pos=0.2,yshift=3pt]{\footnotesize{0.4}} (v3);
\draw (v1) to node [left]{\footnotesize{0.4}} (v4);
\draw (v1) to node [right]{\footnotesize{0.4}} (v5);
\draw (v1) to node [right,pos=0.2,yshift=3pt]{\footnotesize{0.4}} (v6);
\draw (v2) to node [right,pos=0.3]{\footnotesize{0.4}} (v5);
\draw (v2) to node [right]{\footnotesize{0.4}} (v6);
\draw[dashed] (v3) to node [below]{\footnotesize{1}} (v4);
\draw[dashed] (v5) to node [below]{\footnotesize{1}} (v4);
\draw[dashed] (v5) to node [below]{\footnotesize{1}} (v6);
\draw (v7) to node [left]{\footnotesize{1}} (v3);
\draw (v4) to node [left]{\footnotesize{1}} (v7);
\draw (v4) to node [left]{\footnotesize{1}} (v8);
\draw (v8) to node [right]{\footnotesize{1}} (v5);
\draw (v9) to node [right]{\footnotesize{1}} (v5);
\draw (v6) to node [right]{\footnotesize{1}} (v9);
\draw (v7) to node [above]{\footnotesize{0.3}} (v8);
\draw (v8) to node [above]{\footnotesize{0.3}} (v9);
\draw (v9) edge[bend left=20] node [below]{\footnotesize{0.3}} (v7);
\end{tikzpicture}
\caption{An illustration of the fuzzy Kripke model $\mM$ considered in Example~\ref{example: HGDHG}.\label{fig: HGDHG1}}
\end{figure}

%----------------------------------------------------------------------

\begin{example}\label{example: HGDHG}
Let $\SA = \{\sigma,\varrho\}$, $\SP = \{p\}$, $\Ss = \{v_0\}$, $\So = \{v_0\}$, and $\{\cup, \circ, *, \lor, \Box\} \subseteq \Phi \subseteq \{\cup, \circ, *, \lor, \Box, C,O\}$. Let $\mM$ be the fuzzy Kripke model illustrated in Figure~\ref{fig: JHRUA1} and specified as follows:
\begin{itemize}
\item $\DeltaM = \{v_0,\ldots,v_9\}$, $v_0^\mM = v_0$, $p^\mM = \{v_0\!:\!1, v_1\!:\!0.5, v_2\!:\!1\}$, 
\item $\sigma^\mM = \{\tuple{v_0,v_3}\!:\!0.4$, 
$\tuple{v_0,v_4}\!:\!0.4$, 
$\tuple{v_1,v_3}\!:\!0.4$, 
$\tuple{v_1,v_4}\!:\!0.4$, 
$\tuple{v_1,v_5}\!:\!0.4$, 
$\tuple{v_1,v_6}\!:\!0.4$, 
$\tuple{v_2,v_5}\!:\!0.4$, 
$\tuple{v_2,v_6}\!:\!0.4$, 
$\tuple{v_4,v_7}\!:\!1$, 
$\tuple{v_4,v_8}\!:\!1$, 
$\tuple{v_6,v_9}\!:\!1$, 
$\tuple{v_7,v_3}\!:\!1$, 
$\tuple{v_7,v_8}\!:\!0.3$, 
$\tuple{v_8,v_5}\!:\!1$, 
$\tuple{v_8,v_9}\!:\!0.3$, 
$\tuple{v_9,v_5}\!:\!1$, 
$\tuple{v_9,v_7}\!:\!0.3\}$, 
\item $\varrho^\mM = \{\tuple{v_3,v_4}\!:\!1$, $\tuple{v_5,v_4}\!:\!1$, $\tuple{v_5,v_6}\!:\!1\}$.
\end{itemize}

%----------------------------------------------------------------------

\begin{table}
\caption{The greatest fuzzy $\Phi$-auto-bisimulation of the fuzzy Kripke model $\mM$ considered in Example~\ref{example: HGDHG} in different cases. \label{table: HGDHG}}
\scriptsize

\bigskip

\noindent
\begin{tabular}{ccc}
& Under the product semantics: & Under the {\L}ukasiewicz semantics: \\[1ex]
\!\!\!\!\!\!\!\!
\begin{tabular}{l}
Case \\
$C \notin \Phi$,\\ 
$O \notin \Phi$:
\end{tabular}\!\!\!\!\!
& 
\begin{tabular}{|*{11}{>{\centering\arraybackslash}m{0.22cm}|}}
\hline
$E$   & $v_0$ & $v_1$ & $v_2$ & $v_3$ & $v_4$ & $v_5$ & $v_6$ & $v_7$ & $v_8$ & $v_9$ \\ \hline
$v_0$ &   1   & \!0.5 &   1   &   0   &   0   &   0   &   0   &   0   &   0   &   0   \\ \hline
$v_1$ & \!0.5 &   1   & \!0.5 &   0   &   0   &   0   &   0   &   0   &   0   &   0   \\ \hline
$v_2$ &   1   & \!0.5 &   1   &   0   &   0   &   0   &   0   &   0   &   0   &   0   \\ \hline
$v_3$ &   0   &   0   &   0   &   1   &   0   &   1   &   0   &   0   &   0   &   0   \\ \hline
$v_4$ &   0   &   0   &   0   &   0   &   1   &   0   &   1   &   0   &   0   &   0   \\ \hline
$v_5$ &   0   &   0   &   0   &   1   &   0   &   1   &   0   &   0   &   0   &   0   \\ \hline
$v_6$ &   0   &   0   &   0   &   0   &   1   &   0   &   1   &   0   &   0   &   0   \\ \hline
$v_7$ &   0   &   0   &   0   &   0   &   0   &   0   &   0   &   1   &   1   &   1   \\ \hline
$v_8$ &   0   &   0   &   0   &   0   &   0   &   0   &   0   &   1   &   1   &   1   \\ \hline
$v_9$ &   0   &   0   &   0   &   0   &   0   &   0   &   0   &   1   &   1   &   1   \\ \hline
\end{tabular}
& 
\begin{tabular}{|*{11}{>{\centering\arraybackslash}m{0.22cm}|}}
\hline
$E$   & $v_0$ & $v_1$ & $v_2$ & $v_3$ & $v_4$ & $v_5$ & $v_6$ & $v_7$ & $v_8$ & $v_9$ \\ \hline
$v_0$ &   1   & \!0.5 &   1   &   0   &   0   &   0   &   0   &   0   &   0   &   0   \\ \hline
$v_1$ & \!0.5 &   1   & \!0.5 &   0   &   0   &   0   &   0   & \!0.4 & \!0.4 & \!0.4 \\ \hline
$v_2$ &   1   & \!0.5 &   1   &   0   &   0   &   0   &   0   &   0   &   0   &   0   \\ \hline
$v_3$ &   0   &   0   &   0   &   1   &   0   &   1   &   0   &   0   &   0   &   0   \\ \hline
$v_4$ &   0   &   0   &   0   &   0   &   1   &   0   &   1   &   0   &   0   &   0   \\ \hline
$v_5$ &   0   &   0   &   0   &   1   &   0   &   1   &   0   &   0   &   0   &   0   \\ \hline
$v_6$ &   0   &   0   &   0   &   0   &   1   &   0   &   1   &   0   &   0   &   0   \\ \hline
$v_7$ &   0   & \!0.4 &   0   &   0   &   0   &   0   &   0   &   1   &   1   &   1   \\ \hline
$v_8$ &   0   & \!0.4 &   0   &   0   &   0   &   0   &   0   &   1   &   1   &   1   \\ \hline
$v_9$ &   0   & \!0.4 &   0   &   0   &   0   &   0   &   0   &   1   &   1   &   1   \\ \hline
\end{tabular}
\\
\\
\!\!\!\!\!\!\!\!
\begin{tabular}{l}
Case \\
$C \notin \Phi$,\\ 
$O \in \Phi$:
\end{tabular}\!\!\!\!\!
& 
\begin{tabular}{|*{11}{>{\centering\arraybackslash}m{0.22cm}|}}
\hline
$E$   & $v_0$ & $v_1$ & $v_2$ & $v_3$ & $v_4$ & $v_5$ & $v_6$ & $v_7$ & $v_8$ & $v_9$ \\ \hline
$v_0$ &   1   &   0   &   0   &   0   &   0   &   0   &   0   &   0   &   0   &   0   \\ \hline
$v_1$ &   0   &   1   & \!0.5 &   0   &   0   &   0   &   0   &   0   &   0   &   0   \\ \hline
$v_2$ &   0   & \!0.5 &   1   &   0   &   0   &   0   &   0   &   0   &   0   &   0   \\ \hline
$v_3$ &   0   &   0   &   0   &   1   &   0   &   1   &   0   &   0   &   0   &   0   \\ \hline
$v_4$ &   0   &   0   &   0   &   0   &   1   &   0   &   1   &   0   &   0   &   0   \\ \hline
$v_5$ &   0   &   0   &   0   &   1   &   0   &   1   &   0   &   0   &   0   &   0   \\ \hline
$v_6$ &   0   &   0   &   0   &   0   &   1   &   0   &   1   &   0   &   0   &   0   \\ \hline
$v_7$ &   0   &   0   &   0   &   0   &   0   &   0   &   0   &   1   &   1   &   1   \\ \hline
$v_8$ &   0   &   0   &   0   &   0   &   0   &   0   &   0   &   1   &   1   &   1   \\ \hline
$v_9$ &   0   &   0   &   0   &   0   &   0   &   0   &   0   &   1   &   1   &   1   \\ \hline
\end{tabular}
& 
\begin{tabular}{|*{11}{>{\centering\arraybackslash}m{0.22cm}|}}
\hline
$E$   & $v_0$ & $v_1$ & $v_2$ & $v_3$ & $v_4$ & $v_5$ & $v_6$ & $v_7$ & $v_8$ & $v_9$ \\ \hline
$v_0$ &   1   &   0   &   0   &   0   &   0   &   0   &   0   &   0   &   0   &   0   \\ \hline
$v_1$ &   0   &   1   & \!0.5 &   0   &   0   &   0   &   0   & \!0.4 & \!0.4 & \!0.4 \\ \hline
$v_2$ &   0   & \!0.5 &   1   &   0   &   0   &   0   &   0   &   0   &   0   &   0   \\ \hline
$v_3$ &   0   &   0   &   0   &   1   &   0   &   1   &   0   &   0   &   0   &   0   \\ \hline
$v_4$ &   0   &   0   &   0   &   0   &   1   &   0   &   1   &   0   &   0   &   0   \\ \hline
$v_5$ &   0   &   0   &   0   &   1   &   0   &   1   &   0   &   0   &   0   &   0   \\ \hline
$v_6$ &   0   &   0   &   0   &   0   &   1   &   0   &   1   &   0   &   0   &   0   \\ \hline
$v_7$ &   0   & \!0.4 &   0   &   0   &   0   &   0   &   0   &   1   &   1   &   1   \\ \hline
$v_8$ &   0   & \!0.4 &   0   &   0   &   0   &   0   &   0   &   1   &   1   &   1   \\ \hline
$v_9$ &   0   & \!0.4 &   0   &   0   &   0   &   0   &   0   &   1   &   1   &   1   \\ \hline
\end{tabular}
\\
\\
\!\!\!\!\!\!\!\!
\begin{tabular}{l}
Case \\
$C \in \Phi$,\\ 
$O \notin \Phi$:
\end{tabular}\!\!\!\!\!
& 
\begin{tabular}{|*{11}{>{\centering\arraybackslash}m{0.22cm}|}}
\hline
$E$   & $v_0$ & $v_1$ & $v_2$ & $v_3$ & $v_4$ & $v_5$ & $v_6$ & $v_7$ & $v_8$ & $v_9$ \\ \hline
$v_0$ &   1   & \!0.5 &   1   &   0   &   0   &   0   &   0   &   0   &   0   &   0   \\ \hline
$v_1$ & \!0.5 &   1   & \!0.5 &   0   &   0   &   0   &   0   &   0   &   0   &   0   \\ \hline
$v_2$ &   1   & \!0.5 &   1   &   0   &   0   &   0   &   0   &   0   &   0   &   0   \\ \hline
$v_3$ &   0   &   0   &   0   &   1   &   0   &   1   &   0   &   0   &   0   &   0   \\ \hline
$v_4$ &   0   &   0   &   0   &   0   &   1   &   0   &   1   &   0   &   0   &   0   \\ \hline
$v_5$ &   0   &   0   &   0   &   1   &   0   &   1   &   0   &   0   &   0   &   0   \\ \hline
$v_6$ &   0   &   0   &   0   &   0   &   1   &   0   &   1   &   0   &   0   &   0   \\ \hline
$v_7$ &   0   &   0   &   0   &   0   &   0   &   0   &   0   &   1   &   1   &   1   \\ \hline
$v_8$ &   0   &   0   &   0   &   0   &   0   &   0   &   0   &   1   &   1   &   1   \\ \hline
$v_9$ &   0   &   0   &   0   &   0   &   0   &   0   &   0   &   1   &   1   &   1   \\ \hline
\end{tabular}
& 
\begin{tabular}{|*{11}{>{\centering\arraybackslash}m{0.22cm}|}}
\hline
$E$   & $v_0$ & $v_1$ & $v_2$ & $v_3$ & $v_4$ & $v_5$ & $v_6$ & $v_7$ & $v_8$ & $v_9$ \\ \hline
$v_0$ &   1   & \!0.5 &   1   &   0   &   0   &   0   &   0   &   0   &   0   &   0   \\ \hline
$v_1$ & \!0.5 &   1   & \!0.5 &   0   &   0   &   0   &   0   &   0   &   0   &   0   \\ \hline
$v_2$ &   1   & \!0.5 &   1   &   0   &   0   &   0   &   0   &   0   &   0   &   0   \\ \hline
$v_3$ &   0   &   0   &   0   &   1   &   0   &   1   &   0   &   0   &   0   &   0   \\ \hline
$v_4$ &   0   &   0   &   0   &   0   &   1   &   0   &   1   &   0   &   0   &   0   \\ \hline
$v_5$ &   0   &   0   &   0   &   1   &   0   &   1   &   0   &   0   &   0   &   0   \\ \hline
$v_6$ &   0   &   0   &   0   &   0   &   1   &   0   &   1   &   0   &   0   &   0   \\ \hline
$v_7$ &   0   &   0   &   0   &   0   &   0   &   0   &   0   &   1   &   1   &   1   \\ \hline
$v_8$ &   0   &   0   &   0   &   0   &   0   &   0   &   0   &   1   &   1   &   1   \\ \hline
$v_9$ &   0   &   0   &   0   &   0   &   0   &   0   &   0   &   1   &   1   &   1   \\ \hline
\end{tabular}
\\
\\
\!\!\!\!\!\!\!\!
\begin{tabular}{l}
Case \\
$C \in \Phi$,\\ 
$O \in \Phi$:
\end{tabular}\!\!\!\!\!
& 
\begin{tabular}{|*{11}{>{\centering\arraybackslash}m{0.22cm}|}}
\hline
$E$   & $v_0$ & $v_1$ & $v_2$ & $v_3$ & $v_4$ & $v_5$ & $v_6$ & $v_7$ & $v_8$ & $v_9$ \\ \hline
$v_0$ &   1   &   0   &   0   &   0   &   0   &   0   &   0   &   0   &   0   &   0   \\ \hline
$v_1$ &   0   &   1   &   0   &   0   &   0   &   0   &   0   &   0   &   0   &   0   \\ \hline
$v_2$ &   0   &   0   &   1   &   0   &   0   &   0   &   0   &   0   &   0   &   0   \\ \hline
$v_3$ &   0   &   0   &   0   &   1   &   0   &   0   &   0   &   0   &   0   &   0   \\ \hline
$v_4$ &   0   &   0   &   0   &   0   &   1   &   0   &   0   &   0   &   0   &   0   \\ \hline
$v_5$ &   0   &   0   &   0   &   0   &   0   &   1   &   0   &   0   &   0   &   0   \\ \hline
$v_6$ &   0   &   0   &   0   &   0   &   0   &   0   &   1   &   0   &   0   &   0   \\ \hline
$v_7$ &   0   &   0   &   0   &   0   &   0   &   0   &   0   &   1   &   0   &   0   \\ \hline
$v_8$ &   0   &   0   &   0   &   0   &   0   &   0   &   0   &   0   &   1   &   0   \\ \hline
$v_9$ &   0   &   0   &   0   &   0   &   0   &   0   &   0   &   0   &   0   &   1   \\ \hline
\end{tabular}
& 
\begin{tabular}{|*{11}{>{\centering\arraybackslash}m{0.22cm}|}}
\hline
$E$   & $v_0$ & $v_1$ & $v_2$ & $v_3$ & $v_4$ & $v_5$ & $v_6$ & $v_7$ & $v_8$ & $v_9$ \\ \hline
$v_0$ &   1   &   0   &   0   &   0   &   0   &   0   &   0   &   0   &   0   &   0   \\ \hline
$v_1$ &   0   &   1   & \!0.5 &   0   &   0   &   0   &   0   &   0   &   0   &   0   \\ \hline
$v_2$ &   0   & \!0.5 &   1   &   0   &   0   &   0   &   0   &   0   &   0   &   0   \\ \hline
$v_3$ &   0   &   0   &   0   &   1   &   0   & \!0.6 &   0   &   0   &   0   &   0   \\ \hline
$v_4$ &   0   &   0   &   0   &   0   &   1   &   0   & \!0.6 &   0   &   0   &   0   \\ \hline
$v_5$ &   0   &   0   &   0   & \!0.6 &   0   &   1   &   0   &   0   &   0   &   0   \\ \hline
$v_6$ &   0   &   0   &   0   &   0   & \!0.6 &   0   &   1   &   0   &   0   &   0   \\ \hline
$v_7$ &   0   &   0   &   0   &   0   &   0   &   0   &   0   &   1   & \!0.6 & \!0.6 \\ \hline
$v_8$ &   0   &   0   &   0   &   0   &   0   &   0   &   0   & \!0.6 &   1   & \!0.6 \\ \hline
$v_9$ &   0   &   0   &   0   &   0   &   0   &   0   &   0   & \!0.6 & \!0.6 &   1   \\ \hline
\end{tabular}
\end{tabular}
\end{table}

%----------------------------------------------------------------------

The greatest fuzzy $\Phi$-auto-bisimulation $E$ of~$\mM$ is displayed in Table~\ref{table: HGDHG} for different cases of~$\Phi$ under the product and {\L}ukasiewicz semantics. The computation of $E$ and the other results reported in this example can be checked using the {\em MinFKMPL.py} module of the package~\cite{MinFKM-prog}, by running this module and examining the outputs involving the input files {\em m2.in} and {\em m2o.in}. 
Observe that:
\begin{itemize}
\item $E$ is the identity relation in the case where $\{C,O\} \subset \Phi$ and the product structure is used.
\item The relation $E$ is the same in the following three cases:
\begin{itemize}
\item $C \notin \Phi$, $O \notin \Phi$, and the product structure is used;
\item $C \in \Phi$, $O \notin \Phi$, and the product structure is used;
\item $C \in \Phi$, $O \notin \Phi$, and the {\L}ukasiewicz structure is used.
\end{itemize}
\end{itemize}

%----------------------------------------------------------------------

\newcommand{\EtMPq}{
\begin{tikzpicture}[scale=0.75, transform shape,->,>=stealth,auto,black]
\node (top) {};
\node[draw] (v0) [node distance=0.4cm, below of=top] {$E_{v_0}\!:\!p_{1.0}$};
\node[draw] (v1) [node distance=2.0cm, right of=v0] {$E_{v_1}\!:\!p_{0.5}$};
\node[draw] (v3) [node distance=2.5cm, below of=v0] {$E_{v_3}$};
\node[draw] (v4) [node distance=2.5cm, below of=v1] {$E_{v_4}$};
\node (bv3) [node distance=2.5cm, below of=v3] {};
\node[draw] (v7) [node distance=1.0cm, right of=bv3] {$E_{v_7}$};
\draw (v0) to node [left, xshift=1pt]{\footnotesize{0.4}} (v3);
\draw (v0) to node [left, pos=0.85]{\footnotesize{0.4}} (v4);
\draw (v1) to node [left, pos=0.15]{\footnotesize{0.4}} (v3);
\draw (v1) to node [right, xshift=-1pt]{\footnotesize{0.4}} (v4);
\draw[dashed] (v3) to node [below]{\footnotesize{1}} (v4);
\draw (v4) to node [right]{\footnotesize{1}} (v7);
\draw (v7) to node [left]{\footnotesize{1}} (v3);
\draw (v7) edge[out=-120,in=-60,looseness=6] node[below]{\footnotesize{0.3}} (v7);
\end{tikzpicture}
}

\newcommand{\EtMLq}{
\begin{tikzpicture}[scale=0.75, transform shape,->,>=stealth,auto,black]
\node (top) {};
\node[draw] (v0) [node distance=0.4cm, below of=top] {$E_{v_0}\!:\!p_{1.0}$};
\node[draw] (v1) [node distance=2.0cm, right of=v0] {$E_{v_1}\!:\!p_{0.5}$};
\node[draw] (v3) [node distance=2.5cm, below of=v0] {$E_{v_3}$};
\node[draw] (v4) [node distance=2.5cm, below of=v1] {$E_{v_4}$};
\node (bv3) [node distance=2.5cm, below of=v3] {};
\node[draw] (v7) [node distance=1.0cm, right of=bv3] {$E_{v_7}$};
\draw (v0) to node [left, xshift=1pt]{\footnotesize{0.4}} (v3);
\draw (v0) to node [left, pos=0.85]{\footnotesize{0.4}} (v4);
\draw (v1) to node [left, pos=0.15]{\footnotesize{0.4}} (v3);
\draw (v1) to node [left, xshift=1pt]{\footnotesize{0.4}} (v4);
\draw[dashed] (v3) to node [below]{\footnotesize{1}} (v4);
\draw (v4) to node [right]{\footnotesize{1}} (v7);
\draw (v7) to node [left]{\footnotesize{1}} (v3);
\draw (v7) edge[out=-120,in=-60,looseness=6] node[below]{\footnotesize{0.3}} (v7);
\draw (v4) edge[bend right=20] node [right, xshift=-1pt]{\footnotesize{0.4}} (v1);
\end{tikzpicture}
}

\newcommand{\EtMPqO}{
\begin{tikzpicture}[scale=0.75, transform shape,->,>=stealth,auto,black]
\node (top) {};
\node[draw] (v0) [node distance=0.4cm, below of=top] {$E_{v_0}\!:\!p_{1.0}$};
\node (rv0) [node distance=1.0cm, right of=v0] {};
\node[draw] (v1) [node distance=2.0cm, right of=v0] {$E_{v_1}\!:\!p_{0.5}$};
\node (rv1) [node distance=1.0cm, right of=v1] {};
\node[draw] (v2) [node distance=2.0cm, right of=v1] {$E_{v_2}\!:\!p_{1.0}$};
\node[draw] (v3) [node distance=2.5cm, below of=rv0] {$E_{v_3}$};
\node[draw] (v4) [node distance=2.5cm, below of=rv1] {$E_{v_4}$};
\node (bv3) [node distance=2.5cm, below of=v3] {};
\node[draw] (v7) [node distance=1.0cm, right of=bv3] {$E_{v_7}$};
\draw (v0) to node [left, xshift=1pt]{\footnotesize{0.4}} (v3);
\draw (v0) to node [left, pos=0.45, xshift=-1pt]{\footnotesize{0.4}} (v4);
\draw (v1) to node [left, pos=0.15]{\footnotesize{0.4}} (v3);
\draw (v1) to node [right, pos=0.15]{\footnotesize{0.4}} (v4);
\draw[dashed] (v3) to node [below]{\footnotesize{1}} (v4);
\draw (v4) to node [right]{\footnotesize{1}} (v7);
\draw (v7) to node [left]{\footnotesize{1}} (v3);
\draw (v7) edge[out=-120,in=-60,looseness=6] node[below]{\footnotesize{0.3}} (v7);
\draw (v2) to node [right, xshift=-1pt]{\footnotesize{0.4}} (v4);
\draw (v2) to node [right, pos=0.45, xshift=1pt]{\footnotesize{0.4}} (v3);
\end{tikzpicture}
}

\newcommand{\EtMLqO}{
\begin{tikzpicture}[scale=0.75, transform shape,->,>=stealth,auto,black]
\node (top) {};
\node[draw] (v0) [node distance=0.4cm, below of=top] {$E_{v_0}\!:\!p_{1.0}$};
\node (rv0) [node distance=1.0cm, right of=v0] {};
\node[draw] (v1) [node distance=2.0cm, right of=v0] {$E_{v_1}\!:\!p_{0.5}$};
\node (rv1) [node distance=1.0cm, right of=v1] {};
\node[draw] (v2) [node distance=2.0cm, right of=v1] {$E_{v_2}\!:\!p_{1.0}$};
\node[draw] (v3) [node distance=2.5cm, below of=rv0] {$E_{v_3}$};
\node[draw] (v4) [node distance=2.5cm, below of=rv1] {$E_{v_4}$};
\node (bv3) [node distance=2.5cm, below of=v3] {};
\node[draw] (v7) [node distance=1.0cm, right of=bv3] {$E_{v_7}$};
\draw (v0) to node [left, xshift=1pt]{\footnotesize{0.4}} (v3);
\draw (v0) to node [left, pos=0.45, xshift=-1pt]{\footnotesize{0.4}} (v4);
\draw (v1) to node [left, pos=0.15]{\footnotesize{0.4}} (v3);
\draw (v1) to node [left, pos=0.35, xshift=1pt]{\footnotesize{0.4}} (v4);
\draw[dashed] (v3) to node [below]{\footnotesize{1}} (v4);
\draw (v4) to node [right]{\footnotesize{1}} (v7);
\draw (v7) to node [left]{\footnotesize{1}} (v3);
\draw (v7) edge[out=-120,in=-60,looseness=6] node[below]{\footnotesize{0.3}} (v7);
\draw (v2) to node [right, xshift=-1pt]{\footnotesize{0.4}} (v4);
\draw (v2) to node [right, pos=0.95, xshift=1pt]{\footnotesize{0.4}} (v3);
\draw (v4) edge[bend right=20] node [right, pos=0.47, xshift=-2pt]{\footnotesize{0.4}} (v1);
\end{tikzpicture}
}

\begin{figure}
\begin{center}
\begin{tabular}{|c|c||c|c|}
\hline
\multicolumn{2}{|c||}{Under the product semantics} & \multicolumn{2}{c|}{Under the {\L}ukasiewicz semantics} 
\\ \hline
$C \notin \Phi$, $O \notin \Phi$ & $C \notin \Phi$, $O \in \Phi$ & $C \notin \Phi$, $O \notin \Phi$ & $C \notin \Phi$, $O \in \Phi$ 
\\ \hline
  \EtMPq
& \EtMPqO
& \EtMLq
& \EtMLqO
\\ \hline
\end{tabular}
\end{center}
\caption{The quotient model $\mME$ of the fuzzy Kripke model $\mM$ considered in Example~\ref{example: HGDHG} with respect to its greatest fuzzy $\Phi$-auto-bisimulation $E$, for different cases of~$\Phi$ under the product and {\L}ukasiewicz semantics.\label{fig: HGDHG2}}
\end{figure}

%----------------------------------------------------------------------

The quotient fuzzy Kripke model $\mME$ is illustrated in Figure~\ref{fig: HGDHG2} for some cases of~$\Phi$ under the product and {\L}ukasiewicz semantics.
$\mME$ in the case where $C \in \Phi$ and $O \notin \Phi$, under both the product and {\L}ukasiewicz semantics, is the same as $\mME$ in the case where $C \notin \Phi$ and $O \notin \Phi$ under the product semantics, which is illustrated in the leftmost subfigure in Figure~\ref{fig: HGDHG2}.
When $C \in \Phi$ and $O \in \Phi$, $\mME$ under the product semantics is isomorphic to $\mM$, whereas $\mME$ under the {\L}ukasiewicz semantics is an extension of $\mME$ under the product semantics with additional transitions of degree~$0.6$. 

%----------------------------------------------------------------------

\newcommand{\EtMPlC}{
\begin{tikzpicture}[scale=1.0, transform shape,->,>=stealth,auto,black]
\node (top) {};
\node[draw] (v0) [node distance=0.4cm, below of=top] {$v_0\!:\!p_{1.0}$};
\node[draw] (v1) [node distance=2.0cm, right of=v0] {$v_1\!:\!p_{0.5}$};
\node[draw] (v3) [node distance=2.5cm, below of=v0] {$v_3$};
\node[draw] (v4) [node distance=2.5cm, below of=v1] {$v_4$};
\node (bv3) [node distance=2.5cm, below of=v3] {};
\node[draw] (v7) [node distance=1.0cm, right of=bv3] {$v_7$};
\draw (v0) to node [left, xshift=1pt]{\footnotesize{0.4}} (v3);
\draw (v0) to node [left, pos=0.85]{\footnotesize{0.4}} (v4);
\draw (v1) to node [left, pos=0.15]{\footnotesize{0.4}} (v3);
\draw (v1) to node [right, xshift=-1pt]{\footnotesize{0.4}} (v4);
\draw[dashed] (v3) to node [below]{\footnotesize{1}} (v4);
\draw (v4) to node [right]{\footnotesize{1}} (v7);
\draw (v7) to node [left]{\footnotesize{1}} (v3);
\draw (v7) edge[out=-120,in=-60,looseness=6] node[below]{\footnotesize{0.3}} (v7);
\end{tikzpicture}
}

\newcommand{\EtMPl}{
\begin{tikzpicture}[scale=1.0, transform shape,->,>=stealth,auto,black]
\node (top) {};
\node[draw] (v0) [node distance=0.4cm, below of=top] {$v_0\!:\!p_{1.0}$};
\node (lv0) [node distance=1.0cm, left of=v0] {};
\node (rv0) [node distance=1.0cm, right of=v0] {};
\node[draw] (v3) [node distance=2.5cm, below of=lv0] {$v_3$};
\node[draw] (v4) [node distance=2.5cm, below of=rv0] {$v_4$};
\node[draw] (v7) [node distance=5.0cm, below of=v0] {$v_7$};
\draw (v0) to node [left]{\footnotesize{0.4}} (v3);
\draw (v0) to node [right]{\footnotesize{0.4}} (v4);
\draw[dashed] (v3) to node [below]{\footnotesize{1}} (v4);
\draw (v4) to node [right]{\footnotesize{1}} (v7);
\draw (v7) to node [left]{\footnotesize{1}} (v3);
\draw (v7) edge[out=-120,in=-60,looseness=6] node[below]{\footnotesize{0.3}} (v7);
\end{tikzpicture}
}

\begin{figure}
\centering
\begin{tabular}{|c|c|}
\hline
\ \ $C \notin \Phi$\ \ & 
$C \in \Phi$, $O \notin \Phi$ 
\\ \hline 
  \EtMPl
& \EtMPlC
\\ \hline 
\end{tabular}
\caption{The reduced fuzzy Kripke model obtained by applying Algorithm~\ref{alg1} to the fuzzy Kripke model $\mM$ considered in Example~\ref{example: HGDHG}, for some cases of~$\Phi$ under the product and {\L}ukasiewicz semantics.\label{fig: HGDHG3}}
\end{figure}

%----------------------------------------------------------------------

The reduced fuzzy Kripke model $\mMp$ obtained by applying Algorithm~\ref{alg1} to~$\mM$ is illustrated in Figure~\ref{fig: HGDHG3} for some cases of~$\Phi$ under the product and {\L}ukasiewicz semantics. The reduced model in each case is identical under the product and {\L}ukasiewicz semantics. The case where $C \notin \Phi$ is independent of whether $O \in \Phi$. Notice that, when $C \in \Phi$ and $O \notin \Phi$, $\mMp$ is isomorphic to $\mME$. The case where $\{C,O\} \subset \Phi$ is not covered in Figure~\ref{fig: HGDHG3}. In this case, $\mMp$ is identical to~$\mM$.

%----------------------------------------------------------------------

Consider the case where $C \in \Phi$, $O \notin \Phi$, and the underlying residuated lattice is the product or {\L}ukasiewicz structure. We examine the execution of Algorithm~\ref{alg1} for $\mM$:
\begin{itemize}
\item After executing the statements~\ref{step: alg1 1}--\ref{step: alg1 2}, we have $\DeltaMp = \{v_0\}$, $v_0^\mMp = v_0$, $p^\mMp(v_0) = 1$, and $\unprocessed = \{v_0\}$.

\item In the first iteration of the ``while'' loop, $v_0$ is extracted from $\unprocessed$ and processed. The states $v_3$ and $v_4$ are processed (as $y$) by the inner ``foreach'' loop~\ref{step: alg1 5} using $\alpha = \sigma$. As a result, they are both added to $\DeltaMp$ and $\unprocessed$ by the statement~\ref{step: alg1 10}, with $\sigma^\mMp(v_0, v_3)$ and $\sigma^\mMp(v_0, v_4)$ set to~$0.4$ by the statement~\ref{step: alg1 12}. 

\item Consider the second iteration of the ``while'' loop and assume that $v_3$ is extracted from $\unprocessed$ and processed. For $\alpha = \varrho$, only the state $v_4$ is processed by the inner ``foreach'' loop~\ref{step: alg1 5}. As a result, $\varrho^\mMp(v_3, v_4)$ is set to~$1$ by the statement~\ref{step: alg1 8}. For $\alpha = \sigma^-$, since $(\sigma^-)^\mM(v_3, v_7) > (\sigma^-)^\mM(v_3, v_1) = (\sigma^-)^\mM(v_3, v_0) > 0$, the states $v_7$, $v_1$, and $v_0$ are processed by the inner ``foreach'' loop~\ref{step: alg1 5}. As a result, $v_7$ and $v_1$ are added to $\DeltaMp$ and $\unprocessed$, with $\sigma^\mMp(v_7, v_3)$ set to~1, $p^\mMp(v_1)$ set to $0.5$, and $\sigma^\mMp(v_1, v_3)$ set to~$0.4$, by the statements~\ref{step: alg1 10}--\ref{step: alg1 12}. 

\item Consider the third iteration of the ``while'' loop and assume that $v_4$ is extracted from $\unprocessed$ and processed. 
    \begin{itemize}
    \item For $\alpha = \sigma$, the states $v_7$ and $v_8$ are processed by the inner ``foreach'' loop~\ref{step: alg1 5}. When $v_7$ is processed, $\sigma^\mMp(v_4, v_7)$ is set to~$1$ by the statement~\ref{step: alg1 8}. When $v_8$ is processed, it does not satisfy the condition of the ``if'' statement~\ref{step: alg1 6} due to $y' = v_7$. 
    \item For $\alpha = \sigma^-$, the states $v_0$ and $v_1$ are processed by the inner ``foreach'' loop~\ref{step: alg1 5}. As a result, $\sigma^\mMp(v_0, v_4)$ and $\sigma^\mMp(v_1, v_4)$ are set to~$0.4$ by the statement~\ref{step: alg1 8}. 
    \item For $\alpha = \varrho^-$, the states $v_3$ and $v_5$ are processed (as $y$) by the inner ``foreach'' loop~\ref{step: alg1 5}, but neither satisfies the condition of the ``if'' statement~\ref{step: alg1 6} due to $y' = v_3$. 
    \end{itemize}

\item Consider the fourth iteration of the ``while'' loop and assume that $v_7$ is extracted from $\unprocessed$ and processed. 
    \begin{itemize}
    \item For $\alpha = \sigma$, the states $v_3$ and $v_8$ are processed by the inner ``foreach'' loop~\ref{step: alg1 5}. When $v_3$ is processed, it does not satisfy the condition of the ``if'' statement~\ref{step: alg1 6} due to $y' = v_3$. When $v_8$ is processed, $\sigma^\mMp(v_7,v_7)$ is set to $0.3$ by the statement~\ref{step: alg1 8}. 
    \item For $\alpha = \sigma^-$, the states $v_4$ and $v_9$ are processed by the inner ``foreach'' loop~\ref{step: alg1 5}, but neither satisfies the condition of the ``if'' statement~\ref{step: alg1 6} (due to $y' = v_4$ and $y' = v_7$, respectively).
    \end{itemize}

\item In the fifth iteration of the ``while'' loop, $v_1$ is extracted from $\unprocessed$ and processed. The states $v_3$, $v_4$, $v_5$, and $v_6$ are processed (as $y$) by the inner ``foreach'' loop~\ref{step: alg1 5} using $\alpha = \sigma$, but none of them satisfies the condition of the ``if'' statement~\ref{step: alg1 6} (due to $y' = v_3$ or $y' = v_4$). 

\item After the fifth iteration, the ``while'' loop terminates, and the algorithm returns $\mMp$, which is illustrated in the right subfigure in Figure~\ref{fig: HGDHG3}. 
\myend
\end{itemize}
\end{example}

%===============================================================================

\end{document}